\newtheorem{Theorem}{Theorem}
\newtheorem{Definition}{Definition}
\begin{document}

\title{Threshold Saturation in Spatially Coupled Constraint Satisfaction Problems}
\author{S.Hamed Hassani, Nicolas Macris and Ruediger Urbanke
\\
\\
\small{Laboratory for Communication Theory}\\
\small{School of Computer and Communication Science}\\
\small{Ecole Polytechnique F\'ed\'erale de Lausanne}\\
\small{Station 14, EPFL, CH-1015 Lausanne, Switzerland}}

\maketitle

\begin{abstract} 
\noindent 
We consider chains of random constraint satisfaction models that are spatially coupled across a finite window
along the chain direction. We investigate their phase diagram at zero temperature using the survey propagation formalism and the interpolation method. 
We prove that the SAT-UNSAT phase transition threshold of an infinite chain is identical to the one of the individual standard model,
and is therefore not affected by spatial coupling.
We compute the survey propagation complexity using population dynamics as well as large degree approximations, and determine the survey propagation 
threshold. We find that a clustering phase survives coupling. However,
as one increases the range of the coupling window,
the survey propagation threshold increases and saturates towards the phase transition threshold. 
We also briefly discuss other aspects of the problem.
Namely, the condensation threshold is not affected 
by coupling, but the dynamic threshold displays saturation towards the condensation one. 
All these 
features may provide a new avenue for obtaining better provable algorithmic lower bounds on phase transition thresholds of the
individual standard model.

\end{abstract}

\section{Introduction}\label{section 1}

The field of modern error correcting codes used on noisy communication channels has witnessed an interesting recent development.
Spatially coupled Low-Density Parity-Check (LDPC) codes, initially introduced\footnote{In its original form the construction goes under 
the name Terminated Convolutional Low-Density Parity-Check codes.}
by Felstroem and Zigangirov \cite{Felstrom-Zigangirov}, 
have been recognized to have excellent performance 
due to the {\it threshold saturation phenomenon} \cite{Kudekar-Richardson-Urbanke-I}. We refer to \cite{Kudekar-Richardson-Urbanke-II} 
for the history and review of contributions in the field of communications, and a general analysis of the phenomenon.

Recently we introduced an elementary statistical mechanical model, namely a chain of coupled Curie-Weiss spin systems \cite{ITW}, \cite{Curie-Weiss-Chain}, 
that already captures all the main features of this phenomenon, and shows that it is related to basic concepts of statistical mechanics. 
In a nutshell, this model is a one-dimensional chain of complete graph Ising models
coupled in the longitudinal direction\footnote{Similar models have already been considered in other contexts
\cite{Bak}, \cite{Lebowitz}.} by a Kac-like potential. When the range of the Kac potential goes to infinity (and its intensity to zero) 
the spinodal curve is pushed towards the coexistence one. The stable phase undergoes nucleation and grows, starting from the ends of the chain, 
and consequently (in the Kac limit) the metastability domain disappears from the phase diagram. This
has important algorithmic consequences in the context of error correcting codes, for the recovery of an original message from the 
corrupted one.

We already argued in \cite{ITW}, \cite{Curie-Weiss-Chain} that threshold saturation occurs quite generally when mean field models are coupled together into
a one-dimensional chain and the longitudinal range of the coupling increases to infinity. The individual mean field model may be some sort of spin glass system 
on a sparse Erdoes-Renyi like random graph or on a complete graph (or on an hyper-graph). The sparse case is relevant to error correcting codes; 
see \cite{Kudekar-Richardson-Urbanke-II} and references therein.
The case of complete graphs is relevant to compressive sensing, another topic to which these ideas have been successfully applied 
\cite{shrini}, \cite{mezardetal}, \cite{montanarietal}. 

Other models, defined on sparse random graphs, that are of great
interest both in theoretical computer science and statistical mechanics, are random Constraint Satisfaction Problems (CSP). 
We refer the reader who is not familiar with such problems to the recent book \cite{Mezard-Montanari-book}.
In this paper we 
investigate {\it random spatially coupled-CSP}. We specifically concentrate on two main representatives: satisfiability (SAT) and coloring (COL). 
We also briefly discuss the XOR-SAT problem which is somewhat similar
to the LDPC codes on the binary erasure channel, but has an interest of its own.

Here we focus on the {\it zero-energy} states of CSP. There are two ways to formulate the problem. One can directly
minimize the Hamiltonian, or one can study the uniform measure over zero-energy states. 
We will focus essentially on the first aspect. The second one will only be briefly discussed. 
We say that a SAT (resp. UNSAT) phase corresponds to 
a vanishing (resp. finite) average ground state energy per variable, in the thermodynamic limit. Since the ground state energy per variable concentrates,
this means that in a SAT (resp. UNSAT) phase
there are, with high probability, at most a sub-linear (resp. at least a linear)
number of unsatisfied constraints. 
In the language of computer science the problems that we are investigating are randomized versions of MAX-SAT and MAX-COL.

Coupled-CSP are based on chains of $L$ {\it individual} random bipartite graphs with 
constant degree $K$ for constraint nodes, and Poisson degree with mean $\alpha K$ for variable nodes. Each individual bipartite  
graph is appropriately {\it coupled} to its neighbors 
across a window of size $w$. The precise construction is explained in detail in section \ref{generalsetting}.
The number $\alpha > 0$ is a measure of the constraint density and plays the role of a control parameter. 
In the $K$-SAT problem each constraint 
corresponds to satisfying a disjunction of $K$ literals. In $Q$-COL, we have $K=2$ and all variable nodes 
connected to the same constraint node must have different colors in a $Q$-ary alphabet. 
Despite 
the similarities in construction with the LDPC case,
in general, CSP (and coupled-CSP) are considerably more difficult to analyze. To study the ground state
problem we adopt the {\it Survey 
Propagation (SP) formalism}, which is derived from the zero-temperature (level-$1$) cavity method of spin-glass theory \cite{cavity-level-1}. 
We refer to \cite{Mezard-Montanari-book} for a recent pedagogical account, but
for the convenience of the reader we review and adapt the formalism to coupled CSP, in a streamlined form, in appendix \ref{B}. 

Let us pause and explain the predictions of the SP formalism for individual graph ensembles \cite{mzp}, \cite{mezard-zecchina}. SP 
is a sophisticated mean field theory based on a set of
fixed point equations.
They predict the existence of a SAT-UNSAT phase transition when $\alpha$ crosses a critical 
threshold $\alpha_s$.  At a lower value $\alpha_{\rm SP}$ one finds a bifurcation from trivial to non-trivial solution of the fixed point equations. In 
the interval $[\alpha_{\rm SP}, \alpha_s]$ the solution space is fragmented in an exponentially 
large (in system size) number of well separated clusters of SAT ground states in Hamming space
(binary or $Q$-ary). 
The rate of growth of the number of such clusters with system size, is called the {\it zero-energy complexity} and is positive in the interval $[\alpha_{\rm SP}, \alpha_s]$. The complexity becomes formally negative at $\alpha_s$.

We consider the SP equations for coupled $K$-SAT and $Q$-coloring models and solve them by the method of population dynamics (sections \ref{example-k-sat} and \ref{example-q-col}). 
We find a positive complexity in an interval $[\alpha_{{\rm SP}, L, w}, \alpha_{s, L, w}]$ which allows to determine the SAT-UNSAT phase transition point $\alpha_{s, L, w}$ (where the complexity becomes formally negative). We make the following observations for the interval where the complexity is positive. 
We have that $\alpha_{s,L, w} > \alpha_s$ and $\alpha_{s,L,w}\downarrow \alpha_s$ as $L$ increases (and $w$ fixed). Interestingly we find that {\it threshold saturation takes place}, 
namely $\alpha_{{\rm SP}, L,w} \to \alpha_s$ as $L$ and $w$ both increase such that $1<<w<<L$. 
These findings are supported by a large $K$ and $Q$ analysis of the SP fixed point equations of coupled CSP
(sections \ref{example-k-sat} and \ref{example-q-col}). In this limit 
the fixed point equations reduce to one-dimensional difference equations, 
analogous to the ones found for the Curie-Weiss chain or coupled LDPC codes on the binary erasure channel. 
This allows to study an "average total warning probability" that characterizes the phase of the system. 
This quantity is somewhat analogous to the average magnetization in the CW chain, or the average erasure probability for LDPC codes. 
A corresponding ``van der Waals curve'' displays an oscillating structure around a ``Maxwell plateau''. 
Each oscillation corresponds to a state of the system characterized by a kink profile 
for a "local warning density" and a "local complexity density" along the chain. 

The thermodynamic limits of the average ground state energies  per node, for the chain and the individual ensembles are 
proven to be equal (section \ref{proofs}).
The proof uses an interpolation method \cite{guerra-toninelli-I}, \cite{guerra-toninelli-II}, \cite{franz-leone} in a convenient combinatorial form similar to \cite{BGP09}. 
This result is of some importance because it establishes that non-analytic points in the average ground state energy per node of the chain and individual ensembles, occur at the same constraint density. In other words one must have 
$\lim_{L\to +\infty}\alpha_{s,L,w} = \alpha_s$. 

In sections \ref{dyncond} and \ref{last-section} we briefly discuss further important aspects that will be the object of more detailed future work.

The SP formalism says nothing about the relative sizes (internal entropy) of clusters of solutions and does not take into account 
which of them are "relevant" to the uniform measure over zero energy solutions. This issue is addressed by the 
entropic cavity method \cite{entropiccavity}, \cite{lenka}, \cite{elaborate}, 
which allows to compute the so-called dynamical and condensation thresholds $\alpha_d$ and $\alpha_c$. We have computed 
the dynamical $\alpha_{d,L,w}$ and condensation $\alpha_{c,L,w}$ thresholds of coupled CSP, and observe that as $L$ increases 
$\lim_{L\to+\infty}\alpha_{c,L,w}\to \alpha_c$ ($w$ fixed) while $\alpha_{d, L,w}\to \alpha_c$ when both $w$ and $L$ increase in the regime
$1<<w<<L$. The first observation is consistent with a rigorous result
proved in appendix \ref{A}:  the thermodynamic limit of the free energy (at finite temperature) of the chain is identical 
to that of the individual model. These issues are discussed in section \ref{dyncond}. 

This work may have interesting algorithmic consequences.
Any bound on $\alpha_{s,L,w}$ can be turned into a bound for 
$\alpha_s$ by taking $L\to +\infty$ (note this is also true for the condensation threshold). 
In particular, algorithmic lower bounds on $\alpha_{s,L,w}$ can be turned into lower bounds for $\alpha_s$.
Now, because of the saturation of the SP and dynamical thresholds of coupled CSP, the values of $\alpha$ for which 
the space of zero-energy solutions is fragmented into well separated clusters, are substantially larger, compared to values of individual ensembles. Therefore one may 
hope that a form of {\it algorithmic threshold saturation}, or at least {\it algorithmic threshold increase}, happens for some 
well chosen algorithms applied to coupled CSP. This may allow to {\it prove} better algorithmic
lower bounds on $\alpha_{s,L,w}$ and thus $\alpha_s$. The proposed methodology is briefly discussed and 
illustrated in section \ref{last-section} with simple peeling algorithms.

\section{General Setting}\label{generalsetting}

We define a general class of CSP that form the {\it individual ensemble}. Then we couple these, to form  
one-dimensional chains called {\it spatially coupled-CSP ensembles}.

\subsection{Individual CSP ensemble $[N, K, \alpha]$.}\label{ind}

First, we specify an ensemble $(N, K, \alpha)$ of random bipartite graphs. 
Let $G=(V\cup C, E)$ with {\it variable} nodes $i\in V$, 
{\it constraint} nodes $c\in C$ and edges $\langle c, i\rangle$ connecting sets $C$ and $V$.
We have $\vert V\vert = N$, $\vert C\vert= M$, where $M= \lfloor \alpha N \rfloor$ (the integer part of $\alpha M$) and  $\alpha$ is a fixed number called the
constraint density. We call $N$ the size of the graph which is to be thought as large, $N\to +\infty$. 
All constraints $c$ have degree $K$, and each edge $\langle c, i\rangle$ emanating from $c$  is independently connected uniformly at random (u.a.r.)
to a node in $i\in V$. As $N\to +\infty$, the degrees of the variable nodes tend to independent identically distributed (i.i.d.) 
with distribution $\mathrm{Poisson}(\alpha K)$.

We denote by $\partial i$ the set of constraints connected to variable node $i$ and by $\partial c$ 
the set of variable nodes connected to a constraint $c$. 

For each graph $G$ of the ensemble $[N, K, \alpha]$ we define a Hamiltonian (or cost function). 
To the variable nodes $i\in V$ we attach variables $x_i\in \mathcal{X}$ taking values in a discrete 
alphabet $\mathcal{X}$. To each constraint $c\in C$ we associate a function $\psi_c(x_{\partial c})$ which depends only on the variables $x_{\partial c}=(x_i)_{i\in \partial c}$
connected to $c$. 
For constraint satisfaction problems  $\psi_c(x_{\partial c})\in\{0,1\}$; we say that the constraint is {\it satisfied} if $\psi_c(x_{\partial c})=1$
and {\it not satisfied} if $\psi_c(x_{\partial c})=0$. 
The total Hamiltonian is
\begin{equation}\label{hamiltonian-uncoupled}
 \mathcal{H}(\underline x) = \sum_{c\in C} (1-\psi_c(x_{\partial c})).
\end{equation}
For many problems of interest the functions $\psi_c$ are themselves random. This will be made precise in each specific example; the only important
condition is that the functions $\psi_c$ are i.i.d. for all $c\in C$.
The ground state energy is $\min_{\underline x}\mathcal{H}(\underline x)$, the minimum possible number of unsatisfiable constraints. Our 
main interest is in the average ground state energy per node
\begin{equation} \label{e_N}
 e_N(\alpha) = \frac{1}{N}\mathbb{E}[\min_{\underline x}\mathcal{H}(\underline x)]
\end{equation}
where the expectation is taken over the $[N, K, \alpha]$ ensemble and possibly over the randomness of $\psi_c$.

\subsection{Coupled-CSP ensemble $[N, K, \alpha, w, L]$.}\label{cou}
This ensemble represents a chain of coupled underlying ensembles. Figure \ref{draw} is a visual aid but gives only a partial view.
We align positions 
$z\in \mathbb{Z}$. On each position $z \in \mathbb{Z}$, we lay down $N$ variable nodes labeled $(i,z)\in V_z$, $i=1,\cdots, N$. We also lay down 
 $M= \lfloor \alpha N \rfloor$
check nodes labeled $(c,z)\in C_z$, $c=1,\cdots, M$. When the node labels are used as subscripts, say as in $a_{(i,z)}$ or $a_{(c,z)}$, we will simplify the 
notation to $a_{iz}$ or $a_{cz}$. 
Let us now specify how the set of edges, $E$, is chosen. Each constraint 
$(c,z)$ has degree $K$, in other words $K$ edges emanate from it. Each of these $K$ edges is connected  to variable nodes as 
follows: we first pick a position $z+k$ with $k$ uniformly random in 
the window $\{0,\cdots, w-1\}$, then we pick a node $(i,z+k)$ u.a.r. in $V_{z+k}$, and finally we connect $(c,z)$ to $(i,z+k)$. 
The set of edges emanating from $(i,z)$ can be decomposed as a union  $\cup_{k=0}^{w-1}\{\langle (c,z-k),  (i,z)\rangle \mid c\in C_z\}$.
Asymptotically as $N\to +\infty$, its cardinality is $\mathrm{Poisson}(\alpha K)$; and the cardinalities of each set in the union are
i.i.d. $\mathrm{Poisson}(\frac{\alpha K}{w})$.

Finally, take $L$ an even integer. Restrict the set of constraint nodes to $\cup_{z=-\frac{L}{2}+1,\cdots, +\frac{L}{2}} C_z$ and delete edges 
emanating from constraints
that do not belong to this set. Restrict the set of variable nodes 
to $\cup_{z=-\frac{L}{2}+1,\cdots,\frac{L}{2}+w-1} V_z$. 

\begin{figure}[htp]
\begin{centering}
\input{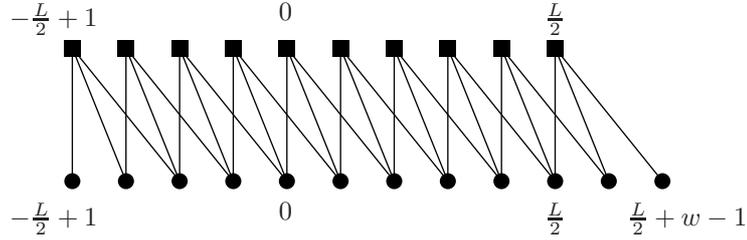}
\caption{{\small A representation of the geometry of the graphs with window size $w=3$ along
the ``longitudinal chain direction'' $z$. The ``transverse direction'' is viewed from the top. At each position there 
is a stack of $N$ variable nodes (circles) and a stack $M$ constraint nodes (squares). 
The depicted links between constraint and variable nodes represent stacks of edges.}}
\label{draw}
\end{centering}
\end{figure}

As in subsection \ref{ind}, we have a set of variables $x_{iz}\in \mathcal{X}$ and constraint functions $\psi_{cz}(x_{\partial cz})$ taking values
in $\{0,1\}$.
To each coupled graph in the ensemble we associate the Hamiltonian 
depending on $\underline x = (x_{iz})$, $(i,z)\in \cup_{z=-\frac{L}{2}+1,\cdots,\frac{L}{2}+w-1} V_z$,
\begin{equation}\label{hamiltonian}
 \mathcal{H}_{\rm cou}(\underline x) = \sum_{z=-\frac{L}{2}+1}^{\frac{L}{2}}\sum_{c\in C_z} (1-\psi_{cz}(x_{\partial(cz)})).
\end{equation}
The minimum over $\underline x$ is the ground state energy and its ensemble average per node is
\begin{equation}
 e_{N,L,w}(\alpha) = \frac{1}{NL}\mathbb{E}[\min_{\underline x}\mathcal{H}_{\rm cou}(\underline x)],
\end{equation}
where $\mathbb{E}$ is over the $[N, K, \alpha, w, L]$ graph ensemble and on the randomness in $\psi_{cz}$.

\vskip 0.25cm

\noindent{\bf Remark about the boundary conditions.} In the formulation above we have free boundary conditions. However, the average degree of 
the variable nodes close to the boundaries is reduced so that the CSP is easier close to the boundaries. Variable nodes close to the right boundary 
$z= \frac{L}{2}+1,...,\frac{L}{2}+w-1$ have degrees $\mathrm{Poisson}(\frac{\alpha K}{w}(\frac{L}{2} + w - z))$, and those close to the left boundary
$z= -\frac{L}{2}+1,...,-\frac{L}{2}+w-1$  have degrees $\mathrm{Poisson}(\frac{\alpha K}{w}(z+ \frac{L}{2}))$. It is 
sometimes convenient to imagine that the boundary nodes 
are connected to ``satisfied extra constraint nodes'', and all have $\mathrm{Poisson}(\alpha K)$ degree.

\subsection{K-Satisfiability and Q-Coloring.}\label{ksatqcol}
We define the main examples of constraint satisfaction problems that we analyze in this paper. 

{\it The $\text{K-SAT}$ problem.} The individual system is defined as follows. We take $x_i\in \{0,1\}$ the Boolean alphabet. 
Set $n(x_i)\equiv\bar x_i$ 
for the negation operation, and define $n^d(x_i) \equiv x_i$ when $d=0$ and $n^d(x_i)\equiv n(x_i)=\bar x_i$ when $d=1$. Pick 
$\text{Bernoulli}(\frac{1}{2})$ i.i.d. numbers $d_{\langle c, i\rangle}$ for each edge $\langle c, i\rangle\in E$. We say that an edge is {\it dashed} when 
$d_{\langle c, i\rangle} =1$ and {\it full} when $d_{\langle c, i\rangle} =0$. 
With this convention, a variable in a constraint is negated when it is connected to a dashed edge, and is not negated when it is connected to a full edge.
We set
\begin{equation}
 \psi_c(x_{\partial c}) = \mathbbm{1}(\vee_{i\in \partial c}( n^{d_{\langle c, i\rangle}}(x_i)) = 1).
\end{equation} 
These definitions are extended to the coupled system in an obvious way
\begin{equation}
 \psi_{cz}(x_{\partial (cz)}) = \mathbbm{1}(\vee_{iu\in \partial (cz)}( n^{d_{\langle cz, iu\rangle}}(x_{iu})) = 1),
\end{equation}
where the important point is that $d_{\langle cz, iu\rangle}$ are i.i.d. $\text{Bernoulli}(\frac{1}{2})$ for all edges.
The ground state energy counts the minimum possible number of unsatisfiable constraints. The instance is satisfiable iff the ground state energy is equal to zero.

\vskip 0.25cm
{\it The $\text{Q-COL}$ problem.} For the individual ensemble, we take $x_i\in \mathcal{X}=\{0,\cdots,Q-1\}$ the $Q$-ary color 
alphabet, $K=2$ for the constraint node degrees, and 
\begin{equation}\label{col}
 \psi_c(x_{\partial c}) = \mathbbm{1}(x_i\ne x_j \text{~for~} \{i,j\} = \partial c).
\end{equation}
Since the constraints have degree $2$ one can replace them by edges connecting directly $i$ and $j$ for $i,j\in \partial c$. The induced graph is, in the large size limit,
equivalent to the Erdoes-R\'enyi 
random graph $G(N, \frac{2\alpha}{N})$ ). The constraint
\eqref{col} forbids two neighboring nodes to have the same color. 

These definitions are easily extended to the 
coupled system. The induced graph (obtained by replacing constraints by edges) is now a coupled chain of Erdoes-R\'enyi graphs. In 
place of \eqref{col} we take $x_{iz}\in \mathcal{X}=\{0,\cdots,Q-1\}$ and 
\begin{equation}\label{colcoup}
 \psi_{cz}(x_{\partial (cz)}) = \mathbbm{1}(x_{iu}\ne x_{jv} \text{~for~} \{(i,u),(j,v)\} = \partial (c,z)).
\end{equation}
Given an instance of the induced graph, the ground state energy counts the minimum possible number of edges with vertices of the same color. 
The graph is colorable 
iff this number is zero.
\vskip 0.25cm

{\it The $\text{K-XORSAT}$ problem.} We briefly give relevant definitions that will be used in section \ref{last-section} and appendix \ref{A}. For the individual system 
$x_i\in \{0,1\}$ and $\psi_c(x_{\partial c}) = \mathbbm{1}(\oplus_{i\in\partial c}x_i=b_c)$ with $b_c$ being i.i.d. Bernoulli$(\frac{1}{2})$. Similarly for the coupled
system $\psi_{cz}(x_{\partial (cz)}) = \mathbbm{1}(\oplus_{iu\in\partial (cz)}x_{iu}=b_{cz})$ with $b_{cz}$ being i.i.d. Bernoulli$(\frac{1}{2})$.

\subsection{Static phase transition threshold}\label{static-phase}

For the purpose of analysis, it is 
useful to also consider an ensemble of coupled graphs with {\it periodic} boundary conditions. This ensemble is simply obtained from the $[N, K, \alpha, w, L]$ ensemble
by identifying the variable nodes $(i,z)$ at positions $z=\frac{L}{2} +k$ with nodes $(i,z)$ at positions 
$z=-\frac{L}{2}+k$ for each $k=1,\cdots, w-1$. The formal expression of the Hamiltonian 
$\mathcal{H}_{\rm cou}^{\rm per}(\underline x)$ is the same as in \eqref{hamiltonian} except that now 
$\underline x=(x_{iz})$ with $\cup_{z=-\frac{L}{2}+1,\cdots,\frac{L}{2}} V_z$. Quantities pertaining to this ensemble 
will be denoted by a superscript ''per``.


\begin{Theorem}[Comparison of open and periodic chains]\label{therm-limit-1}
For the general coupled-CSP $[N,K,\alpha, w,L]$ ensembles we have
\begin{equation}
 \vert e_{N, L,w}(\alpha) - e_{N,L,w}^{\rm per}(\alpha)\vert \leq \frac{\alpha w}{L}.
\end{equation}
\end{Theorem}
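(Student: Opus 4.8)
\emph{Proof proposal.} The plan is to realize both ensembles on a single probability space through a natural coupling and then to compare ground-state energies instance by instance. Concretely, I would generate the open $[N,K,\alpha,w,L]$ graph and its periodic counterpart from the \emph{same} randomness: the same constraint nodes $(c,z)$ with $z\in\{-\frac{L}{2}+1,\dots,\frac{L}{2}\}$; for each of the $K$ edges of $(c,z)$ the same pair $(k,i)$, with $k$ uniform in $\{0,\dots,w-1\}$ and $i$ uniform in $\{1,\dots,N\}$; and the same realization of the functions $\psi_{cz}$. The only difference is that in the open graph this edge lands on $(i,z+k)$, whereas in the periodic graph it lands on $(i,\pi(z+k))$, where $\pi(z')=z'$ if $z'\le\frac{L}{2}$ and $\pi(z')=z'-L$ otherwise; note that $\pi$ realizes exactly the identification of Section~\ref{static-phase}. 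By construction the two marginals are precisely the open and periodic ensembles, so it suffices to bound $\mathbb{E}\,\vert\min_{\underline x}\mathcal H_{\rm cou}-\min_{\underline x}\mathcal H^{\rm per}_{\rm cou}\vert$ under this coupling.

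The key structural observation is that, under the coupling, the two graphs are \emph{identical} except possibly on the edges incident to constraints lying in the layers $z\in\{\frac{L}{2}-w+2,\dots,\frac{L}{2}\}$ that happen to pick an offset $k$ with $z+k>\frac{L}{2}$. Let $B$ denote the set of such constraints; since at most $w-1$ layers are involved and each layer contains $M=\lfloor\alpha N\rfloor$ constraints, $\vert B\vert\le(w-1)M\le\alpha w N$.

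For the inequality $e_{N,L,w}\le e^{\rm per}_{N,L,w}$, I would take an optimal assignment $\underline y^\star$ for the periodic instance and lift it to the open instance by copying the wrapped values, $x_{i,\frac{L}{2}+k}:=y^\star_{i,-\frac{L}{2}+k}$ for $k=1,\dots,w-1$ and $x_{iz}:=y^\star_{iz}$ otherwise. Then $x_{i,z+k}=y^\star_{i,\pi(z+k)}$ in all cases, so every constraint reads in the open instance exactly the literals it reads in the periodic instance; hence $\mathcal H_{\rm cou}(\underline x)=\mathcal H^{\rm per}_{\rm cou}(\underline y^\star)$ and $\min_{\underline x}\mathcal H_{\rm cou}\le\min_{\underline x}\mathcal H^{\rm per}_{\rm cou}$. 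For the reverse direction, I would take an optimal $\underline x^\star$ for the open instance and restrict it to $z\in\{-\frac{L}{2}+1,\dots,\frac{L}{2}\}$. Every constraint outside $B$ reads the same literals and contributes the same term, while each constraint in $B$ can raise the cost by at most $1$; thus $\min_{\underline x}\mathcal H^{\rm per}_{\rm cou}\le\min_{\underline x}\mathcal H_{\rm cou}+\vert B\vert\le\min_{\underline x}\mathcal H_{\rm cou}+\alpha w N$. Taking expectations and dividing by $NL$ yields $0\le e^{\rm per}_{N,L,w}-e_{N,L,w}\le\alpha w/L$, which is the claimed bound (in fact with $w-1$ in place of $w$).

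There is no genuine analytic difficulty: the argument is deterministic once the coupling is fixed, and the randomness enters only through the final expectation. The point requiring care is the bookkeeping — verifying that the coupled marginals are the correct ensembles, that $\pi$ is precisely the periodic identification, and that the affected set $B$ is confined to $w-1$ constraint layers. It is worth noting that the reduced variable-node degrees near the open boundary play no role: in both the lifting and the restriction step every constraint is evaluated on a full set of literals, so the comparison is unaffected by the boundary.
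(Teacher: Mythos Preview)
Your argument is correct and follows essentially the same route as the paper's own proof: both (implicitly or explicitly) couple the open and periodic instances via the same edge randomness, observe that the two Hamiltonians differ only on the constraints in the last $w-1$ layers, and bound by the count of those constraints. The paper is terser---it simply uses the pointwise bound $\vert\mathcal{H}_{\rm cou}(\underline x',\underline x'')-\mathcal{H}^{\rm per}_{\rm cou}(\underline x')\vert\le Mw$ for every assignment and then takes minima---whereas your lift/restrict construction is more explicit and, as you note, additionally yields the one-sided inequality $e_{N,L,w}\le e^{\rm per}_{N,L,w}$ deterministically under the coupling; but this is a refinement of the same idea rather than a different approach.
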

This theorem has an easy proof given in section \ref{proofs}.  

The next theorem does not have a trivial proof and is stated here for the special cases 
of $K$-SAT and $Q$-COL. While it is presumably valid for many other CSP's, we do not expect that it should hold in
complete generality. For example it is well known from models in other areas of statistical and condensed matter physics that the ground states of 
a periodic chain can break translation invariance (e.g crystals develop non-trivial periodic patterns) and then have lower energy than 
the homogeneous ground state. If this happens the statement of the theorem cannot possibly hold.
\begin{Theorem}[Thermodynamic limit]\label{therm-limit-2}
For the $K$-SAT and $Q$-COL models the two limits $\lim_{N\to+\infty} e_N(\alpha)$ and $\lim_{N\to +\infty}e_{N,L,w}^{\rm per}(\alpha)$ 
exist, are continuous, and non-decreasing in $\alpha$. Moreover they are equal,
\begin{equation}\label{equality-two-limits}
\lim_{L\to +\infty}\lim_{N\to +\infty}e_{N,L,w}^{\rm per}(\alpha) = \lim_{N\to+\infty} e_N(\alpha).
\end{equation}
\end{Theorem}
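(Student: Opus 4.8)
Here is how I would approach the theorem.

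The plan is to establish, in order: (a) existence of the two $N\to\infty$ limits, (b) their monotonicity and Lipschitz continuity in $\alpha$, (c) existence of the subsequent $L\to\infty$ limit of $e_{L,w}(\alpha):=\lim_{N}e^{\rm per}_{N,L,w}(\alpha)$, and (d) the identity \eqref{equality-two-limits}. Steps (a)--(c) are soft; step (d) is the real content. For (a) I would run the combinatorial interpolation of \cite{BGP09}, a zero-temperature variant of \cite{guerra-toninelli-I},\cite{franz-leone}: fixing $\alpha$, split the $N$ variable nodes into blocks of sizes $N_1+N_2=N$, Poissonize the constraint count, and interpolate between the $[N,K,\alpha]$ ensemble and the disjoint union $[N_1,K,\alpha]\sqcup[N_2,K,\alpha]$ by letting each constraint attach its $K$ edges either to all $N$ nodes (weight $t$) or to the block it has been assigned to (weight $1-t$). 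Differentiating $\mathbb E[\min\mathcal H_t]$ in $t$ and checking the sign of the derivative shows $\mathbb E[\min\mathcal H_N]$ is subadditive up to an $o(N)$ error (coming from de-Poissonization and the floor function in $M=\lfloor\alpha N\rfloor$), hence $\lim_{N}e_N(\alpha)=:e(\alpha)$ exists by Fekete. Performing the same split simultaneously at every position $z$, which respects the window structure of \eqref{hamiltonian}, gives $\lim_N e^{\rm per}_{N,L,w}(\alpha)=:e_{L,w}(\alpha)$ for each fixed $L,w$.

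For (b): given $\alpha\le\alpha'$, couple the ensembles so that the $\lfloor\alpha N\rfloor$ constraints of the sparser instance form a sub-collection of the $\lfloor\alpha' N\rfloor$ constraints of the denser one; then $\min\mathcal H\le\min\mathcal H'\le \min\mathcal H+(\lfloor\alpha' N\rfloor-\lfloor\alpha N\rfloor)$ pointwise, the last inequality because one added constraint raises the minimum by at most one. Thus $e_N(\cdot)$ is non-decreasing and asymptotically $1$-Lipschitz; both properties pass to $e(\alpha)$, and likewise for $e^{\rm per}_{N,L,w}$ and $e_{L,w}$, so continuity of the limits is automatic. For (c) I would reuse the mechanism behind Theorem \ref{therm-limit-1}: in an open chain of length $L_1+L_2$, at most $(w-1)M$ constraints have a window straddling the junction between the two halves, and deleting them decouples the Hamiltonian into two independent open-chain Hamiltonians of lengths $\geq L_1-w$ and $L_2$; since deleting $r$ constraints moves $\min\mathcal H$ by at most $r$, and since $L\mapsto\mathbb E[\min\mathcal H^{\rm open}_L]$ is non-decreasing (extending the chain only adds constraints), one gets that $L\mapsto L\,e^{\rm open}_{L,w}(\alpha)$ (taken in the $N\to\infty$ limit, which exists by the same interpolation as in (a)) is superadditive up to a fixed defect $O(\alpha w)$. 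A Fekete lemma with bounded defect then yields $\lim_L e^{\rm open}_{L,w}(\alpha)$, and by Theorem \ref{therm-limit-1} this equals $\lim_L e_{L,w}(\alpha)=:\bar e(\alpha)$.

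Step (d), $\bar e(\alpha)=e(\alpha)$, is the crux, and I would again use interpolation, now between the coupled chain of length $L$ (which carries $\sim NL$ variables at constraint density $\alpha$, up to an $O(w/L)$ correction from the boundary deficit) and the individual ensemble $[NL,K,\alpha]$, interpolating the location law of each constraint's $K$ edges between ``uniform over its window'' and ``uniform over all $NL$ variables''. The key fact is that in the bulk of the chain a window draw is, in the $N\to\infty$ limit, statistically indistinguishable from a uniform draw as far as the cost of adding one constraint is concerned, so the interpolation derivative splits into an interface contribution of order $w/L$ (from the $O(w)$ boundary positions where the window structure genuinely matters) and a bulk contribution that must be shown not to survive the limit. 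The hard part will be controlling the sign, respectively the vanishing, of this bulk term: it is cleanest to run the interpolation at a positive temperature $\beta$, where the derivative becomes an average of an explicit ``constraint-cost'' functional over the cavity marginals of the $K$ touched variables and the issue becomes a convexity/Griffiths-type property of that functional, and then to send $\beta\to\infty$. This is precisely where general CSP can fail — a periodic chain may lower its energy via a translation-symmetry-breaking (``crystalline'') ground state, forcing $\bar e(\alpha)<e(\alpha)$ — so ruling this out is where the specific combinatorial structure of $K$-SAT and $Q$-COL is used. Granting it, the interpolation gives $\lvert e^{\rm open}_{N,L,w}(\alpha)-e_{NL}(\alpha)\rvert\le C\alpha w/L+o_N(1)$; letting $N\to\infty$ and then $L\to\infty$, together with the continuity from (b), yields \eqref{equality-two-limits}. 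The remaining bookkeeping — de-Poissonization, floor functions, the $O(w)$ boundary positions in the cut argument, and the order of the limits $N\to\infty$, $\beta\to\infty$, $L\to\infty$ — is routine.
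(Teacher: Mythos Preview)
Your steps (a)--(c) are fine and essentially match the paper (which cites \cite{BGP09} for existence/continuity and does not need a separate argument for the $L\to\infty$ limit because the sandwich below already pins it down). The gap is in step (d).

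The single interpolation you describe --- between the periodic/ring chain and the fully connected ensemble $[NL,K,\alpha]$ --- does \emph{not} yield the two-sided bound $\lvert e_{N,L,w}-e_{NL}\rvert\le C\alpha w/L$. The ``bulk term'' in the interpolation derivative does not vanish: at step $r$, the expected energy increment when you add one constraint with a window draw is, for $K$-SAT, $\tfrac{1}{2^K L}\sum_z\bigl(\tfrac{1}{wN}\sum_{k}\lvert\mathcal F_{z+k}\rvert\bigr)^K$ where $\mathcal F_z$ is the set of frozen variables at position $z$ for the intermediate graph $\tilde G$; with a uniform draw it is $\tfrac{1}{2^K}\bigl(\tfrac{\lvert\mathcal F\rvert}{LN}\bigr)^K$. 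By Jensen the second is $\le$ the first, giving only $e(\alpha)\le e_{L,w}(\alpha)$. The difference is a genuine bulk quantity measuring the spatial inhomogeneity of the frozen-variable density along the interpolated chain; claiming it is $O(w/L)$ is tantamount to assuming there is no translation-symmetry breaking, which is exactly what has to be proved. Your own parenthetical about ``crystalline'' ground states flags this, but you then write ``granting it'' without supplying the mechanism.

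What is missing is a \emph{second} interpolation that gives the reverse inequality. The paper sandwiches the ring ensemble between the connected ensemble and a \emph{disconnected} ensemble ($L$ independent copies of $[N,K,\alpha]$, constraints assigned u.a.r.\ to a position and then wired within that single position). The same frozen-variable computation gives, for the ring-vs-disconnected comparison, $\tfrac{1}{2^K L}\sum_z\bigl(\tfrac{1}{wN}\sum_k\lvert\mathcal F_{z+k}\rvert\bigr)^K\le\tfrac{1}{2^K L}\sum_z\bigl(\tfrac{\lvert\mathcal F_z\rvert}{N}\bigr)^K$, again by convexity of $x\mapsto x^K$, hence $e_{L,w}(\alpha)\le e(\alpha)$. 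For $Q$-COL the role of $\mathcal F_z$ is played by the equivalence classes $\mathcal F^j_z$ of nodes forced to share a color in every optimal coloring, and the convex function is $x\mapsto x^2$. So the model-specific input you were looking for is precisely this closed-form, convex expression for the one-constraint energy increment; once you have it, both interpolations run directly at $\beta=\infty$ and the finite-temperature detour is unnecessary.
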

\vskip 0.25cm 
\noindent{\bf Remark about XORSAT.} We prove such a theorem for $K$-XORSAT {\it with $K$ even} in appendix \ref{A}. The proof breaks down for $K$ odd,
although the result is presumably true in that case also.
\vskip 0.25cm
Standard methods of statistical mechanics \cite{Ruelle} do not allow to prove the existence of the limits because the underlying graphs have expansion properties.
When the system is
cut in two parts the number of edges in the cut is of the same order as the size of the two parts and is not 
just a "surface" term. Therefore sub-additivity of the free and ground state energies become non-trivial.
However, interpolation methods allow to deal with this issue. The existence of the limit
for $\lim_{N\to+\infty} e_N(\alpha)$, as well as the fact that the function is continuous and non-decreasing, is proved for a range of models including the present ones 
in \cite{BGP09}, and it is easy to see that the same sort of proof works for the periodic chain. This proof will not be repeated. 
In section \ref{proofs} we provide the proof for the {\it equality} of the two limits. This is again based on two interpolations which provide upper and lower bounds.
Note that concentration of the ground state and free energies is also implied by standard arguments not 
discussed here\footnote{However concentration of the number of solutions in the SAT phase is more subtle
see \cite{abbe-montanari}.}.

We are interested in the thermodynamic limit 
$$
\lim_{\rm therm} \equiv \lim_{L\to +\infty}\lim_{N\to +\infty}
$$ 
for the open chain, which captures the regime of a long one-dimensional coupled-CSP. From theorems \ref{therm-limit-1} and \ref{therm-limit-2} we deduce that
\begin{equation}
\lim_{\rm therm}e_{N,L,w}(\alpha) = \lim_{\rm therm}e_{N,L,w}^{\rm per}(\alpha) = \lim_{N\to+\infty} e_N(\alpha).
\label{equal-energies}
\end{equation}
Since the energy functions are non-decreasing we can define a natural ``static phase transition'' threshold as follows.
\begin{Definition}[Static phase transition threshold]
 We define 
 \begin{equation}\label{prestatic}
 \alpha_{s,L, w} = \sup\{\alpha\vert \lim_{ N\to +\infty}e_{N,L,w}(\alpha) = 0\}
          =  \sup\{\alpha\vert \lim_{ N\to+\infty}e_{N,L,w}^{\rm per}(\alpha) = 0\},  
 \end{equation}
 and 
\begin{align}
 \alpha_s &= \sup\{\alpha\vert \lim_{\rm therm}e_{N,L,w}(\alpha) = 0\}
          =  \sup\{\alpha\vert \lim_{\rm therm}e_{N,L,w}^{\rm per}(\alpha) = 0\}  \nonumber \\ &
          =  \sup\{\alpha\vert \lim_{ N\to+\infty}e_{N}(\alpha) = 0\}.
\label{static}
\end{align}
\end{Definition}
The supremums in the first definition are equal because of theorem \ref{therm-limit-2} and those in the second definition 
are equal because of \eqref{equal-energies}. 
Note also that $\lim_{L\to +\infty}\alpha_{s,L,w} =\alpha_s$. The definition of $\alpha_s$ implies that,
for a given instance, when $\alpha<\alpha_s$ (resp. $\alpha>\alpha_s$) the number of unsatisfied constraints is $o(N)$ (resp. $O(N)$) with high probability. However 
it is not known how to automatically conclude that
a fixed instance is SAT (resp. UNSAT) with high probability when $\alpha<\alpha_s$ (resp. $\alpha>\alpha_s$). 

\vskip 0.25cm
\noindent{\bf Remark about finite temperatures.}
The theorems of this subsection have finite temperature analogs presented in appendix \ref{A}. As explained in section \ref{dyncond} these suggest that 
the condensation threshold obeys $\lim_{L\to +\infty}\alpha_{c,L,w} =\alpha_c$.
 
\subsection{Zero temperature cavity method and survey propagation formalism}\label{energetic}

We briefly summarize the simplest form of the cavity method and survey propagation equations for the coupled-CSP. More details on the formalism are presented in appendix \ref{B}.
When the graph instance is a {\it tree}, the minimization of \eqref{hamiltonian} can be carried out exactly. This leads to an expression 
for $\min_{\underline x} \mathcal{H}_{\rm cou}(\underline x)$ in terms of 
energy-cost messages $E_{iu\to cz}(x_{iu})$ and $\hat E_{cz\to iu}(x_{iu})$
that satisfy the standard min-sum equations (see equ. \eqref{warn1} and \eqref{warn2}). These messages are normalized so that 
$\min_{x_{iu}} E_{iu\to cz}(x_{iu}) = \min_{x_{iu}} \hat E_{cz\to iu}(x_{iu}) =0$ and they take values in $\{0, 1\}$. They may be interpreted as warning
messages. Roughly speaking, nodes inform each other on the most favorable values 
that the variable $x_{iu}$ should take in order to avoid energy costs. 
The ground state energy (on the tree) is given by the Bethe energy functional 
$\mathcal{E}[\{E_{iu\to cz}(.), \hat E_{cz\to iu}(.)\}]$ (see equ. \eqref{bethe}). 
For a {\it general graph instance} one considers the Bethe energy functional \eqref{bethe} as an ``effective Hamiltonian'' and studies the statistical mechanics
of this effective system. 
The min-sum equations are the stationary point equations of this functional
and the set of solutions  
$\{E_{iu\to cz}(.), \hat E_{cz\to iu}(.)\}$ characterize the  {\it state} of the system. 

It turns out  that the min-sum equations may have exponentially many (in system size) solutions 
with infinitesimal Bethe energy per node as $N\to +\infty$. A solution 
$\{E_{iu\to cz}^{(p)}(.), E_{cz\to iu}^{(p)}(.)\}$ with infinitesimal Bethe energy defines 
a {\it pure Bethe state}\footnote{We adopt this terminology to make a distinction with 
the mathematically precise notion of {\it pure state} for usual Ising models \cite{Ruelle}.} denoted by the superscript $(p)$. 
We define the average {\it zero-energy complexity} as
\begin{equation}
\Sigma_{L, w}(\alpha) = \lim_{\epsilon \to 0}\lim_{N\to +\infty}\frac{1}{NL}
\mathbb{E}[\ln ({\rm number~of~states~} p {\rm ~with~}\frac{\mathcal{E}^{(p)}}{N}=\epsilon)].
\label{complexity-definition}
\end{equation}
This quantity counts the number of pure Bethe states.
The typical behavior of the complexity as a function of $\alpha$ is as follows. 
Below an {\it SP threshold} it vanishes, then jumps to a positive value and decreases until it becomes negative at the static phase transition threshold. 
It therefore allows to compute
\begin{equation}\label{alphasp}
\alpha_{{\rm SP},L,w} = \inf\{\alpha\vert \Sigma_{L,w}(\alpha) > 0\},
\end{equation}
\begin{equation}\label{alphas}
\alpha_{s, L, w} = \sup\{\alpha\vert \Sigma_{L,w}(\alpha) > 0\}.
\end{equation}
It is a feature of the cavity theory that the static phase transition thresholds defined according to the energy \eqref{prestatic} and complexity
\eqref{alphas} coincide.  

The complexity is the Boltzmann entropy (on the zero energy shell) of the effective 
statistical mechanical problem with Hamiltonian \linebreak $\mathcal{E}[\{E_{iu\to cz}(.), E_{cz\to iu}(.)\}]$. 
It turns out that this can be computed, thanks to an effective partition function on the same sparse graph instance, again
within a message passing formalism. In this context messages are called {\it surveys}.
They count the fraction of pure Bethe states with given warning messages. Surveys $Q_{iu\to cz}(E_{iu\to cz}(.))$ and 
$\hat Q_{cz\to iu}(\hat E_{cz\to iu}(.))$
are exchanged between variable and constraint nodes according to survey propagation equations (see \eqref{sp1} and \eqref{sp2}).
The average complexity \eqref{complexity-definition} can be computed by a Bethe type formula for the entropy of the effective model.

The survey propagation equations \eqref{sp1}, \eqref{sp2} allow to compute 
 the distribution over pure Bethe states, of the vectors $(\hat E_{cz\to iu}(x_{iu}), x_{iu}\in \mathcal{X})$. 
These are $\vert\mathcal{X}\vert$-component vectors with components in $\{0, 1\}$. Thus the surveys are supported on 
an alphabet of size at most $2^{\vert\mathcal{X}\vert}$. Often
the effective size of the alphabet is smaller  (it is $\vert\mathcal{X}\vert+1$ in the specific problems considered here)
because the warning propagation equations \eqref{warn1}, \eqref{warn2} restrict the possible values 
of $(\hat E_{cz\to iu}(x_{iu}), x_{iu}\in \mathcal{X})$. This simplification is used for each model separately in the next sections. 

Let us summarize the main observations that follow from 
the detailed analysis in sections \ref{example-k-sat} and 
\ref{example-q-col}. As $L\to +\infty$, we find that the 
complexity curves $\Sigma_{L,w}(\alpha)$ supported on the interval
$[\alpha_{{\rm SP}, L, w}, \alpha_{s, L, w}]$ converge to a limiting curve 
$\Sigma_w(\alpha)$ supported on the limiting interval $[\alpha_{{\rm SP}, w}, \alpha_s]$. Moreover, on this later interval, $\Sigma_w(\alpha)$ coincides 
with the complexity $\Sigma(\alpha)$ of the individual system ($L=w=1$).
This is illustrated on Figure~\ref{complexi}. We observe that $\alpha_{s,L,w}$ tends to $\alpha_s$ from above. Also for moderate $L$ one generally has 
$\alpha_{{\rm SP}, L, w}>\alpha_s$, but this inequality is reversed for $L$ large enough, and $\lim_{L\to +\infty}\alpha_{{\rm SP}, L, w} = \alpha_{{\rm SP}, w}<\alpha_s$. 

We observe the threshold saturation, namely 
$\lim_{w\to +\infty}\alpha_{{\rm SP}, w} \uparrow \alpha_s$.
In fact we expect (from \cite{Curie-Weiss-Chain}) that the 
gap $\vert \alpha_{{\rm SP}, w} - \alpha_s\vert$ is exponentially small in $w$ 
($K$ fixed) but this is hard to assess numerically. One also observes that for 
$w$ fixed the gap increases with increasing $K$.

\begin{figure}[ht!]
\begin{centering}
\input{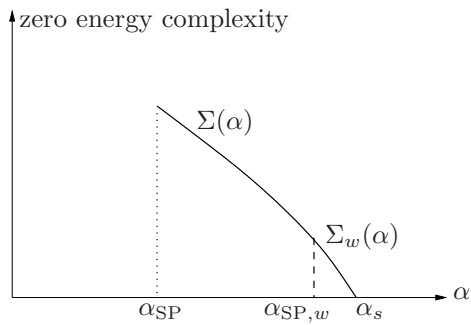}
\caption{\small Complexity of the individual ensemble $\Sigma(\alpha)$ (i.e. $L=w=1$) and limiting complexity $\Sigma_w(\alpha)$ of the coupled ensemble for 
$L\to +\infty$. We have $\alpha_{{\rm SP},w}\to\alpha_s$ as $w\to +\infty$.} 
\label{complexi}
\end{centering} 
\end{figure} 

We point out that the complexity of the chain with periodic boundary conditions 
converges to that of the individual system in the infinite length limit. In other words there is no threshold saturation  
as long as the boundary conditions are periodic. This is easily understood by realizing that the survey 
propagation equations are purely local and have a translation invariant solution when the 
boundary conditions are periodic.

Finally, let us mention that we observe similar features for the {\it entropic complexity} curve. 
In this case $\alpha_d$ plays the role of $\alpha_{SP}$ and $\alpha_c$ that of $\alpha_s$. We have $\alpha_{d,w}\to \alpha_c$. In particular,
$\lim_{L\to +\infty}\alpha_{c,L,w} =\alpha_c$ and \linebreak $\lim_{w\to +\infty}\lim_{L\to +\infty}\alpha_{d,L,w} =\alpha_c$ (see section \ref{dyncond}).

\section{Coupled $\text{K-SAT}$ Problem}\label{example-k-sat}

\subsection{Numerical implementation}\label{section4.1}

We begin with a convenient parametrization of the messages (see e.g \cite{Mezard-Montanari-book}).
Since $\mathcal{X}=\{0,1\}$, the warning (energy costs) messages are two-component 
vectors $(E_{iu\to cz}(0), E_{iu\to cz}(1))$
and $(\hat E_{cz\to iu}(0), \hat E_{cz\to iu}(1))$ which take three possible values 
$(0,1)$, $(1,0)$ and $(0,0)$. Warning $(0,1)$ means that $x_{iu}$ should take the value $0$, warning $(1,0)$ 
means that $x_{iu}$ should take value $1$, 
and warning $(0,0)$ means that $x_{iu}$ is free to take any value. Messages from variables to constraints can be conveniently parametrized as follows,
\begin{equation*}
Q_{iu\to cz}^S \equiv
\begin{cases}
Q_{iu\to cz}(0,1) & \mathrm{~if~} x_{iu} \mathrm{~is~negated~in~} cz ,\\
Q_{iu\to cz}(1,0) & \mathrm{if~} x_{iu} \mathrm{~is~not~negated~in~} cz.
\end{cases}
\end{equation*}
This is the fraction of pure states for which the variable is forced to satisfy the constraint. Similarly,
\begin{equation*}
Q_{iu\to cz}^U \equiv
\begin{cases}
Q_{iu\to cz}(0,1) & \mathrm{~if~} x_{iu} \mathrm{~is~not~negated~in~} cz, \\
Q_{iu\to cz}(1,0) & \mathrm{if~} x_{iu} \mathrm{~is~negated~in~} cz.
\end{cases}
\end{equation*}
This is the fraction of pure states for which the variable is forced to unsatisfy the constraint.
Note that $Q_{iu\to cz}(0,0) = 1- Q_{iu\to cz}^S - Q_{iu\to cz}^U$.
Let us now parametrize the messages from 
constraints to variables. If variable $x_{iu}$ enters unnegated in constraint $cz$, then certainly constraint $cz$ does not force 
it to take the value $0$. Thus $\hat Q_{cz\to iu}(0,1) =0$, and the message can be parametrized by the single number $\hat Q_{cz\to iu}(1,0)$. 
On the other hand, if variable $x_{iu}$ enters negated in constraint $cz$, then certainly constraint $cz$ does not force 
it to take the value $1$. Thus $\hat Q_{cz\to iu}(1,0) =0$, and again the message can be parametrized by the single number $\hat Q_{cz\to iu}(0,1)$. We set 
\begin{equation*}
 \hat Q_{cz\to iu} \equiv
\begin{cases}
\hat Q_{cz\to iu}(0,1) & \mathrm{~if~} x_{iu} \mathrm{~is~negated~in~} cz, \\
\hat Q_{cz\to iu}(1,0) & \mathrm{~if~} x_{iu} \mathrm{~is~not~negated~in~} cz.
\end{cases}
\end{equation*}
Message $\hat Q_{cz\to iu}$ is the fraction of pure states for which $cz$ warns $iu$ to satisfy it.
The survey propagation equations \eqref{sp1}, \eqref{sp2} then become
(recall $d_{\langle bv,iu\rangle} =1$ (resp. $0$) for a dashed (resp. full) edge 
$\langle bv,iu\rangle$),
\begin{equation}\label{hatq=produ}
 \hat Q_{cz\to iu} = \prod_{jv\in \partial(cz)\setminus iu} Q^U_{jv\to cz},
\end{equation}
and 
\begin{align}
Q_{iu\to cz}^S & \cong &  \biggl\{\prod_{bv\in \partial(iu)\setminus cz}^{d_{\langle bv, iu\rangle}\ne d_{\langle iu, cz\rangle}} (1- \hat Q_{bv\to iu})\biggr\}
\biggl\{1- \prod_{bv\in \partial(iu)\setminus cz}^{d_{\langle bv, iu\rangle}= d_{\langle iu, cz\rangle}} (1- \hat Q_{bv\to iu})\biggr\} ,
\label{qs}
\\
Q_{iu\to cz}^U & \cong & \biggl\{\prod_{bv\in \partial(iu)\setminus cz}^{d_{\langle bv, iu\rangle}= d_{\langle iu, cz\rangle}}  (1- \hat Q_{bv\to iu})\biggr\}
\biggl\{1- \prod_{bv\in \partial(iu)\setminus cz}^{d_{\langle bv, iu\rangle}\ne d_{\langle iu, cz\rangle}}  (1- \hat Q_{bv\to iu})\biggr\},
\label{qu}
\end{align}
where $\cong$ means that the r.h.s has to be normalized to one.
Define
\begin{align}
Q^+_{iu\to cz}  = &\prod_{bv\in \partial(iu)\setminus cz}^{d_{\langle bv, iu\rangle}= d_{\langle iu, cz\rangle}}  (1- \hat Q_{bv\to iu}) \label{qplus},\\
Q^-_{iu\to cz}  = &\prod_{bv\in \partial(iu)\setminus cz}^{d_{\langle bv, iu\rangle}\ne d_{\langle iu, cz\rangle}}  (1- \hat Q_{bv\to iu}).\label{qminus}
\end{align}
Then using \eqref{hatq=produ} and the normalized form of \eqref{qu}
\begin{equation}\label{hatq}
 \hat Q_{cz\to iu} = \prod_{jv\in \partial(cz)\setminus iu} \frac{Q^+_{jv\to cz} (1- Q^-_{jv\to cz})}{Q^+_{jv\to cz} + Q^-_{jv\to cz} - 
Q^+_{jv\to cz}Q^-_{jv\to cz}}.
\end{equation}
We will work with the set of SP equations \eqref{qplus}, \eqref{qminus}, \eqref{hatq}.
The complexity becomes
\begin{equation}
\Sigma_{L,w}(\alpha) = \frac{1}{NL}\mathbb{E}\biggl[\sum_{cz} \Sigma_{cz} +
\sum_{iz} \Sigma_{iz}  - \sum_{\langle cz, iu\rangle}\Sigma_{cz, iu}\biggr],
\end{equation}
with
\begin{equation}
\Sigma_{cz} = \ln\biggl\{\prod_{iu\in \partial(cz)} (Q_{iu\to cz}^+ +
Q_{iu\to cz}^-  -  Q_{iu\to cz}^+
Q_{iu\to cz}^- ) - \prod_{iu\in \partial(cz)}Q_{iu\to cz}^+(1-
Q_{iu\to cz}^-) \biggr\},
\end{equation}
\begin{equation}
\Sigma_{iz} = \ln\biggl\{\prod_{bv\in \partial(iz)}^{d_{\langle bv, iz\rangle}=1}(1-\hat Q_{bv\to iz}) +
\prod_{bv\in \partial(iz)}^{d_{\langle bv, iz\rangle}=0}(1-\hat Q_{bv\to iz}) - \prod_{bv\in \partial(iz)}(1-\hat Q_{bv\to iz}) \biggr\},
\end{equation}
\begin{equation}
\Sigma_{cz,iu} = \ln\biggl\{
(Q_{iu\to cz}^+ +
Q_{iu\to cz}^-  -  Q_{iu\to cz}^+
Q_{iu\to cz}^- ) 
-
Q_{iu\to cz}^+(1- Q_{iu\to cz}^-) \hat Q_{cz\to iu}.
\biggr\}
\end{equation}
The set of SP equations \eqref{qplus}, \eqref{qminus}, \eqref{hatq} is solved under the following assumptions.
We treat the set of messages emanating
from a constraint at position $z$, namely $\hat Q_{cz\to iu}$ for $u= z,\dots, z+w-1$, as i.i.d. 
copies of a r.v. $\hat Q_{z}$
depending only on the position $z$. Similarly we treat the messages emanating from a variable node at position $u$, namely 
 $Q^{\pm}_{iu\to cz}$ for $z=u-w+1, \dots, u,$
as i.i.d. copies of a r.v. $Q^{\pm }_{u}$. Now, fix a position $z$ and pick $p$, $q$ two independent Poisson$(\frac{\alpha K}{2})$ integers. 
Pick $k_1,\dots,k_{p+q}$ independently
uniformly in $\{0,\dots,w-1\}$. Similarly, pick $l_1,\dots,l_{K-1}$ independently uniformly in $\{0,\dots,w-1\}$. 
Under our assumptions the SP equations become\footnote{In \eqref{qplusrv}, \eqref{qminusrv}, \eqref{qhatrv} equalities mean that the r.v. have the same distribution.}
\begin{align}
Q^+_{z} & = \prod_{i=1}^p (1-\hat Q_{z-k_i}^{(i)}), \label{qplusrv}\\
Q^-_{z} & = \prod_{i={p+1}}^{p+q} (1-\hat Q_{z-k_i}^{(i)}),\label{qminusrv}
\end{align}
and 
\begin{equation}
 \hat Q_z = \prod_{i=1}^{K-1} \frac{Q^{+(i)}_{z+l_i}(1-Q^{-(i)}_{z+l_i})}{Q^{+(i)}_{z+l_i} + Q^{-(i)}_{z+l_i} - Q^{+(i)}_{z+l_i}Q^{-(i)}_{z+l_i}}.
 \label{qhatrv}
\end{equation}
The boundary conditions can be taken into account by setting $\hat Q_z=0$ for $z\leq -\frac{L}{2}$, $z>\frac{L}{2}$.
These equations are solved by the standard method of population dynamics. It is then possible to compute the average complexity from 
\begin{equation}\label{sigm}
\Sigma_{L,w}(\alpha) = \frac{1}{L}
\sum_{z=-\frac{L}{2}+1}^{\frac{L}{2}}(\alpha\mathbb{E}[\Sigma_{z}^{\rm cons}] + 
 \mathbb{E}[\Sigma_{z}^{\rm var}] - \alpha K\mathbb{E}[\Sigma_{z}^{\rm edge}]),
\end{equation}
where
\begin{equation}\label{sigmacons}
\Sigma_{z}^{\rm cons} = \ln\biggl\{\prod_{i=1}^K (Q_{z+l_i}^{+(i)} +
Q_{z+l_i}^{-(i)}  -  Q_{z+l_i}^{+(i)}
Q_{z+l_i}^{-(i)} ) - \prod_{i=1}^K Q_{z+l_i}^{+(i)}(1-
Q_{z+l_i}^{-(i)}) \biggr\},
\end{equation}
\begin{equation}\label{sigmavar}
\Sigma_{z}^{\rm var} = \ln\bigl\{\prod_{i}^{p}(1-\hat Q_{z-k_i}^{(i)}) +
\prod_{i=p+1}^{p+q}(1-\hat Q_{z-k_i}^{(i)}) - \prod_{i=1}^{p+q}(1-\hat Q_{z-k_i}^{(i)}) \bigr\},
\end{equation}
\begin{equation}\label{sigmaedge}
 \Sigma_{z}^{\rm edge} = \ln \biggl\{
(Q^+_{z+k} + Q^-_{z+k} - Q^+_{z+k}Q^-_{z+k}) - Q^+_{z+k}(1-Q^-_{z+k})\hat Q_{z}
\biggr\}.
\end{equation}
\vskip 0.25cm
\begin{figure}[htp]
\begin{centering}
\input{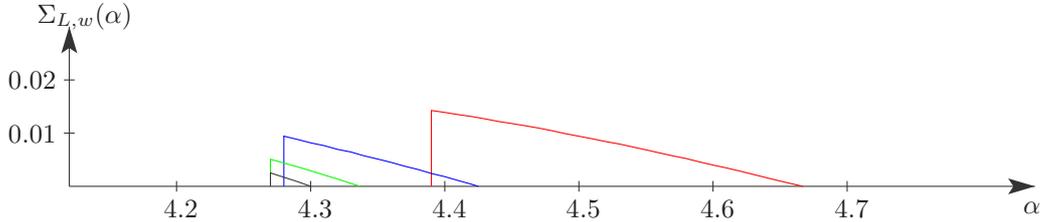}
\caption{\small Average complexity versus $\alpha$ for the $[1000,3,\alpha,3,L]$ ensembles with $L=10$ (rightmost curve), $20$, $40$, $80$ (leftmost curve).
Values of the corresponding thresholds are given in table \ref{3sattable}.}
\end{centering}
\label{3satcomplexity}
\end{figure} 
Figure \ref{3satcomplexity} shows the average complexity for the regime 
$N>>L>>w$, for $K=3$ and $w=3$.
We find it is positive in an interval $[\alpha_{SP, w, L}, \alpha_{s, w, L}]$
that shrinks down to zero as $L$ increases. The SP and phase transition
thresholds corresponding to Figure \ref{3satcomplexity} are given in Table 
\ref{3sattable}. We observe that $\alpha_{s,w,L}\downarrow \alpha_s$ as $L$ increases. Moreover we observe that {\it the SP threshold
saturates},
namely $\alpha_{{\rm SP}, w, L}\to \alpha_s$ for $L>>w>>1$. 

For moderate values of $L$ we have $\alpha_s < \alpha_{SP, w, L}$.
However since $\alpha_{SP, w, L} < \alpha_{s,w,L}$ and $\lim_{L\to +\infty} \alpha_{s,w,L} = \alpha_s$, for $L$ large enough and fixed $w$
we necessarily have $\alpha_{SP, w, L} < \alpha_{s}$. This turns out to be difficult to observe within population dynamics 
experiments, but can be checked in the large $K$ limit.
\begin{table}
\centering
\begin{tabular}{c c c c c c c c}
\hline\hline 
$\text{individual}$ & & $\alpha_{SP}$ & $\alpha_s$ \\
\hline
$L=1$ & &  $3.927$ &  $4.267 $ \\ 
\hline\hline
$\text{coupled}$ & & $\alpha_{{\rm SP},L,3}$ &   $\alpha_{s,L,3}$ \\
\hline
$L=10$                 & &  $4.386$ &  $4.663 $   \\
$L=20$                 & &  $4.274$ &  $4.425 $  \\
$L=40$                 & &  $4.269$ &  $4.335 $ \\
$L=80$                 & &  $4.268$ &  $4.301 $ \\
$L=160$               & &  $4.267$ &  $4.284 $ \\
\hline
\end{tabular}
\caption{\small SP and static phase transition thresholds of the 
$[1000,3,\alpha,3,L]$ ensembles.}
\label{3sattable}
\end{table} 



\subsection{Survey propagation for Large $K$}\label{section4.2}

For large $K$ one can derive approximations of the survey propagation equations that lend themselves to more 
explicit analysis \cite{mezard-mertens}. We will not attempt to control 
the error terms, but it is known for the individual system that the approximations are excellent already for $K\geq 5$. We can check numericaly 
that this is also the case for the coupled-CSP.
\vskip 0.25cm
\noindent{\bf Fixed point equations}.
Following  \cite{mezard-mertens}, we introduce entropic random variables
\begin{equation}\label{entropic}
 \hat q_z = -\ln (1-\hat Q_z), \qquad q_z^\pm = -\ln Q_z^{\pm}.
\end{equation}
From 
\eqref{qplusrv}, \eqref{qminusrv} and \eqref{qhatrv} we obtain 
\begin{equation}
q^{+}_{z} = \sum_{i=1}^{p} \hat q_{z-k_i}^{(i)},\quad
q^{-}_{z} =\sum_{i=p+1}^{p+q}\hat q_{z-k_i}^{(i)}, 
\label{qplusminus}
\end{equation}
and 
\begin{equation}\label{phi}
 \hat q_{z} = -\ln\biggl\{ 1- \prod_{i=1}^{K-1} \frac{e^{q_{z+l_i}^{-(i)}}-1}{e^{q_{z+l_i}^{-(i)}} + e^{q_{z+l_i}^{+(i)}} -1}\biggr\},
\end{equation}
we set
\begin{equation}\label{set}
\mathbb{E}[q^{\pm}_{z}] = x_z^\pm\quad {\rm and~~} \mathbb{E}[\hat q_{z}]= y_z,
\end{equation}
for the averages over the graph ensemble.
The number of i.i.d. random variables in \eqref{qplusminus} is a Poisson$(\frac{\alpha K}{2})$ integer. Therefore we assume 
that for large $K$ the r.v. $q^{\pm}_{z}$ are self-averaging.
It is reasonable to expect that they can be replaced by their expectation in \eqref{phi} and that hence $\hat q_{z}$ is also self-averaging.
This implies a closed set of equations for the expected values of messages,
\begin{equation}\label{second-average}
\begin{cases}
x_z^\pm  \approx \frac{\alpha K}{2w}\sum_{k=0}^{w-1} y_{z-k},
\\ 
y_z  \approx - \sum_{k_1,...,k_{K-1}=0}^{w-1} \frac{1}{w^{K-1}}
\ln\biggl\{ 1 - \prod_{i=1}^{K-1} \frac{e^{x^{-}_{z+k_i}} -1}{
e^{x^{-}_{z+k_i}} + e^{x^{+}_{z+k_i}} -1}\biggr\}.
\end{cases}
\end{equation}
We further approximate \eqref{second-average}. A self-consistent check with the final solution shows that $x^{\pm}=O(K)$ 
and hence
the product in the log is $O(2^{-K})$. 
Linearizing the logarithm yields
\begin{align}
y_z \approx
\sum_{k_1,...,k_{K-1}=0}^{w-1} \frac{1}{w^{K-1}} \prod_{i=1}^{K-1}
\frac{e^{x_{z+k_i}^{-}} -1}{
2e^{x_{z+k_i}^{-}} -1}
=
\biggl\{\frac{1}{w}\sum_{k=0}^{w-1}\frac{e^{x_{z+k}^{-}} -1}{
2e^{x_{z+k}^{-}}  -1}\biggr\}^{K-1}.
\end{align}
It is convenient to introduce the rescaled parameters
\begin{equation}\label{rescaled}
\hat\alpha=2^{-K}\alpha,\qquad \varphi_z = 2^{K-1} \hat\alpha
K y_z.
\end{equation}
From \eqref{entropic} we see $\varphi_z$ is a measure of the average (over the graph ensemble) probability (over pure states) that 
constraints at position $z$ send warning messages.
From now on we write $x_z$ instead of $x_z^{\pm}$. The fixed point equations become
\begin{equation}
 \begin{cases}
  x_z \approx \frac{1}{w}\sum_{k=0}^{w-1} \varphi_{z-k}, \\
  \varphi_z \approx \hat\alpha K \biggl\{\frac{1}{w}\sum_{l=0}^{w-1}\frac{e^{x_{z+l}}-1}
{e^{x_{z+l}}-\frac{1}{2}}\biggr\}^{K-1}.
\label{couple-of-equ}
 \end{cases}
\end{equation}
Hence, the profile $\{\varphi_z\}$ satisfies
\begin{equation}\label{coupledksat}
\varphi_z \approx \hat\alpha K \biggl\{\frac{1}{w}\sum_{k=0}^{w-1}\frac{e^{\frac{1}{w}\sum_{l=0}^{w-1}\varphi_{z-l+k}}-1}
{e^{\frac{1}{w}\sum_{l=0}^{w-1}\varphi_{z-l+k}}-\frac{1}{2}}\biggr\}^{K-1}.
\end{equation}
These equations have to be supplemented with the boundary condition $\varphi_z=0$ for $z\leq -\frac{L}{2}$ and $z>\frac{L}{2}$.
\vskip 0.25cm 
\noindent{\bf The average complexity.} Let us now express the complexity in terms of the fixed point profile. Let us first compute the contributions 
of variable and constraint nodes, and of edges.
\vskip 0.25cm
\noindent{\it Contribution of variable nodes}. From 
\eqref{sigmavar}, \eqref{entropic} and \eqref{set}
\begin{equation}
\Sigma_z^{var} = \ln\bigl\{e^{-\sum_{i=1}^p \hat q_{z-k_i}} +  e^{-\sum_{i=p+1}^q \hat q_{z-k_i}} - e^{-\sum_{i=1}^{p+q} \hat q_{z-k_i}}\bigr\}.
\end{equation}
For $K$ large the sums in the exponentials concentrate on their averages, so that
\begin{equation}
\mathbb{E}[\Sigma_z^{var}] \approx \ln
\bigl\{
2e^{-\frac{\alpha K}{2w}\sum_{k=0}^{w-1} y_{z-k}} - 
e^{-\frac{\alpha K}{w}\sum_{k=0}^{w-1}  y_{z-k}}
\bigr\}.
\end{equation}
\vskip 0.25cm
\noindent{\it Contribution of check nodes}.
From \eqref{sigmacons}, \eqref{entropic} and \eqref{set}
\begin{align}
\mathbb{E}[\Sigma_z^{cons} ]& = \mathbb{E}\biggl[\ln\biggl\{ 
\prod_{i=1}^K (e^{-q_{z+l_i}^{+}} +  e^{-q_{z+l_i}^{-}} - 
e^{-q_{z+l_i}^{+} - q_{z+l_i}^{-}})
\nonumber \\ &
-\prod_{i=1}^K e^{-q_{z+l_i}^{+}} (1- e^{-q_{z+l_i}^{-}})
\biggr\}\biggr]
\nonumber \\ &
\approx
\sum_{l_1,...,l_{K}=0}^{w-1}\frac{1}{w^K}
\ln\biggl\{ 
\prod_{i=1}^K (2e^{-x_{z+l_i}^{-}} - e^{-2x_{z+l_i}^{-}})
-\prod_{i=1}^K e^{-x_{z+l_i}^{-}} (1- e^{-x_{z+l_i}^{-}})
\biggr\}.
\end{align}
Factoring the first product out of the log we get
\begin{equation}
\mathbb{E}[\Sigma_z^{cons} ] \approx
\frac{K}{w}\sum_{i=0}^{w-1} \ln\bigl\{2 e^{-x^{-}_{z+l}} - e^{-2x^{-}_{z+l}}\bigr\}
 + 
\sum_{l_1,...,l_{K}=0}^{w-1}\frac{1}{w^K} 
\ln\biggl\{ 1 - \prod_{i=1}^{K} \frac{ 1- e^{-x_{z+l_i}^{-}} }{ 2 - e^{-x^{-}_{z+l}}  }
       \biggr\}.
\end{equation}
Since the ratio in the second log is $O(2^{-K})$ we can linearize and obtain 
\begin{equation}
\mathbb{E}[\Sigma_z^{cons} ]  \approx
\frac{K}{w}\sum_{l=0}^{w-1} \ln\bigl\{2 e^{-x^{-}_{z+l}} - e^{-2x^{-}_{z+l}}\bigr\}
 - 
 \biggl\{\frac{1}{w}\sum_{l=0}^{w-1}\frac{ 1- e^{-x_{z+l}^{-}} }{ 2 - e^{-x^{-}_{z+l} } }
 \biggr\}^K.
 \end{equation}
\vskip 0.25cm
\noindent{\it Contribution of edges}. Similarly from \eqref{sigmaedge}, \eqref{entropic}, \eqref{set} we have
\begin{align}
 \mathbb{E}[\Sigma_z^{\rm edge}] & = \frac{1}{w}\sum_{l=0}^{w-1}\mathbb{E}\biggl[\ln\bigl\{ ( e^{-q^+_{z+l}} + e^{-q^-_{z+l}} - e^{-q^+_{z+l}-q^-_{z+l}})
\nonumber \\ &
- e^{-q^+_{z+l}} (1- e^{-q^-_{z+l}}) (1- e^{-\hat q_z}) \bigr\}\biggr]
\nonumber \\ &
\approx\frac{1}{w}\sum_{l=0}^{w-1} 
\ln\biggl\{ ( 2e^{-x^-_{z+l}} - e^{-2x^-_{z+l}})
- e^{-x^-_{z+l}} (1- e^{-x^-_{z+l}}) (1- e^{-y_z}) \biggr\}.
\end{align}

Now, using \eqref{second-average} we can express the  total average complexity \eqref{sigm} in terms of rescaled variables \eqref{rescaled}. We find
\begin{equation}\label{avcomp}
\Sigma_{w,L}(\hat\alpha)  = \frac{1}{L}\sum_{z= -\frac{L}{2} +1}^{\frac{L}{2}} \sigma_{\hat\alpha, w,L}(z),
\end{equation}
with
\begin{align}
\sigma_{\hat\alpha, w,L}(z) & \approx
\ln
\bigl\{
2e^{-\sum_{k=0}^{w-1} \varphi_{z-k}} - 
e^{-\frac{2}{w}\sum_{k=0}^{w-1}  \varphi_{z-k}}
\bigr\}
-
2^K\hat\alpha\biggl\{  
\frac{1}{w}\sum_{l=0}^{w-1} 
\frac{ e^{x_{z+l}} - 1 }{ 2e^{x_{z+l}} - 1  }
\biggr\}^K
\nonumber \\ &
-
\frac{2^K\hat\alpha K}{w}\sum_{l=0}^{w-1} 
\ln\biggl\{ 1
- \frac{e^{x_{z+l}}-1}
{2e^{x_{z+l}}-1} (1- e^{-\frac{\varphi_z}{\hat\alpha K2^{K-1}}})\biggr\}.
\label{local-complexity}
\end{align}
Within our approximations the third term can be simplified further because  
$1- e^{-\frac{\varphi_z}{K2^{K-1}}} = O(2^{-K})$ and we may linearize the log. Thus the second 
line in \eqref{local-complexity} can 
be replaced by
\begin{equation}
  2\varphi_z\frac{1}{w} \sum_{l=0}^{w-1}\biggl\{\frac{e^{x_{z+l}}-1}
{2e^{x_{z+l}}-1}\biggr\}.
\end{equation}
The complexity \eqref{avcomp} can be viewed as a functional of the profiles $\{x_z, \varphi_z\}$ with boundary condition
$\varphi_z=0$ for $z\leq -\frac{L}{2}$ and $z>\frac{L}{2}$. One can check that the stationary points of this functional
are given by the fixed point equations \eqref{couple-of-equ}.

\subsection{Solutions for Large $K$}\label{largeKsolutions}
We use the notation $f\doteq g$ to mean that $\lim_{K\to +\infty}\frac{f}{g}=1$.                                         
The large $K$ results for the individual system \cite{mezard-mertens} are 
recovered by setting $L=w=1$, in which case the fixed point equations \eqref{coupledksat}
reduces to 
\begin{equation}
\varphi \approx \hat\alpha K \biggl\{\frac{e^\varphi -1}{e^\varphi -\frac{1}{2}}\biggr\}^{K-1}.
\label{fixed-k-sat1}
\end{equation}
One may easily check that this is the stationary point equation for 
the complexity \eqref{avcomp} as a function of $\varphi$ (and $\alpha$ fixed),
\begin{equation}
 \Sigma_{1,1}(\hat \alpha,\varphi) = \ln\{2e^{-\varphi} - e^{-2\varphi}\} 
- 2K\hat\alpha\biggl\{\frac{e^\varphi - 1}{2e^\varphi -1}\biggr\}^K 
+\varphi \biggl\{\frac{e^\varphi - 1}{2e^\varphi -1}\biggr\}.
\label{fixed-k-sat2}
\end{equation}
Thus, fixed points of \eqref{fixed-k-sat1} are stationary points of \eqref{fixed-k-sat2}: stable fixed points 
correspond to minima and unstable ones to maxima. 

The curve $\hat\alpha(\varphi)$ is shown as the dotted curve 
in Figure \ref{coupled-ksat-curves}. This function is convex and has a unique 
minimum 
at $\varphi_{\rm SP} \doteq \ln(\frac{1}{2}K\ln K)$ and $\hat\alpha(\varphi_{\rm SP})\equiv
\hat\alpha_{\rm SP} \doteq \frac{\ln K}{K}$. Near this minimum we have 
$\hat\alpha(\varphi)\approx (\frac{\varphi - \varphi_{\rm SP}}{\gamma_{\rm SP}})^2$,
$\gamma_{\rm SP} \doteq   \frac{4}{3} \frac{K}{\ln K}$. For $\varphi>>\varphi_{\rm SP}$ we 
have  $\hat\alpha(\varphi)=\frac{1}{K}(\varphi-\varphi_{\rm SP})$ and for 
$0<\varphi<<\varphi_{\rm SP}$ we have $\hat\alpha(\varphi) = \frac{1}{\varphi}$. Therefore
the trivial fixed point $\varphi=0$ is unique for
$\hat\alpha < \hat\alpha_{\rm SP}$, and there are two extra non-trivial fixed points for $\hat\alpha > \hat\alpha_{\rm SP}$. 
Only one of them is stable and forms the branch $\varphi_{\rm mst} \approx K\hat\alpha + \varphi_{\rm SP}$ for $\varphi>>\varphi_{\rm SP}$.

For $\hat\alpha<\hat\alpha_{SP}$, the function \eqref{fixed-k-sat2} has a unique minimum at $\varphi=0$. For $\hat\alpha>\hat\alpha_{SP}$ a second metastable minimum appears
at $\varphi_{\rm mst}\approx K\hat\alpha +\varphi_{\rm SP}$. At this minimum we find $\Sigma_{1,1}(\hat\alpha, \varphi_{\rm mst}) \doteq \ln 2 - \hat\alpha$ which counts the number of 
clusters as long as it is positive. Summarizing, the complexity vanishes 
for $\hat \alpha<\hat \alpha_{SP}$, and equals $(\ln 2 - \hat\alpha)$ for 
$\hat\alpha\in [\hat \alpha_{SP}, \ln2]$. 
In particular the static phase transition threshold is $\hat\alpha_{s}\doteq \ln2$.
Beyond the static phase transition threshold  
the complexity is negative and looses its meaning (one has to modify the SP formalism used here).
Higher order corrections can be computed 
in powers of $2^{-K}$, see \cite{mezard-mertens}.

Let us now discuss the coupled case. The picture which emerges is similar to the one for the much simpler 
Curie-Weiss Chain model \cite{Curie-Weiss-Chain} and coupled 
LDPC codes over the binary erasure channel \cite{Kudekar-Richardson-Urbanke-I}. Before discussing
the numerical results we wish to give a heuristic argument that ``explains'' why threshold saturation occurs.
The argument can presumably be turned into a rigorous proof using the methods in \cite{Kudekar-Richardson-Urbanke-I}
for LDPC codes on the binary erasure channel, or more general methods developed in \cite{SRT-one-dim}.

For the sake of the argument suppose that we fix $\hat\alpha>\hat\alpha_{\rm SP}$ and that we 
look for profile solutions of \eqref{coupledksat},
on an infinite chain $L\to +\infty$, that interpolate 
between the (asymmetric) boundary conditions $\varphi_z=0$, $z\to -\infty$ 
and $\varphi_z\to\varphi_{\rm mst}$, $z\to +\infty$. We take as an ansatz, a kink approaching 
its asymptotic values (at the two ends)
fast enough, with a transition region localized in a region of size $O(w)$ centered at a 
position $z_{\rm kink}=\xi L$ ($\vert\xi\vert\leq 1/2$). Figure~\ref{kink_profile} gives an illustrative picture of the kink profile.
We have
\begin{equation}
\overline\varphi\equiv\frac{1}{L}\sum_{z=-\frac{L}{2}+1}^{\frac{L}{2}}\varphi_z
\approx \frac{1}{L}(\frac{L}{2} - \xi L)\varphi_{\rm mst}.
\end{equation}
Also, it is easy to see that the associated complexity as a function of $\xi$, or equivalently $\overline\varphi$, is approximately given 
by a convex combination of the two
minima of $\Sigma_{1,1}(\alpha, \varphi)$ (given in \eqref{fixed-k-sat2}) which correspond 
to the two points $\varphi=0$ (with $\Sigma =0$) and $\varphi=\varphi_{\text{st}}$ (with $\Sigma \approx \ln 2- \hat{\alpha}$).
More precisely,
\begin{align*}
 \Sigma_{kink}(\xi) & \approx \frac{1}{L}\bigl [(\frac{L}{2} +\xi L) \times 0 +(\frac{L}{2} -\xi L) \times (\ln 2 -\hat\alpha) \bigr ] \\  
&\approx \frac{\overline\varphi}{\varphi_{\rm mst}} (\ln 2 -\hat\alpha). 
\end{align*}
\vskip 0.15cm
\begin{figure}[htp]
\begin{centering}
\input{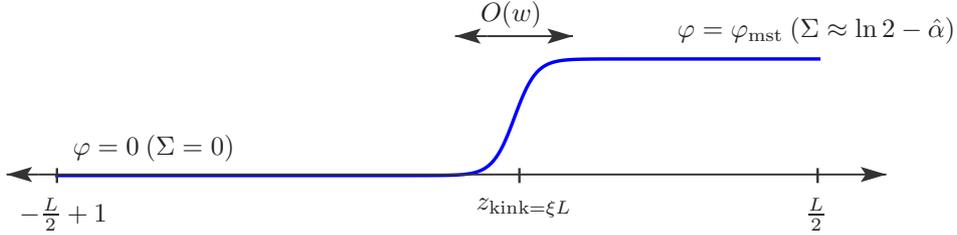}
\caption{{\small An illustrative picture of a kink-like ansatz $\{\varphi_z\}_{z=-\frac{L}{2}+1}^{\frac{L}{2}}$ for a solution of \eqref{coupledksat}. 
At the right end, the kink
converges to the value $\varphi=\varphi_{\text{st}}$ (with corresponding 
complexity $\Sigma \approx \ln 2- \hat{\alpha}$) and at the left end it converges to $\varphi=0$ (with $\Sigma =0$). 
The transition region of size $O(w)$ which is centered at $z=z_{\text{kink}}$.}}
\label{kink_profile}
\end{centering}
\end{figure}
\vskip 0.25cm
When $\hat\alpha<\hat\alpha_s$, the minimum is at $\xi=\frac{1}{2}$ ($\overline\varphi=0$). This means that the kink center will form a traveling
wave through the chain, and reach its unique stable location at the right end. On the other hand when $\hat\alpha>\hat\alpha_s$
the  minimum is at $\xi=-\frac{1}{2}$ ($\overline\varphi=\varphi_{\rm mst}$) and the kink will travel towards the left to reach its stable location.
Within the present approximation, for $\hat\alpha=\hat\alpha_s$ any position along the chain is stable for the kink center. 

Summarizing, this heuristic argument suggests that for $\hat\alpha <\hat\alpha_s$ the fixed point equations \eqref{coupledksat} only have the trivial 
solution $\{\varphi_z=0\}$, while for $\hat\alpha >\hat\alpha_s$ the only solution is
$\{\varphi_z=\varphi_{\rm mst}\}$. This means that the SP threshold coincides with $\hat\alpha_s$.
Here, $\xi$ has been treated as a continuous 
variable, which is expected to be valid only in a limit of large $w$. For large but finite $w$ there will subsist a small gap between 
the SP and static thresholds, and for $\hat\alpha$ fixed in this gap only a discrete set of positions for the kink are stable. The number of such stable positions is roughly
equal to $2L$.

We have solved \eqref{coupledksat}  numerically with {\it symmetric} boundary conditions $\varphi_z=0$, $z\leq -\frac{L}{2}$, $z>\frac{L}{2}$ and 
fixed $\overline\varphi \equiv \frac{1}{L}\sum_{z=-\frac{L}{2}}^{\frac{L}{2}}\varphi_z$. In order to find a solution for {\it all} values of $\overline\varphi$
we have to let $\hat\alpha$ vary slightly. In other words we find a solution  
$(\hat\alpha(\overline\varphi); \{\varphi_z(\overline\varphi)\})$ that is parametrized by $\overline\varphi$.
Define the {\it van der Waals curve} (Figure~\ref{coupled-ksat-curves}) as the function $\hat\alpha(\overline\varphi)$. The minimum of the van der Waals curve 
yields (as for the individual system) the SP threshold $\alpha_{{\rm SP}, w, L}$ (see Table \ref{tablelargeK} for numerical values).
\vskip 0.25cm
\begin{figure}[htp]
\begin{centering}
\input{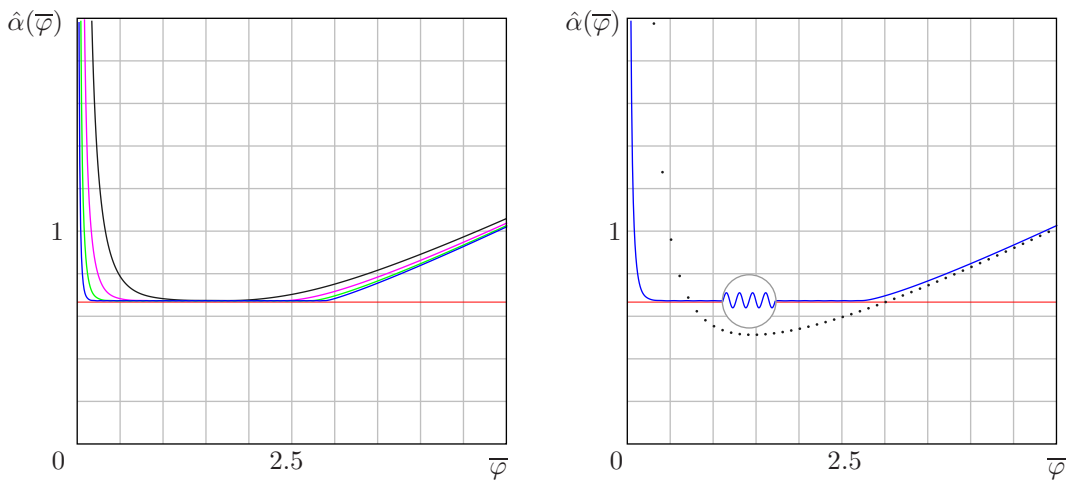}
\caption{{\small {\it Left}: sequence of van der Waals curves $\hat\alpha(\overline\varphi)$, for $K=5$, $w=3$ and $L= 10, 20, 40, 80$ (top to bottom). For
$\overline\varphi\in[\varphi_{\rm mst}, +\infty]$ they converge to the individual system curve. 
{\it Right}: a magnification of the plateau region for $K=5$, $w=3$ and $L=40$ shows the fine structure. The dotted line is 
the curve for the individual system and the red line shows the static phase transition threshold $\hat{\alpha}_s=0.666$.}}
\label{coupled-ksat-curves}
\end{centering}
\end{figure}
\vskip 0.25cm
As $L$ increases, the curves 
develop a plateau at height $\approx\hat\alpha_{s}$ for the interval $\overline\varphi\in [0, \varphi_{\rm mst}]$. 
Moreover they converge to the van der Waals curve of the
individual system for $\overline\varphi\in [\varphi_{\rm mst}, +\infty[$, a fact that is consistent with theorems \ref{therm-limit-1}, \ref{therm-limit-2}.
Precise enough numerics show that as long as $w$ is finite the curves display a fine structure in the plateau interval:
the magnification in Figure~\ref{coupled-ksat-curves} shows wiggles of very small amplitude. We observe that 
their amplitude decays as $w$ grows and $K$ is fixed (we expect from \cite{Curie-Weiss-Chain}
that this decay is exponential); and grows larger as $K$ increases with $w$ fixed (see Table~\ref{tablelargeK}).
\vskip 0.25cm
\begin{figure}[ht!]
\begin{centering}
\input{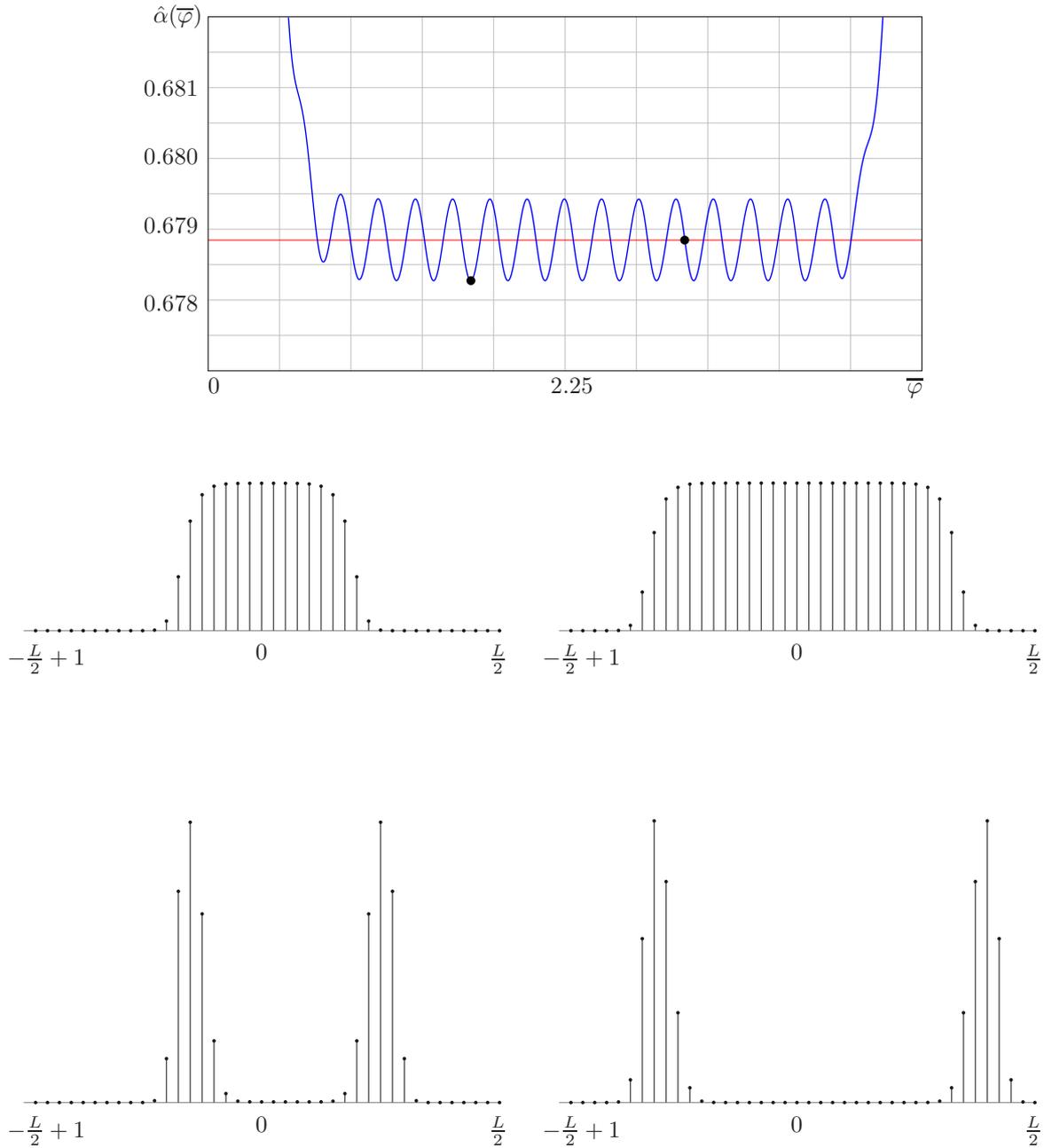}
\caption{{\small van der Waals curve in the wiggle region for the  coupled system (top) for $K=7$, $w=4$ and $L=40$. The red line 
is at the static phase transition threshold. The left point $(\overline{\varphi}_l, \hat{\alpha}_l)=(1.657, 0.678274)$ corresponds 
to the warning (middle left) and complexity (bottom left) profiles. In the latter the height of the middle part is
$\hat\alpha_s-\hat\alpha_l\approx 0.010$.  
The right point $(\overline{\varphi}_r, \hat{\alpha}_r)=(3.00585,0.688847)$ 
corresponds to the warning and complexity 
profiles 
on the right.}}
\label{compl-prof}
\end{centering}
\end{figure}
\begin{table}
\centering
\begin{tabular}{c c c c c c}
\hline\hline 
 K & $5$ & &   $7$  & &   $10$ \\
\hline
$\hat\alpha_{s}$ & $0.666$ & & $0.686$ & & $0.692$ \\
\hline
$\hat{\alpha}_{\text{SP}}$   &  $0.513$ & & $0.449$ & &  $0.370$ \\
\hline
$\hat{\alpha}_{\text{SP},80,3}$   &  $0.672$ & & $0.682 $ & &  $0.651$ \\
$\hat{\alpha}_{\text{SP},80,5}$   &  $0.672$ & & $0.688 $ & &  $0.691$ \\
$\hat{\alpha}_{\text{SP},80,7}$   &  $0.672$ & & $0.688 $ & &  $0.692$ \\
\hline
\end{tabular}
\caption{{\small SP Thresholds of the 
individual $(L=w=1)$ and coupled ensembles ($L= 80, w= 3, 5, 7)$ are found from the van der Waals curves. 
For $K=10$ we clearly 
see that the SP threshold saturates to $\hat\alpha_s$ from below as $w$ increases.}}
\label{tablelargeK}
\end{table} 
Figure \ref{compl-prof} illustrates the solutions of the fixed point equations for $\hat\alpha$ in the wiggle region for large $K$.
The top curve is the van der Waals curve in the wiggle region.
The middle left warning density profile is the fixed point solution corresponding the left point with coordinates
$(\overline\varphi_l, \hat\alpha_l)$. Note that $\hat\alpha_l = \hat\alpha_{SP, L, w}$. For this point the total average complexity is approximately equal to 
$\frac{\overline\varphi_l}{\varphi_{\rm mst}}(\hat\alpha_s - \hat\alpha_l)$. The bottom left curve shows the complexity profile. In the middle part, the 
height of this profile is approximately $(\hat\alpha_s - \hat\alpha_l)\approx 0.010$. Consider 
now the point on the right with coordinates $(\overline{\varphi}_r, \hat{\alpha}_r)$. Note that we take this point very close to
the static phase transition threshold $\hat\alpha_r\approx \hat\alpha_s$. As a consequence the total average complexity nearly vanishes.
The middle right warning density profile is flat over the whole chain,
except near the ends because we enforce the boundary conditions, and the complexity density nearly
vanishes except in the transition regions.

\section{Coupled Q-Coloring Problem}\label{example-q-col}

\subsection{Numerical implementation}

First we introduce an adequate parametrization of the messages (see e.g \cite{Mezard-Montanari-book}). The
warning vectors $(E_{jv\to cz}(1), \cdots , E_{jv\to cz}(Q))$ fall in two categories: those 
that have {\it exactly one} zero component; and those that have {\it at least two} zero components. For coloring, equation \eqref{warn2} becomes
\begin{equation}
 \hat E_{cz\to iu}(x_{iu}) = \min_{x_{\partial cz}\setminus iu}\bigl\{\mathbbm{1}(x_{cz}=x_{jv}) + E_{jv\to cz}(x_{jv})\bigr\}
 - \hat C_{cz\to iu}.
\end{equation}
It is easy to see that when $E_{jv\to cz}$ has exactly one zero component, then $\hat E_{cz\to iu}$ has exactly one non-zero component. 
On the other hand,
 when $E_{jv\to cz}$ has at least two zero components then all components of $\hat E_{cz\to iu}$ are zero. Hence, the vector
$(\hat E_{cz\to iu}(1),\cdots,\hat E_{cz\to iu}(Q))$ can take only $Q+1$ possible values which are the $(0,\cdots,0)\equiv *$ vector and 
the $Q$ canonical basis vectors $(1,0,\cdots,0)\equiv 1$, $(0,1,0,\cdots,0)\equiv 2$, ..., $(0,\cdots,1)\equiv Q$. The interpretation is clear: 
a warning vector $\in\{1,\cdots,Q\}$ forces the variable to choose a color, while a warning vector $*$ leaves the variable free.

We can rewrite the SP equations in terms of the distribution of warnings $\hat Q_{cz\to iu}(a)$,
$a\in\{1,\cdots, Q, *\}$.
Since constraints have degree $2$
we can view the messages $\hat Q_{cz\to iu}(a)$ as carried by the edge $\langle jv, iu\rangle$, where $jv$ is the unique 
node in $\partial(cz)\setminus iu$.  We thus make the replacement
$\hat Q_{cz\to iu}(a) \to \hat Q_{jv\to iu}(a)$ and write down the SP equations on the induced graph of variable nodes. 
Moreover following \cite{colorpeople} we seek solutions that do not depend on colors, and set 
$\hat Q_{jv\to iu}(a) \equiv \hat Q_{jv\to iu}$ for $a\in \{1,\dots, Q\}$. A calculation then shows that \eqref{sp1}, \eqref{sp2} reduce 
to
\begin{equation}
\hat Q_{jv\to iu} = \frac{\sum_{l=1}^Q (-1)^l {Q-1\choose{l}} \prod_{kw \in \partial(jv)\setminus iu} (1- (l+1)\hat Q_{kw\to jv})}
{\sum_{l=0}^{Q-1}(-1)^l {Q\choose{l+1}} \prod_{kw\in \partial(jv)\setminus iu} (1- (l+1)\hat Q_{kw\to jv})},
\label{spcol1}
\end{equation}
and  
\begin{equation}
 \hat Q_{jv\to iu}(*) = 1-Q \hat Q_{jv\to iu}.
\label{spcol2}
\end{equation}
Now, recall that the degrees of nodes of the induced graph are Poisson$(2\alpha)$ integers. {\it From now on we set} $c\equiv 2\alpha$. 
We solve \eqref{spcol1}, \eqref{spcol2}
under the assumption that the messages emanating from node $jv$ are i.i.d. copies of a random variable $\hat Q_v$ with a distribution
that depends only on the position $v$. Fix a position $z$,   
pick an integer $d$ according to a Poisson$(c)$, and pick integers $k_1,\cdots, k_d$ i.i.d. uniform in $\{-w+1,\cdots, w-1\}$. 
We have\footnote{Interpreted as equalities between random variables.}
\begin{equation}\label{qcoldistr}
\hat Q_{z} = \frac{\sum_{l=1}^Q (-1)^l {Q-1\choose{l}} \prod_{1=1}^d (1- (l+1)\hat Q_{z+k_i}^{(i)})}
{\sum_{l=0}^{Q-1}(-1)^l {Q\choose{l+1}} \prod_{i=1}^d (1- (l+1)\hat Q_{z+k_i}^{(i)})},
\end{equation}
and 
\begin{equation}
 \hat Q_{z}(*) = 1-Q \hat Q_{z}.
\end{equation}
Here, the relevant boundary conditions are taken into account by setting $\hat Q_z =0$ for $z\leq -\frac{L}{2}$ and $z>\frac{L}{2}$.
These equations are solved numerically by population dynamics. This then allows to compute the ensemble average of the 
complexity,
\begin{equation}\label{avcompcol}
\Sigma_{L,w}(c) = \frac{1}{L}\sum_{z=-\frac{L}{2}+1}^{\frac{L}{2}}(\mathbb{E}[\Sigma_z^{\rm var}]
 - \frac{c}{2}\mathbb{E}[\Sigma_z^{\rm edge}]),
\end{equation}
with 
\begin{align}
 \Sigma_z^{\rm var} & =  \ln\biggl\{\sum_{l=0}^{Q-1} (-1)^l{{Q}\choose{l+1}}\prod_{i=1}^{d}(1-(l+1)
\hat Q_{z+k_i}^{(i)})\biggr\} \label{vcol},\\
 \Sigma_z^{\rm edge}& = \ln\bigl\{1-Q \hat Q_z^{(1)}\hat Q_{z+k}^{(2)}\bigr\}\label{ccol}.
\end{align}
\begin{figure}[htp]
\begin{centering}
\input{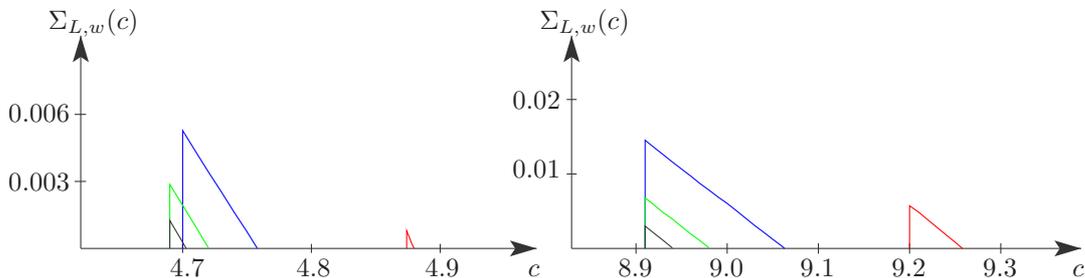}
\caption{\small Average complexity for the $[1000,Q,c,3,L]$ ensembles with $Q=3$ (left) and $Q=4$ (right). Here $L=10$, $20$, $40$, $80$ from right to left.
See corresponding table \ref{tablecoloring} for numerical values of thresholds.}
\label{qcol-comp}
\end{centering}
\end{figure}

The numerical results are similar to those for the coupled $K$-SAT model. Figure~\ref{qcol-comp}
shows that the complexity is positive in an interval $[c_{{\rm SP}, w, L}, c_{s, w, L}]$ which signals the existence of exponentially 
many pure states. Beyond $c_{s, w, L}$ the complexity becomes negative which means that the graph instances are not colorable w.h.p.
Table \ref{tablecoloring} gives the values of these thresholds. Again, we observe that $c_{s,w,L}\downarrow c_s$ as $L$ increases, and that threshold saturation takes place,
namely $c_{{\rm SP}, w, L}\to c_{s}$ as $L>>w>>1$.
\begin{table}
\centering
\begin{tabular}{c c c c c c c c c c}
\hline\hline 
$Q$                  & &   $3$                &  $3$              & &  $4$                &   $4$           \\
\hline
$\text{individual}$  & &  $c_{\rm SP}$        &  $c_s$            & &  $c_{\rm SP}$       &   $c_s$         \\
\hline
$L=1$                & &  $4.42$              &  $4.69 $          & &  $8.27$             &   $8.90$        \\ 
\hline\hline
$\text{coupled}$     & & $c_{{\rm SP}, L, 3}$ &  $c_{s, L, 3}$    & &  $c_{{\rm SP},L,3}$ &   $c_{s, L, 3}$ \\
\hline
$L=10$               & &  $4.874$              &  $4.879 $          & & $9.20$              &  $9.25$         \\
$L=20$               & &  $4.70$              &  $4.75 $          & & $8.91$              &  $9.06$         \\
$L=40$               & &  $4.69$              &  $4.72 $          & & $8.91$              &  $8.98$         \\
$L=80$               & &  $4.69$              &  $4.70 $          & & $8.91$              &  $8.93$         \\
\hline
\end{tabular}
\caption{{\small Thresholds computed by population dynamics for the individual and coupled ensembles for ensembles $[1000, Q, c, 3, L]$.}}
\label{tablecoloring}
\end{table} 

\subsection{Survey propagation for large $Q$}

\noindent{\bf Fixed point equations.} The large $Q$ analysis for the individual system \cite{colorpeople} is extended to the coupled model.
In this regime $c_s$ is $O(Q\ln Q)$ therefore it is natural to set
\begin{equation}
c= \hat c \, Q\ln Q,
\end{equation}
and to analyze \eqref{qcoldistr} for $\hat c$ fixed. In this limit the node degrees concentrate on $d\approx Q\ln Q$ with a fluctuation 
$O(\sqrt{Q\ln Q})$. 
Therefore, we assume that in the expression
\begin{equation}\label{productcol}
\prod_{i=1}^d (1- (l+1) \hat Q_{z+k_i}^{(i)}) = e^{\sum_{i=1}^d \ln\{1- (l+1) \hat Q_{z+k_i}^{(i)}\}},
\end{equation}
we can replace the sum over a large number of terms by its average,
\begin{equation}
e^{\frac{\hat c Q\ln Q}{2w-1}\sum_{k=-w+1}^{w-1}\mathbb{E} [\ln\{1 -(l+1) \hat Q_{z+k}\}]}. 
\end{equation}
In this expression the average over $d$ and $k_1,\dots, k_d$ has been carried out and the remaining expectation is over $\hat Q_z$.
Since the product \eqref{productcol} enters in \eqref{qcoldistr}, we conclude that $Q_z$ concentrates on its average.  Thus, setting
\begin{equation}
 \mathbb{E}[\hat Q_{z}] = \hat q_z
\end{equation}
for the average warning probability,
we find
\begin{equation}\label{intermediate}
 \hat q_z = \frac{\sum_{l=1}^Q (-1)^l {Q-1\choose{l}} \exp\bigl(\frac{\hat c Q\ln Q}{2w-1}\sum_{k=-w+1}^{w-1}\ln\{1 -(l+1) \hat q_{z+k}\}\bigr)}
{\sum_{l=0}^{Q-1}(-1)^l {Q\choose{l+1}} \exp\bigl(\frac{\hat c Q\ln Q}{2w-1}\sum_{k=-w+1}^{w-1}\ln\{1 -(l+1) \hat q_{z+k}\}\bigr)}.
\end{equation}
It can be checked self-consistently from the solutions of the fixed point equation that $\hat q_z=O(Q^{-1})$ and therefore for $l=O(1)$ the log in 
\eqref{intermediate} can be linearized, while the terms with higher $l$ are damped. Linearizing the log the sum over $l$ can be performed,   
and working with rescaled variables
\begin{equation}\label{rescaled-col}
 \theta_z\equiv (\hat c Q\ln Q) \hat q_z,
\end{equation}
we find
\begin{equation}\label{qcolfixedpoint}
 \theta_z = \hat c Q\ln Q
\frac{\sum_{l=1}^Q (-1)^l {Q-1\choose{l}} \exp\bigl(-\frac{l+1}{2w-1}\sum_{k=-w+1}^{w-1} \theta_{z+k})\bigr)}
{\sum_{l=0}^{Q-1}(-1)^l {Q\choose{l+1}} \exp\bigl(-\frac{l+1}{2w-1}\sum_{k=-w+1}^{w-1} \theta_{z+k})\bigr)}.
\end{equation}
Let
\begin{equation}
 F_Q(\theta) =Q\ln Q \,e^{-\theta} \frac{(1-e^{-\theta})^{q-1}}{1-(1-e^{-\theta})^{q}}.
\end{equation}
The fixed point equation takes the simple form 
\begin{equation}\label{one-dim-Q}
\theta_z = \hat c F_Q\bigl(\frac{1}{2w-1}\sum_{k=-w+1}^{w-1} \theta_{z+k}\bigr).
\end{equation}
These equations must be solved with the boundary condition  $\theta_z=0$ for $z\leq -\frac{L}{2}$ and $z>\frac{L}{2}$ in order
to find the average warning probability profiles.
\vskip 0.15cm
\noindent{\bf Average complexity.} Proceeding as above we find from \eqref{vcol}, \eqref{ccol} 
\begin{align}
 \mathbb{E}[\Sigma_z^{\rm var}]  & = \ln\biggl\{\sum_{l=0}^{q-1} (-1)^l {Q\choose l+1} \label{varcol}
e^{\frac{\hat c Q\ln Q}{2w-1}\sum_{k=-w+1}^{w-1} \ln\{1-(l+1)\hat q_{z+k}\}}\biggr\}, \\
 \mathbb{E}[\Sigma_z^{\rm cons}] & = \frac{1}{2w-1}\sum_{k=-w+1}^{w-1}\ln\bigl\{1-Q \hat{q}_z\hat{q}_{z+k}\bigr\}. \label{conscol}
\end{align}
As before since $\hat q_z= O(Q^{-1})$ we can linearize the log in the exponential of the first equation and the one in the second equation. 
Then working with the rescaled variables \eqref{rescaled},
straightforward algebra leads to an average complexity \eqref{avcompcol} given by
\begin{align}\label{coloringcomplexity}
\Sigma_{L,w}(c)  = \frac{1}{L}\sum_{z=-\frac{-L}{2}+1}^{\frac{L}{2}}\biggl[\ln&\bigl\{1- (1- e^{-\frac{1}{2w-1}\sum_{k=-w+1}^{w-1} \theta_{z+k}})^Q\bigr\}
\nonumber \\ &
+ \frac{\theta_z}{2\hat c\ln Q}\frac{1}{2w-1}\sum_{k=-w+1}^{w-1}\theta_{z+k}\biggr].
\end{align}
This functional is defined for profiles that satisfy the boundary condition $\theta_z=0$ for $z\leq -\frac{L}{2}$ and $z>\frac{L}{2}$.
The consistency of our approximations can be checked by noticing that the stationary points 
of \eqref{coloringcomplexity} are precisely given by the solutions of the fixed point equation \eqref{one-dim-Q}.

\subsection{Solutions for large Q}

The discussion is quite similar to the case of $K$-SAT so we will be brief. By setting $L=w=1$, we recover the fixed point equation of the individual system
which is $\theta = \hat c F_Q(\theta)$. Fixed points are stationary points of 
 the complexity as a function of $\theta$,
\begin{equation}
 \Sigma_{1,1}(\hat c, \theta) = \ln\bigl\{1-(1-e^{-\theta})^Q\bigr\} + \frac{\theta^2}{2\hat c\ln Q} 
\label{single-col-comp}
\end{equation}
This function controls the existence and nature of the fixed points. 
At $\theta=0$ it has 
a minimum for all $\hat c$ which corresponds to a trivial stable fixed point and a vanishing complexity
$\Sigma_{1,1}(\hat c)=0$. It is unique for 
$\hat c<\hat c_{\rm SP} \doteq 1$. For $\hat c> \hat c_{SP}$ a second minimum appears. This corresponds to a stable fixed point solution
which form the branch 
$\theta_{\rm mst}\approx \hat c\ln Q +\ln(Q\ln Q)$. Replacing in \eqref{single-col-comp} we find
$\Sigma_{1,1}(\hat c) \approx (1-\frac{\hat c}{2})\ln Q$. This is positive in the interval
$\hat c \in [1,2]$, and looses its meaning beyond $\hat c_s\doteq 2$ which is the static phase transition 
threshold. 
\vskip 0.25cm
\begin{figure}[htp]
\begin{centering}
\input{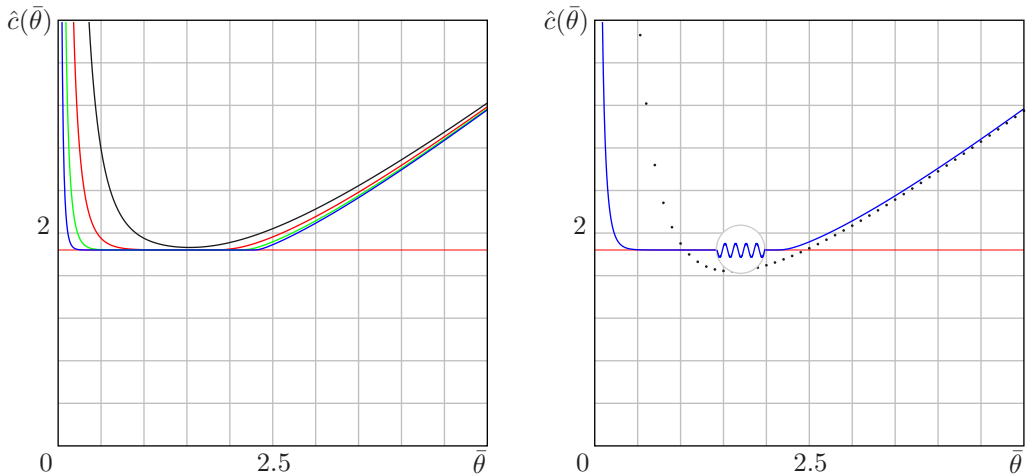}
\caption{{\small {\it Left}: sequence of van der Waals curves for $K=5$, $w=3$ and $L= 10, 20, 40, 80$ (top to bottom). 
They converge to the individual system curve for $\theta\in [\theta_{\rm mst}, +\infty]$. 
{\it Right}: a magnification of the plateau region for $L=40$ shows the fine structure. The dotted curve is the individual system curve 
and the red 
line corresponds to the phase transition threshold $\hat{\alpha}_s=1.840980$.}}
\label{coupled-qcol-curves}
\end{centering}
\end{figure}
\vskip 0.25cm
Let us now turn to the coupled model. The same heuristic arguments than for $K$-SAT hold. In the regime 
$L>>w>>1$ and with asymmetric boundary conditions $\theta_z\to 0$ for $z\to -\infty$ and $\theta\to \theta_{\rm mst}$
for $z\to +\infty$ we take a kink-like ansatz for the solutions of \eqref{one-dim-Q}. Their total average complexity is 
given by a convex combination
of, $\Sigma_{1,1}(\hat c, \theta=0)=0$ and $\Sigma_{1,1}(\hat c, \theta_{\rm mst}) = (1-\frac{\hat c}{2})\ln Q$, with weights 
determined by the location of the kink center $\xi L$. We have 
$\overline\theta \equiv \frac{1}{L}\sum_{z=-\frac{L}{2}+1}^{\frac{L}{2}}\theta_z = (\frac{1}{2} - \xi)\theta_{\rm mst}$
and $\Sigma_{kink}\approx \frac{\overline\theta}{\theta_{\rm mst}} (1-\frac{\hat c}{2})\ln Q$. The stable kink 
fixed point profile corresponds to
$\overline \theta=0$ and $\{\theta_z=0\}$ for all $\hat c<2$ which means that the complexity vanishes for $\hat c <2$. Thus 
within this approximation the SP threshold saturates to the static phase transition threshold.
\vskip 0.25cm
\begin{table}
\centering
\begin{tabular}{c c c c c c}
\hline\hline 
 Q & $5$ & &   $7$  & &   $10$ \\
\hline
$\hat c_{s}$ & $1.840980$ & & $1.911260$ & & $1.635790$ \\
\hline
$\hat{c}_{\text{SP}}$   &  $ 1.6411666$ & & $ 1.651565$ & &  $1.949869$ \\
\hline
$\hat{c}_{\text{SP},80,2}$   &  $1.839709 $ & & $1.906734 $ & &  $1.939527$ \\
$\hat{c}_{\text{SP}, 80, 3}$   &  $1.840978$ & & $1.911213 $ & &  $1.949606$ \\
$\hat{c}_{\text{SP}, 80, 4}$   &  $1.840980$ & & $1.911260 $ & &  $1.949865$ \\
\hline
\end{tabular}
\caption{{\small SP Thresholds of the 
individual $(L=w=1)$ and coupled ensembles ($L= 80, w= 2, 3, 4)$ are found from the van der Waals curves. 
Threshold values $\hat c_s$ are from population dynamics. For $Q=10$ we clearly 
see that the SP threshold saturates to $\hat{c}_s$ from below as $w$ increases.}}
\label{tablelargeq}
\end{table} 
\begin{figure}
\begin{centering}
\input{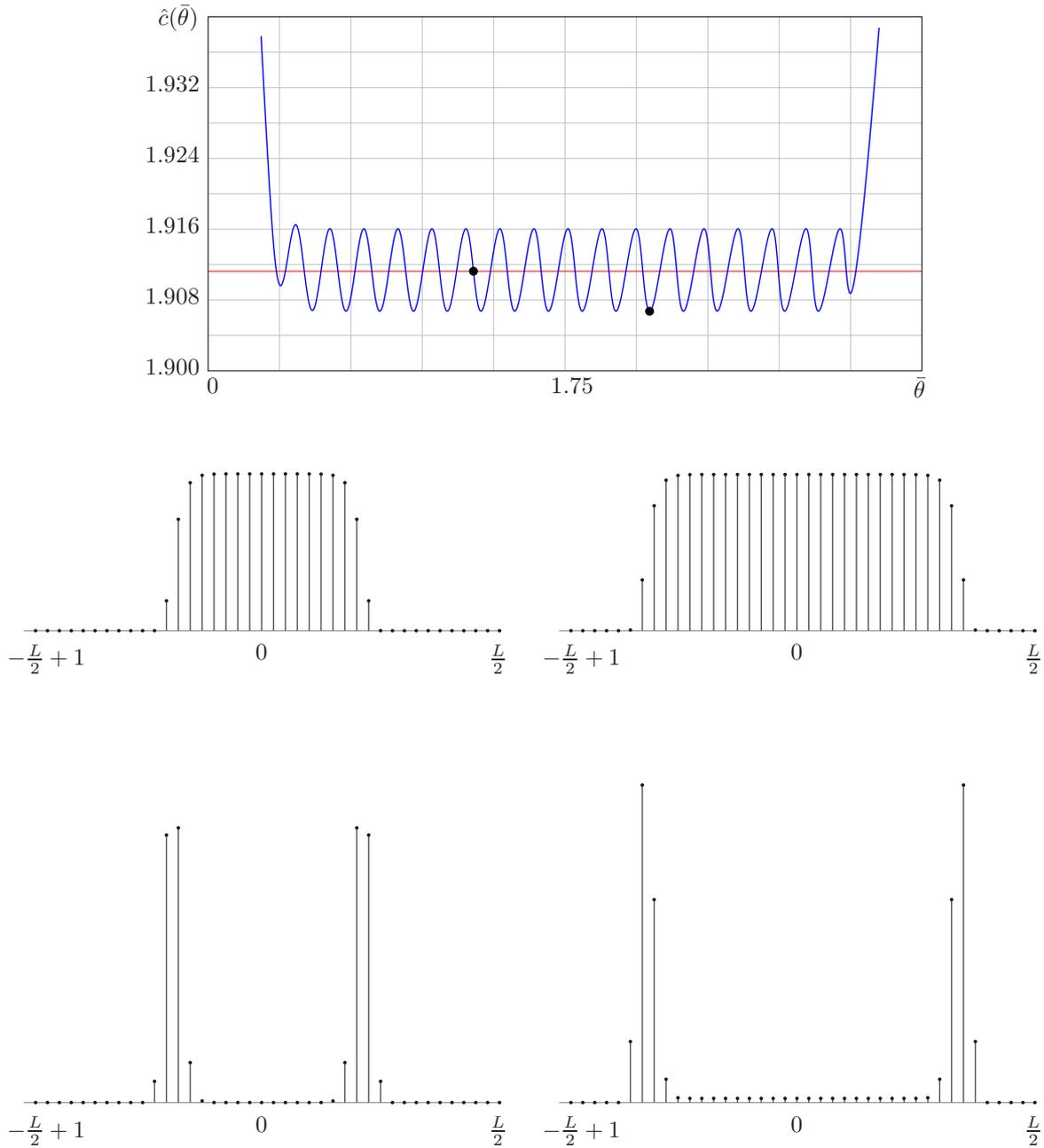}
\caption{{\small van der Waals curves of the  coupled system (top) for $Q=7$, $w=2$ and $L=40$. 
Middle and bottom left are the warning and complexity profiles corresponding to the point with coordinates
 $( \bar{\theta}_l, \hat{c}_l)=(1.30125, 1.91260)$. Notice that $\hat c_l$ is near the phase transition threshold 
so that the complexity nearly vanishes except at the transition regions of the kink. The total average complexity nearly vanishes. 
Middle and bottom right are the warning and complexity density profiles corresponding to 
the point $(\bar{\theta}_r, \hat{c}_r)=(2.165,1.906734)$. For the complexity profile at the bottom right, 
we see that in the middle region of the profile the height of the complexity density is 
$(1-\frac{\hat c_r}{2})\approx 0.047$ and
the total average complexity is 
approximately given by $\frac{\overline\theta_r}{\theta_{\rm mst}} (1-\frac{\hat c_r}{2})\ln Q$.}}
\label{compl-prof-qcol}
\end{centering}
\end{figure}
We solve \eqref{one-dim-Q} numerically with symmetric boundary conditions which enforce the profile to vanish
at the end points of the chain. There exists a family of solution profiles 
$(\hat c(\overline\theta), \{\theta_z(\overline\theta)\})$ parametrized by the total
average warning probability $\overline\theta \equiv \frac{1}{L}\sum_{z=-\frac{L}{2}+1}^{\frac{L}{2}}\theta_z$. Figure
\ref{coupled-qcol-curves} illustrates a sequence of van der Waals curves $\hat c(\overline\theta)$ and Table \ref{tablelargeq}
gives numerical values of their minima which determines $\hat c_{{\rm SP}, w, L}$. 

Finally, Figure \ref{compl-prof-qcol} displays warning and complexity profiles for $\hat c$ in the wiggles region. The results
are analogous to those of $K$-SAT.

\section{Proofs of theorems \ref{therm-limit-1} and \ref{therm-limit-2}}\label{proofs}

In this section we sketch the proofs of theorems \ref{therm-limit-1} and \ref{therm-limit-2}. The proof of theorem \ref{therm-limit-1}
is straightforward and does not depend on the details of the model at hand. On the other hand that of theorem \ref{therm-limit-2}
has to adapted for each model at hand. 

\subsection{Proof of theorem \ref{therm-limit-1}}

Recall that for the Hamiltonian of the open chain $\mathcal{H}_{\rm cou}(\underline x)$ in \eqref{hamiltonian}, $\underline x = (x_{iz})$ with
$(i,z)\in \cup_{z=-\frac{L}{2}+1,\cdots,\frac{L}{2}+w-1} V_z$. It will be convenient to set 
$\underline x=(\underline x^\prime, \underline x^{\prime\prime})$ where $\underline x^\prime = (x_{iz}; z=-\frac{L}{2}+1,\dots, \frac{L}{2})$ and 
$\underline x^{\prime\prime} = (x_{iz}; z=\frac{L}{2}+1,\dots, \frac{L}{2}+w-1)$.
Recall also that the Hamiltonian $\mathcal{H}_{\rm cou}^{\rm per}(\underline x^\prime)$
of the periodic chain is 
given by the same expression  \eqref{hamiltonian} with $\underline x^\prime=(x_{iz}; z=\frac{L}{2}+1,\dots, \frac{L}{2})$. 
Therefore the difference between the two Hamiltonians only comes 
because of the terms $\psi_{cz}(\underline x_{\partial cz})$ with $z=\frac{L}{2}-w+2, \cdots , \frac{L}{2}$.  In other words,
\begin{equation}\label{2Mw}
\vert \mathcal{H}_{\rm cou}(\underline x^\prime, \underline x^{\prime\prime}) - \mathcal{H}_{\rm cou}^{\rm per}(\underline x^\prime)\vert \leq Mw,
\end{equation}
for all $\underline x^{\prime\prime}$.
Now, from the fact that
\begin{equation}
\mathcal{H}_{\rm cou}^{\rm per}(\underline x^\prime) - Mw 
\leq \mathcal{H}_{\rm cou}(\underline x^\prime, \underline x^{\prime\prime}) 
\leq  \mathcal{H}_{\rm cou}^{\rm per}(\underline x^\prime) + Mw,
\end{equation}
and by taking the $\min$, dividing by $NL$ and taking the expectation, we deduce
\begin{equation}
e_{N, L, w}^{\rm per}(\alpha)- \frac{\alpha w}{L} 
\leq e_{N, L, w}(\alpha) 
\leq  e_{N, L, w}^{\rm per}(\alpha) + \frac{\alpha w}{L}, 
\end{equation}
which proves the theorem.

\subsection{Proof of Theorem \ref{therm-limit-2}}\label{interpolations}

As explained in section \ref{generalsetting} the proof that the limit exists, is continuous and non-decreasing, for the individual models
is provided in \cite{BGP09} and is essentially the same for the coupled periodic chain. Here we prove the equality
of the two limits \eqref{equality-two-limits}. 
The following notation is convenient. For a given graph instance $G$ (from some ensemble) 
we call $\mathcal{H}_G(\underline x)$ the 
corresponding Hamiltonian. It always consists, as in \eqref{hamiltonian}, of a sum of 
terms $1-\psi_{cz}(x_{\partial cz})$ over constraints $(c,z)\in G$ . The ground state energy is equal to $\min_{\underline x}\mathcal{H}_G(\underline x)$.
To set up suitable interpolation procedures, it is convenient to first
define three extra ensembles.

\vskip 0.15cm
\noindent{\it The ``connected'' ensemble}. This is essentially the individual $[N,K,\alpha]$ ensemble scaled by $L$. 
We have a set of $LN$ variable nodes
and a set of $LM$ constraint nodes. Each constraint node has $K$ edges connected u.a.r. to variable nodes. 
Expectations with respect to this ensemble are denoted by
$\mathbb{E}_{\rm conn}$. Because of the existence of the limit we have  
\begin{equation}
\lim_{N\to +\infty}\frac{1}{LN}\mathbb{E}_{\rm conn}[\min_{\underline x}\mathcal{H}_G(\underline x)] 
= \lim_{N\to +\infty}e_N(\alpha),
\label{first}
\end{equation}
for any fixed $L$.

\vskip 0.15cm
\noindent{\it The ``disconnected'' ensemble}. This is a variant of 
the individual $[N,K,\alpha]$ ensemble replicated $L$ times. We place at positions $z= -\frac{L}{2}+1,\ldots,\frac{L}{2}$, $L$ disjoint sets of variable nodes 
$V_z$ containing each $N$ nodes. Each node from the set of $LM$ constraint nodes is affected u.a.r. to a position 
$z= -\frac{L}{2}+1,\ldots,\frac{L}{2}$. Note that the set $\tilde C_z$ of constraint nodes at position $z$ has cardinality $M_z\sim {\rm Bi}(LM, \frac{1}{L})$. 
Each node from $\tilde C_z$ has $K$ edges that are
connected u.a.r. to nodes in $V_z$.
Expectations are denoted by $\mathbb{E}_{\rm disc}$.
Since each $M_z$ is concentrated on $M$ with a fluctuation $O(\sqrt M)$, we can show by an argument similar to the 
proof of theorem \ref{therm-limit-1} that 
\begin{equation}
\frac{1}{LN}\mathbb{E}_{\rm disc}[\min_{\underline x}\mathcal{H}_G(\underline x)] = e_N(\alpha) + O(N^{-1/2}),
\label{second}
\end{equation}
where $O(N^{-1/2})$ is uniform in $L$.

\vskip 0.15cm
\noindent{\it The ``ring'' ensemble}. This is a variant of the periodic chain in section \ref{generalsetting}.
We place at positions $z= -\frac{L}{2}+1,\ldots,\frac{L}{2}$, $L$ disjoint sets of variable nodes 
$V_z$, each containing $N$ nodes. 
Now we have a set of $LM$ constraint nodes. Each constraint node is affected to a position $z= -\frac{L}{2}+1,\ldots,\frac{L}{2}$ u.a.r.
and (say the position is $z$) its $K$ edges are connected u.a.r. to the set of variables $\cup_{k=0}^{w-1}{V}_{z+k \mod L}$. Note that the sets
$\tilde C_z$ of constraint nodes have cardinalities $M_z\sim {\rm Bi}(LM, \frac{1}{L})$.
We denote by $\mathbb{E}_{\rm ring}$ the expectation with respect to this ensemble.
Since each $M_z$ is concentrated on $M$ with a fluctuation $O(\sqrt M)$, an argument similar to the 
proof of theorem \ref{therm-limit-1} shows that 
\begin{equation}
 \frac{1}{LN}\mathbb{E}_{\rm ring}[\min_{\underline x}\mathcal{H}_G(\underline x)] = e_{N, L, w}^{\rm per}(\alpha) + O(N^{-1/2}),
 \label{third}
\end{equation}
where $O(N^{-1/2})$ is uniform in $L$ (and depends on $w$).
\vskip 0.15cm
We will show
\begin{equation}
\mathbb{E}_{\rm conn}[\min_{\underline x}\mathcal{H}_G(\underline x)] 
\leq 
\mathbb{E}_{\rm ring}[\min_{\underline x}\mathcal{H}_G(\underline x)] 
\leq 
\mathbb{E}_{\rm disc}[\min_{\underline x}\mathcal{H}_G(\underline x)],
\label{two-bounds}
\end{equation}
which allows to conclude the proof of the theorem by using \eqref{first}, \eqref{second}, \eqref{third}.
\vskip 0.15cm
\noindent{\it Left inequality in \eqref{two-bounds}}. We 
build a sequence of interpolating ``$r$-ensembles'', $r=0,\ldots, LM$, interpolating between the ring ($r=0$) and connected 
($r=LM$) ensembles.  We have two sets of $LM$ constraint and $LN$ variable nodes. The variable nodes 
are organized into $L$ disjoint sets $V_z$ each containing $N$ nodes, placed along the positions $z= -\frac{L}{2}+1,\ldots,\frac{L}{2}$.
Expectation with respect to the $r$-ensemble is denoted by $\mathbb{E}_r$. To sample a graph $G_r$ from this 
ensemble we first take $r$ nodes - called type 1 - from the set of $LM$ constraint nodes. Each one has $K$ edges which are connected
u.a.r. to the set of $LN$ variable nodes. For the remaining $LM-r$ constraint nodes - called type 2 - we proceed as follows:
each one is affected u.a.r. to a position $z$, and its $K$ edges are then connected u.a.r. to the $wN$ variable nodes 
in $\cup_{k=0}^{w-1}{V}_{z+k \mod L}$.
We claim that for $1 \leq r \leq LM $,
\begin{align}
\mathbb{E}_r[\min_{\underline x}\mathcal{H}_{G_r}(\underline x)] \leq 
\mathbb{E}_{r-1}[\min_{\underline x}\mathcal{H}_{G_{r-1}}(\underline x)].
\label{claim1}
\end{align}
Clearly this implies the left inequality in \eqref{two-bounds}.
Let us prove this claim. Take a random graph $G_r$ and delete u.a.r. a constraint from the type 1 nodes: this yields an intermediate graph
$\tilde G$. One can go back to a random graph $G_r$ by adding back a type 1 node according to the above rules, or one can go to
a random graph $G_{r-1}$ by adding back a type 2 node according to the above rules. 
We will prove that conditioned on any realization of $\tilde G$ we have
\begin{align}
\mathbb{E}_r[\min_{\underline x}\mathcal{H}_{G_r}(\underline x) \mid \tilde G] \leq 
\mathbb{E}_{r-1}[\min_{\underline x}\mathcal{H}_{G_{r-1}}(\underline x) \mid \tilde G].
\label{claim4}
\end{align}
Claim \eqref{claim1} follows by averaging over 
$\tilde G$. We now prove \eqref{claim4}  for K-SAT and Q-coloring separately.

\noindent{\underline{K-SAT}}: Consider the set of "optimal assignments" $\underline x$ that minimize
$\mathcal{H}_{\tilde G}(\underline x)$. We say that a variable is {\it frozen} iff
it takes the same value for all optimal assignments. We call $\mathcal{F}$ the set of variable nodes with frozen variables and $\mathcal{F}_z = \mathcal{F}\cap V_z$.
Now consider adding a new constraint node $n$ to the graph $\tilde G$. This will cost an extra energy iff the 
node $n$ connects only to frozen variable nodes {\it and} does not satisfy them. For such an event we have 
\begin{equation}
\min_{\underline x}\mathcal{H}_{\tilde G\cup n}(\underline x)-\min_{\underline x}\mathcal{H}_{\tilde G}(\underline x)=1.
\end{equation}
When the node $n$ is connected u.a.r. to the $LN$ variable nodes ($n$ is type 1) this event has probability
$\frac{1}{2^K}(\frac{\vert \mathcal{F} \vert}{LN})^K$.
Thus
\begin{align}
\mathbb{E}_r[\min_{\underline x}\mathcal{H}_{G_r}(\underline x) \mid \tilde G] - 
\min_{\underline x}\mathcal{H}_{\tilde G}(\underline x)
= \frac{1}{2^K} \biggl(\frac{\vert \mathcal{F} \vert}{LN}\biggr)^K.
\end{align}
Similarly when the node $n$ is affected u.a.r. to a position $z$ and then connected u.a.r. to 
$\cup_{k=0}^{w-1}{V}_{z+k \mod L}$ ($n$ is type 2) we get
\begin{align}
\mathbb{E}_{r-1}[\min_{\underline x}\mathcal{H}_{G_{r-1}}(\underline x) \mid \tilde G] - 
\min_{\underline x}\mathcal{H}_{\tilde G}(\underline x)
= 
\frac{1}{L}\sum_{z=-\frac{L}{2}+1}^{\frac{L}{2}}\frac{1}{2^K} 
\bigl(\frac{1}{wN}\sum_{k=0 }^{w-1} \vert \mathcal{F}_{z+k{\rm mod} L} \vert\bigr)^K.
\end{align}
Claim \eqref{claim4} follows from the last two equations, convexity of $x^K$ for $x\geq 0$, and 
\begin{equation}
\vert\mathcal{F}\vert= 
\sum_{z=-\frac{L}{2}+1}^{\frac{L}{2}}\frac{1}{w}\sum_{k=0 }^{w-1} \vert \mathcal{F}_{z+k\mod L} \vert.
\end{equation} 

\noindent{\underline{Q-Coloring}}: The proof is similar. Consider the set of "optimal colorings" that minimize $\mathcal{H}_{\tilde G}(\underline x)$. 
We define an equivalence relation between variable nodes: we say that two nodes are equivalent iff  
their two colors are identical for all optimal assignments. Let
$\mathcal{F}^j$, $1 \leq j \leq J$, be the equivalence classes of 
nodes and let $\mathcal{F}_z^j=\mathcal{F}^j \cap \mathcal{V}_z$. Now, assume we add a random constraint node
$n$ 
to $\tilde G$. We have 
$\min_{\underline x}\mathcal{H}_{\tilde G\cup n}(\underline x)
-\min_{\underline x}\mathcal{H}_{\tilde G}(\underline x)=1$ only when  $n$ chooses its two variables from the same equivalence class; otherwise 
the energy difference is zero. Thus we obtain
\begin{align}
\mathbb{E}_r[\min_{\underline x}\mathcal{H}_{G_r}(\underline x) \mid \tilde G ] - 
\min_{\underline x}\mathcal{H}_{\tilde G}(\underline x)
=  \sum_{i=1}^J\biggl(\frac{\vert \mathcal{F}^j \vert}{LN}\biggr)^2,
\end{align}
and 
\begin{align}
\mathbb{E}_{r-1}[\min_{\underline x}\mathcal{H}_{G_{r-1}}(\underline x) \mid \tilde G ] - 
\min_{\underline x}\mathcal{H}_{\tilde G}(\underline x)= 
\frac{1}{L}\sum_{z=-\frac{L}{2}}^{\frac{L}{2}}\sum_{j=1}^J \bigl(\frac{1}{wN} \sum_{k=0 }^{w-1} 
\vert \mathcal{F}_{z + k\mod L}^j \vert \bigr)^2.
\end{align}  
Claim \eqref{claim4} then follows from the last two equations and convexity of $x^2$. 

\vskip 0.15cm 

\noindent{\it Right inequality in \eqref{two-bounds}}.
We construct new $r$-ensembles, $r=0,\ldots,LM$ that now interpolate between the disconnected ($r=0$) and the ring ($r=LM$) ensembles.
A random graph $G_r$ is constructed as follows. We have a set of $LM$ constraint nodes and a set of $LN$ variable nodes 
organized into $L$ disjoint sets $V_z$ each containing $N$ nodes, placed along positions $z$. We first take $r$ constraint nodes, called type 1.
Each of them is affected u.a.r. to a position $z$, and its $K$ edges are connected u.a.r. to variable nodes in $V_z$.
Next, the remaining $LM-r$ constraints 
nodes - called type 2 - are each affected u.a.r. to a position $z$ and its $K$ edges are 
connected u.a.r. to $wN$ nodes in $\cup_{k=0}^{w-1} V_{z+k\mod L}$.  Note that at each position there are ${\rm Bi}(r, \frac{1}{L})$
type 1 nodes and ${\rm Bi}(LM-r,\frac{1}{L})$ type 2 nodes, so in total there are ${\rm Bi}(LM, \frac{1}{L})$ constraint nodes. 
Similarly to the previous interpolation we will prove 
\begin{align}
\mathbb{E}_{r}[\min_{\underline x}\mathcal{H}_{G_r}(\underline x)] 
\leq 
\mathbb{E}_{r-1}[\min_{\underline x}\mathcal{H}_{G_{r-1}}(\underline x)]. 
\label{as-before}
\end{align} 
This inequality implies the upper bound in \eqref{two-bounds}.
To prove \eqref{as-before}, as before, we consider the random graph $\tilde G$ obtained by deleting u.a.r. a type 1 node from $G_r$.
From $\tilde G$ one gets a random graph $G_r$ by adding back a type 1 node, or one gets a graph $G_{r-1}$ by adding back a type 2 node instead.
We first prove that 
\begin{align}
\mathbb{E}_r[\min_{\underline x}\mathcal{H}_{G_r}(\underline x) \mid \tilde G] \leq 
\mathbb{E}_{r-1}[\min_{\underline x}\mathcal{H}_{G_{r-1}}(\underline x) \mid \tilde G],
\label{claim5}
\end{align}
and then by averaging over graphs $\tilde G$ we get \eqref{as-before}.
Let us briefly sketch the derivation of \eqref{claim5}.

\noindent{\underline{$K$-SAT}}: 
We use the same sets $\mathcal{F}_z$ of frozen variables at position $z$ corresponding to the ground state configurations of $\mathcal{H}_{\tilde G}(\underline x)$. 
We have 
\begin{equation}
\mathbb{E}_{r}[\min_{\underline x}\mathcal{H}_{G_{r}}(\underline x) \mid \tilde G]  - \min_{\underline x}\mathcal{H}_{\tilde G}(\underline x) 
= 
\frac{1}{L} \sum_{z=-\frac{L}{2}+1}^{\frac{L}{2}} \frac{1}{2^K} \biggl(\frac{\vert \mathcal{F}_z \vert}{N}\biggr)^K,
\end{equation}
and
\begin{align}
\mathbb{E}_{r-1}[\min_{\underline x}\mathcal{H}_{G_{r-1}}(\underline x) \mid \tilde G]  - \min_{\underline x}\mathcal{H}_{\tilde G}(\underline x) 
= \frac{1}{L}\sum_{z=-\frac{L}{2}+1}^{\frac{L}{2}}\frac{1}{2^K} \bigl(\frac{1}{wN} \sum_{k=0 }^{w-1} 
\vert \mathcal{F}_{z+k{\rm mod} L} \vert \bigr)^K.
\end{align}
Estimate \eqref{claim5} then follows by the convexity of the function $x^K$ for $x\geq 0$.

\noindent{\underline{$Q$-coloring}}: We use the same equivalence relation between variable nodes and sets $\mathcal{F}^j_z$. We have 
\begin{align}
\mathbb{E}_r[\min_{\underline x}\mathcal{H}_{G_r}(\underline x) \mid \tilde G ] - 
\min_{\underline x}\mathcal{H}_{\tilde G}(\underline x)
=  \frac{1}{L}\sum_{z=-\frac{L}{2}+1}^{\frac{L}{2}}\sum_{i=1}^J\biggl(\frac{\vert \mathcal{F}^j_z \vert}{N}\biggr)^2,
\end{align}
and 
\begin{align}
\mathbb{E}_{r-1}[\min_{\underline x}\mathcal{H}_{G_{r-1}}(\underline x) \mid \tilde G ] - 
\min_{\underline x}\mathcal{H}_{\tilde G}(\underline x)= 
\frac{1}{L}\sum_{z=-\frac{L}{2}+1}^{\frac{L}{2}}\sum_{j=1}^J \bigl(\frac{1}{wN} \sum_{k=0 }^{w-1} 
\vert \mathcal{F}_{z + k {\rm mod} L}^j \vert \bigr)^2.
\end{align}  
Again, estimate \eqref{claim5} then follows from the convexity of $x^2$. 

\section{Dynamical and condensation thresholds}\label{dyncond}

The SP formalism says nothing about the relative sizes (internal entropy) of clusters of solutions and consequently does not take into account 
which of them are "relevant" to the uniform measure over zero energy solutions. For similar reasons, it is not clear that the SP threshold has particular algorithmic significance.
These issues are partly addressed by the more elaborate 
entropic cavity method \cite{entropiccavity}, \cite{lenka}, \cite{elaborate}. 
It predicts the existence of the dynamical and condensation thresholds $\alpha_d$ and $\alpha_c$.
The dynamical threshold is believed to separate a phase ($\alpha<\alpha_d$) where the uniform measure is essentially supported on one well 
connected cluster of dominant entropy, and a phase ($\alpha_d<\alpha<\alpha_c$) where the measure is supported on an exponential number of clusters 
with equal internal entropy. For $\alpha>\alpha_c$ the measure condenses on a "handful" of clusters of dominant entropy. The condensation 
threshold is a static thermodynamic transition in the sense that the total ground state entropy has a non-analyticity
as a function of $\alpha$. The algorithmic significance of $\alpha_d$ and/or $\alpha_c$ is still unclear. But 
see \cite{ricci-tersenghi-semerjian}, \cite{coja} 
for recent related results.  

We have computed the dynamical and condensation thresholds of coupled CSP. Let us denote them 
$\alpha_{d,L,w}$ and $\alpha_{c,L,w}$ (with $w$ fixed). We observe that as $L$ increases 
$\alpha_{c,L,w}\to \alpha_c$. This observation is consistent with the following rigorous result
that we prove in appendix \ref{A}:  the thermodynamic limit of the free energy (at finite temperature) of the chain is identical 
to that of the individual model. From the free energy one can formally obtain the entropy by differentiating the free energy with respect to temperature. 
The result about the free energy then suggests that the zero temperature entropy of the chain and 
individual models have the same non-analyticity points as a function of the constraint density. The second 
important observation is that in the regime $1<<w<<L$ we find $\alpha_{d, L,w}\to \alpha_c$. Thus the dynamical 
threshold saturates towards the condensation threshold\footnote{Note that for $K=3$ we already have $\alpha_d=\alpha_c$ for the individual ensemble.}.

The dynamical and condensation thresholds are analogous to the dynamical and condensation temperatures of $p$-spin glass models 
for $p\geq 3$, and to the glassy and Kauzmann transition temperatures in structural glasses \cite{Wolynes}, \cite{Kirkpatrick}, \cite{biroli}. 
One expects that a similar saturation of the dynamical towards the condensation 
temperature holds for coupled $p$-spin glass models on complete graphs for $p\geq 3$. On the other hand, for $p=2$ the 
replica symmetry breaking transition is continuous, there is no dynamical temperature, and spatial coupling is not expected to modify the phase diagram.

Table \ref{table-general-picture} summarizes all the behaviors of the SP, SAT-UNSAT, dynamical and condensation thresholds
for the $K$-SAT problem. The situation for coloring is similar.

\begin{table}
\centering
\begin{tabular}{c c c c c c c c c c c c c c c c}
\hline\hline 
 $K$ && $\alpha_{SP}$&$\alpha_{SP,80,3}$ &  & $\alpha_d$&$\alpha_{d,80,3}$  &  & $\alpha_c$&$\alpha_{c,80,3}$ &  & $\alpha_s$&$\alpha_{s,80,3}$ \\
\hline
$3$ && $3.927$&$4.268$ &  & $3.86$&$3.86$ &  & $3.86$&$3.86$ &  & $4.267$&$4.268$  \\ 
\hline
$4$ && $8.30$ &$9.94$ &  & $9.38$ &$9.55 $ &  & $9.55$ &$9.56$ &  & $9.93 $& $10.06 $  \\
\hline
\end{tabular}
\caption{{\small Thresholds of individual and coupled $K$-SAT model for 
$L=80$ and $w=3$. The condensation and SAT-UNSAT thresholds correspond to non analyticities of the entropy and ground 
state energy and remain unchanged (for $L\to +\infty$). Already for $w=3$ the dynamical and SP thresholds saturate very close 
to $\alpha_c$ and $\alpha_s$.}}
\label{table-general-picture}
\end{table}

\section{An application of threshold saturation to algorithmic lower bounds}\label{last-section}

We briefly discuss a methodology, that uses coupled CSP ensembles, for proving lower bounds on the static phase transition threshold of individual CSP ensembles.
We illustrate it with simple examples and show how threshold saturation can help. These examples do not reach the best known
lower bounds,  but they serve well to illustrate a new methodology for attacking the problem. We keep the discussion at an informal level.

Given a CSP from an individual ensemble, one usually tries to devise an algorithm that provably
finds solutions w.h.p for $\alpha<\alpha_{\rm alg}$. This then implies 
$\alpha_{\rm alg} < \alpha_s$. Consider now the coupled ensemble, and apply the same algorithm.
Call $\alpha_{\rm alg, L, w}$ the algorithmic threshold for 
the existence w.h.p of solutions and set $\alpha_{{\rm alg}, w} = \lim_{L\to +\infty}\alpha_{\rm alg, L, w}$.
From theorems \ref{therm-limit-1} and \ref{therm-limit-2} we know that the coupled ensemble has the same static phase transition threshold
as the individual one, when $L\to +\infty$. Therefore one certainly has  the lower bound $\alpha_{\rm alg, w}< \alpha_s$.
The point is that for well chosen algorithms an improvement of the bound may occur, namely  
$\alpha_{\rm alg}<\alpha_{\rm alg, w}< \alpha_s$, and one would expect to get the best lower bounds by increasing $w$. 
A well chosen algorithm is one that shows a ''threshold improvement`` or even saturation phenomenon.
Somehow the ''seed`` provided by the reduced hardness near the boundaries should grow and propagate in the bulk.

Below we illustrate the idea with three simple peeling algorithms applied to $K$-XORSAT, $K$-SAT and $Q$-COL. 

\underline{$K$-XORSAT}. This case provides the best illustration.  
The individual model has a static phase transition at $\alpha_s$, and a clustering transition at $\alpha_{SP}$ with a complexity
counting clusters in the interval $[\alpha_{SP}, \alpha_s]$.
In appendix \ref{A} (theorem \ref{free-xorsat-limit}) we show that 
  the coupled and 
individual ensembles have the same phase transition threshold $\alpha_s$ for even $K$ (for odd $K$ the proof breaks down but the result is presumably true). 
Now consider the ''leaf removal`` algorithm. As long as there is a leaf variable node remove it, and remove the attached constraint node with its emanating edges.
If this process ends with an empty graph the instance is satisfiable. It is known that this algorithm is equivalent
to BP message passing, and the density evolution analysis 
leads to the fixed point equation
\begin{equation}
 x=(1-\exp(-\alpha K x)^{K-1}.
\label{xorsat1}
\end{equation}
Here, $x$ is interpreted as the probability (when the number of iterations goes to infinity) that a constraint node is not removed. 
There is a threshold $\alpha_{\rm lr}$
above which \eqref{xorsat1} has non-trivial fixed points (i.e, the fraction of remaining variables is positive), so we get a lower bound $\alpha_{\rm lr}<\alpha_s$. For the coupled ensemble
the density evolution analysis yields the one-dimensional fixed point equations
\begin{equation}
 x_z = \biggl\{\frac{1}{w}\sum_{l=0}^{w-1}
(1- \exp(-\frac{\alpha K}{w} \sum_{k=0}^{w-1} x_{z+k-l})
\biggr\}^{K-1},
\label{xorsat2}
\end{equation}
with boundary condition $x_z=0$ for $z$ at the boundaries. Solving for the non-trivial kink solutions numerically, we indeed observe 
$\alpha_{\rm lr}<\alpha_{\rm lr, w} <\alpha_s$. Table \ref{table-xor} shows the threshold improvement for $w=5$ and the first few values of $K$. In fact 
the leaf removal threshold even saturates $\alpha_{{\rm lr}, w}\uparrow \alpha_s$ as $w\to +\infty$. This is not surprising since for XORSAT the SP formalism leads to
the same fixed point equations (but with a different interpretation for $x$ and $x_z$) \cite{Mezard-Montanari-book}, \cite{xor-sat}. In particular there is 
a complexity $\Sigma_{\rm xorsat}(\alpha)>0$ for $\alpha\in [\alpha_{SP}, \alpha_s]$ counting the number of clusters of solutions in Hamming space with
$\alpha_{SP}=\alpha_{\rm lr}$ and $\Sigma_{\rm xorsat}(\alpha_s)=0$. Note that for large $K$ one finds $\alpha_{\rm lr} = \ln K/K + \ln\ln K/K + 1/K + o(1/K)$
and $\alpha_s= 1+o(1)$.
\begin{table}
\centering
\begin{tabular}{c c c c c c c c c  }
\hline\hline 
 K & $3$ & & $4$  & & $5$ & & $7$   \\
\hline
$\alpha_s$           &  $0.917$    & &  $0.976$    & &   $0.992$       &&  $0.999$     \\
\hline
$\alpha_{\text{lr}}$ & $0.818$ & & $0.772$ & & $0.701$ & & $0.595$  \\
\hline
$\alpha_{\text{lr, L=80, w=5}}$   &  $0.917$ & & $0.977$ & &  $0.992$ & &$0.999$  \\
\hline
\end{tabular}
\caption{{\small {\it First line}: phase transition threshold for $K$-XORSAT. 
{\it Second line}: leaf removal threshold for the the uncoupled case. {\it Third line}:  leaf removal threshold for a coupled chain with $w=5$, $L=80$.}}
\label{table-xor}
\end{table} 

\underline{$K$-SAT}.
Let us now turn to $K$-SAT and consider the ''pure literal rule`` algorithm \cite{Franco}, \cite{broder-frieze}. Consider variable nodes that have only one type 
of edge - dashed or full - attached to them. As long as there are such nodes (called ''pure``) set the variable to the value which satisfies all the attached 
constraints and remove these constraints and their edges. Continue until no ''pure`` node remains. If no constraint node remains then 
the algorithm succeeds in finding a satisfying assignment. 
This algorithm can be 
cast in a message passing form and can be analyzed by the density evolution method \cite{luby-etal}. The net result is that the pure literal rule
 succeeds w.h.p for $\alpha<\alpha_{\rm pl}$ such that 
\begin{equation}
 x=(1- \exp(-\frac{\alpha K}{2} x))^{K-1}\,.
 \label{pure-literal-fixed-point}
\end{equation}
has a unique fixed point $x=0$. 
We now take coupled instances from the ensemble defined in section \ref{generalsetting}. In order to analyze the pure literal rule 
we can think of extending the chain to $\mathbb{Z}$ with "pure" variable nodes for $z\leq -\frac{L}{2}$ and $\geq \frac{L}{2} + w$. 
The peeling of constraints attached to pure nodes will propagate inside the chain as long as $\alpha$ is not too large.
The analysis yields the one-dimensional fixed point equations
\begin{equation}
 x_z = \biggl\{\frac{1}{w}\sum_{l=0}^{w-1}
(1- \exp(-\frac{\alpha K}{2w} \sum_{k=0}^{w-1} x_{z+k-l})
\biggr\}^{K-1}
\label{pure-literal-fixed-point-cou}
\end{equation}
with boundary condition $x_z=0$ for $z$ at the boundaries.
Note that \eqref{pure-literal-fixed-point}, \eqref{pure-literal-fixed-point-cou} are the same as \eqref{xorsat1}, \eqref{xorsat2} 
with the replacement $\alpha\to \alpha/2$.
Therefore the pure literal thresholds $\alpha_{\rm pl}$, $\alpha_{{\rm pl}, w}$ for the individual and coupled ensembles are obtained just by doubling
the XORSAT thresholds. For example for $K=3$ we have $\alpha_{\rm pl}\approx 1.636 <\alpha_{\rm pl, w=5, L=80}\approx1.835< \alpha_s\approx 4.26$, a modest improvement.
Interestingly when $K\to +\infty$ we have 
\begin{equation}
\alpha_{\rm pl}\doteq \frac{2\ln K}{K} {\rm ~~but~~} \alpha_{\rm pl, w}\to 2 {\rm ~~ as ~~} w\to +\infty
\end{equation}
Of course this is still a ridiculous lower
bound since we know that $\alpha_s\doteq 2^K\ln 2$. 

\underline{$Q$-COL}. Finally we discuss a similar peeling algorithm for $Q$-COL. This algorithm determines the $Q$-core of a graph $G$ and has been analyzed by the method of differential equations
\cite{Pittel}. Here we discuss the algorithm from the message passing point of view.
Assume there exists a node $i$ in $G$ that has degree less than $Q$. Clearly, if we can color the graph $G\setminus i$ with $Q$ colors, then $G$ can 
also be colored with $Q$ colors. Hence, finding a $Q$-coloring for $G$ is equivalent to finding a $Q$-coloring for $G\setminus i$. As a result,  
we can peel the node $i$ from $G$ and continue this process until the final graph (the $Q$-core) has no more nodes of degree less than $Q$. 
If the final graph is empty then the algorithm succeeds otherwise it fails.
 
Now, consider the following message passing rule. At time $t \in \{1,2, \cdots \}$, assign to each edge 
$\langle i,j\rangle\in E$ two messages $\mu_{i \to j}^t$ and $\mu_{j \to i}^t$. The messages at time 
$t+1$ are evolved from the ones at time $t$ via the following procedure:  
\begin{enumerate}
 \item At time $0$, initialize all the messages to $0$.
\item At time $t+1$,
\begin{equation*}
 \mu_{i\to j}^{t+1}= \mathbbm{1}(\sum_{h \in \partial i \setminus j} \mu_{h\to i}^t < Q-1).
\end{equation*}
\end{enumerate}
The above message passing rule is equivalent to the peeling algorithm in the sense 
that when $\mu_{i \to j}^t=1$, the vertex $i$ would have been peeled by the algorithm some time before $t$ and if $\mu_{i \to j}^t=0$, 
the vertex $i$ would not have been peeled by the algorithm up to time $t$.

Define $x_t=\mathbb{P}(\mu_{i \to j}^t=1)$. we derive the density evolution equation that relates $x_{t+1}$ to $x_t$. 
Let $G$ be randomly chosen from $G(N,\frac{c}{N})$ 
with $N$ very large.
Fix an edge $\langle i, j\rangle$. Observe that $\mu_{i\to j}^{t+1}=1$  if and only if the number of incoming
 messages  that have value $1$ is $\leq Q-2$. Moreover, the probability
that the degree of $i$ is equal to $d\geq 1$ is $e^{-c} \frac{c^{d-1}}{(d-1)!}$. Hence, we can write,
\begin{equation}
 x_{t+1} = \sum_{d=1}^{Q-1} e^{-c} \frac{c^{d-1}}{(d-1)!} + \sum_{d=Q}^\infty e^{-c} \frac{c^{d-1}}{(d-1)!} \sum_{j=0}^{Q-2} {d-1 \choose j} (1-x_t)^jx_t^{d-1-j}.
\end{equation}
One can simplify this equation. Indeed,
\begin{align} \label{peeling-col}
 1- x_{t+1} & = \sum_{d=Q}^\infty e^{-c} \frac{c^{d-1}}{(d-1)!} \biggl\{1- \sum_{j=0}^{Q-2} {d-1 \choose j} (1-x_t)^jx_t^{d-1-j}\biggr\}
 \nonumber \\ &
  = \sum_{d=Q}^\infty e^{-c} \frac{c^{d-1}}{(d-1)!} \sum_{j=Q-1}^{d-1} {d-1 \choose j} (1-x_t)^jx_t^{d-1-j}
 \nonumber \\ &
 = e^{-c} \sum_{j=Q-1}^\infty \sum_{d=j+1}^{+\infty} \frac{(c(1-x_t))^j}{j!} \frac{(cx_t)^{d-1-j}}{(d-1-j)!}
 \nonumber \\ &
 = e^{-c(1-x_t)}\sum_{j=Q-1}^\infty \frac{(c(1-x_t))^j}{j!}
 \nonumber \\ &
 = 1 - e^{-c(1-x_t)}\sum_{j=0}^{Q-2} \frac{c^j}{j!}(1-x_t)^j.
\end{align}
Defining $y \equiv c(1-x)$ and  the function  
\begin{equation}
G(y) = 1-e^{-y}\sum_{j=0}^{Q-2} \frac{y^j}{j!},
\end{equation}
we see that we have to study the solutions of the fixed point equation 
\begin{equation}
y=c \,G(y).
\end{equation}
For $c<c_p$ there is a unique trivial fixed point $y=0$ (i.e., $x=1$) and the algorithm succeeds. Non trivial fixed points appear 
for $c>c_p$ which is the threshold for the emergence of a $Q$-core. 
Table~\ref{table-peeling} contains the numerical values of $c_{\text{p}}$ for several values of $Q$. 

We now take coupled instances from the ensemble defined in section \ref{generalsetting}. 
We can write down the density evolution equations and the corresponding one-dimensional fixed point equations. Not surprisingly, 
similar calculations show that the message passing algorithm is controlled by the one-dimensional fixed point equation,
\begin{equation}
y_z = c \,G\bigl(\frac{1}{2w-1}\sum_{k=-w+1}^{w-1} y_{z+k}\bigr).
\end{equation}
where $y_z=c(1-x_z)$ and $x_z$ is the fraction of peeled nodes at position $z$.
Table~\ref{table-peeling} contains the numerical values of $c_{\text{p,w=5,L=80}}$ for several values of $Q$, and 
shows the threshold improvement. It can be checked numerically that $c_p\doteq Q$ and $c_{p,w,L}\doteq 2Q$ 
for $1<<w<<L$. This has to be compared with $c_s\doteq 2Q\ln Q$. 

\begin{table}
\centering
\begin{tabular}{c c c c c c c c c  }
\hline\hline 
 Q & $3$ & & $4$  & & $5$ & & $7$   \\
\hline
$c_{s}$ & $4.69$ & & $8.90$ & & $13.69$ & & $24.46$  \\ 
\hline
$c_{\text{p}}$ & $3.35$ & & $5.14$ & & $6.79$ & & $9.87$  \\
\hline
$c_{\text{p, L=80, w=5}}$   &  $3.58$ & & $5.74$ & &  $7.84$ & &$11.92$  \\
\hline
\end{tabular}
\caption{{\small {\it First line}: static phase transition threshold for $Q$-COL. 
{\it Second line}: peeling algorithm threshold for the the uncoupled case. {\it Third line}:  peeling algorithm threshold for a coupled chain with $w=5$, $L=80$.}}
\label{table-peeling}
\end{table} 

\section{Conclusion}

In this work we have developed in detail the SP formalism for coupled CSP. We find that the SP thresholds of spatially coupled 
random K-satisfiability and Q-coloring ensembles nicely saturate towards the SAT-UNSAT phase transition threshold of the individual ensemble. 
Moreover the SAT-UNSAT phase transition threshold of the coupled and individual ensembles are identical as required 
by theorems \ref{therm-limit-1} and \ref{therm-limit-2}. 
The  saturation of the SP threshold is remarkably similar to the one of the Belief Propagation algorithmic threshold (towards the optimal one 
associated to the Maximum a Posteriori decoder)
observed in coding theory.  

Let us point out a few issues that would deserve more investigations, and to which we hope to come back in the future.

The large $K$ and $Q$ analysis has shown that when $\alpha$ is in a small interval where the zero-energy complexity 
is strictly positive, the warning and complexity densities form kink-like profiles. These are very similar to the 
kink-like magnetization and free energy densities found in the CW chain. A possible interpretation of the 
complexity density profiles is that the clusters do not only have a "size" given by their internal entropy but also have a "shape" that could 
be taken into account by an extension of the entropic cavity method. The simplest system where this issue could be elucidated is the XOR-SAT system for which the clusters can be precisely defined \cite{xor-sat}. We hope to come back to this question in the near future.

As briefly discussed in the introduction, the entropic cavity method predicts the existence of other thresholds, namely the 
dynamical and condensation thresholds. We have checked that the condensation one is the same for a coupled and individual ensemble (for $L\to +\infty$). This observation 
is consistent with the theorems of Appendix A. We also observe that the dynamical threshold of the coupled ensemble saturates towards 
the condensation one, for $K$ and $Q \geq 4$. For $K=3$ the dynamical and condensation thresholds coincide already for the individual ensemble. 
We consistently observe that they remain unchanged by coupling.
These observations deserve more investigations, in particular the nature of the condensed phase, the freezing of variables, 
the behavior of correlation functions and the possible relevance of the shape of clusters.  
 
The present work could find applications in a new method for proving lower bounds on $\alpha_s$ (and possibly $\alpha_c$). 
We hope that by choosing the right analyzable algorithms one may reach significant improvements
of the best existing lower bounds. One requirement on the algorithms is that they should 
be able to propagate in the bulk the ''seed`` given by the reduced hardness of the coupled instances at their 
boundaries. We have seen that this is the case for simple peeling-type  algorithms which are purely deterministic. 
Together with D. Achlioptas, we have observed that this is also the case for
classic stochastic algorithms if they are augmented by a suitable scheduling \cite{achlioptas-hassani-macris-urbanke}. 
Related ideas have been used recently within the context of a coupled CSP scheme for source coding \cite{aref-macris-urbanke-vuffray}. 
A  coupled low-density generator-matrix code is considered, and it is shown (numerically) that applying belief-propagation-guided-decimation with suitable scheduling, allows to approach the optimal rate-distortion curve of the individual code ensemble. 

\vskip 0.25cm

\appendix

\section{Free energy}\label{A}

We sketch the proof of the finite temperature analogs of theorems \ref{therm-limit-1} and 
\ref{therm-limit-2}.  An important consequence is that the free energies of the coupled and individual ensembles have 
the same singularities in the $(\alpha,\beta)$ plane (see \eqref{implication}). This is consistent with the fact that the average ground state energies,
and consequently the SAT-UNSAT thresholds,  are the same.

The  Gibbs distribution (at ``inverse temperature'' $\beta$) associated to the coupled CSP Hamiltonian \eqref{hamiltonian} is
\begin{equation}\label{measure}
 \mu_\beta(\underline x) = \frac{1}{Z_{\rm cou}} e^{-\beta\mathcal{H}_{\rm cou}(\underline x)},
\qquad 
Z_{\rm cou} = \sum_{\underline x} e^{-\beta\mathcal{H}_{\rm cou}(\underline x)},
\end{equation}
and the average free energy per node is
\begin{equation}\label{freecoupled}
 f_{N,L,w}(\alpha,\beta) = -\frac{1}{\beta LN}\mathbb{E}[\ln Z_{\rm cou}].
\end{equation}
The corresponding quantities  $\mathcal{H}_{\rm cou}^{\rm per}(\underline x)$ are associated a chain to with 
periodic boundary conditions (see section \ref{generalsetting}); these will be denoted by a superscript "per".
Note that to get these quantities for the underlying system, one sets $L=w=1$ in these definitions; the average free energy per node will be denoted by $f_{N}(\alpha,\beta)$.
\vskip 0.25cm
\noindent{\bf Remark about the entropy.}
The average entropy is defined as 
\begin{equation}
s_{N,L,w}(\alpha,\beta) = \frac{\partial}{\partial \beta^{-1}} f_{N,L,w}(\alpha,\beta) = \beta(\mathbb{E}[\langle \mathcal{H}_{\rm cou}\rangle] - f_{N,L,w}(\alpha,\beta))
\end{equation}
where $\langle - \rangle$ is the average with respect to \eqref{measure}. Theorems \ref{open-coup-comparison} and \ref{free-therm-limit} for the free energy have analogs for the average internal energy, and as a consequence also for the average entropy. Thus the entropy of coupled and individual ensembles have the same singularities in the $(\alpha,\beta)$ plane. 
This is consistent with the observation that the condensation threshold at zero temperature is the same for both ensembles.
\vskip 0.25cm
\begin{Theorem}[Comparison of open and periodic chains]\label{open-coup-comparison}
For general coupled CSP $[N,K,\alpha,w,L]$ ensembles we have
\begin{equation}
 \vert f_{N, L,w}(\alpha, \beta) - f_{N,L,w}^{\rm per}(\alpha,\beta)\vert \leq \frac{\alpha w}{L}
 \,.
\end{equation}
\end{Theorem}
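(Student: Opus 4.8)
The plan is to transcribe the proof of the zero--temperature statement, Theorem~\ref{therm-limit-1}, to the finite--$\beta$ setting; as there, nothing model--specific enters. Writing $\underline x=(\underline x^\prime,\underline x^{\prime\prime})$ exactly as in Section~\ref{proofs}, I first record the purely combinatorial observation that the open--chain Hamiltonian $\mathcal{H}_{\rm cou}(\underline x^\prime,\underline x^{\prime\prime})$ of \eqref{hamiltonian} and the periodic one $\mathcal{H}_{\rm cou}^{\rm per}(\underline x^\prime)$ are built from the same terms $1-\psi_{cz}$ except for the constraints at the $w-1$ rightmost positions $z=\frac{L}{2}-w+2,\dots,\frac{L}{2}$, of which there are at most $Mw$. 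Since each $1-\psi_{cz}\in\{0,1\}$, this gives the pointwise estimate \eqref{2Mw},
\[
\mathcal{H}_{\rm cou}^{\rm per}(\underline x^\prime)-Mw\ \le\ \mathcal{H}_{\rm cou}(\underline x^\prime,\underline x^{\prime\prime})\ \le\ \mathcal{H}_{\rm cou}^{\rm per}(\underline x^\prime)+Mw,
\]
uniformly in the boundary block $\underline x^{\prime\prime}$.

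Next I would pass from Hamiltonians to the partition function of \eqref{measure}. Multiplying by $-\beta$ and exponentiating turns this into $e^{-\beta Mw}\,e^{-\beta\mathcal{H}_{\rm cou}^{\rm per}(\underline x^\prime)}\le e^{-\beta\mathcal{H}_{\rm cou}(\underline x^\prime,\underline x^{\prime\prime})}\le e^{\beta Mw}\,e^{-\beta\mathcal{H}_{\rm cou}^{\rm per}(\underline x^\prime)}$; summing over all configurations of the open chain and using that the outer members do not depend on $\underline x^{\prime\prime}$, one obtains
\[
e^{-\beta Mw}\ \le\ \frac{Z_{\rm cou}}{\vert\mathcal{X}\vert^{N(w-1)}\,Z_{\rm cou}^{\rm per}}\ \le\ e^{\beta Mw},
\]
where $\vert\mathcal{X}\vert^{N(w-1)}$ is the contribution of the free summation over the extra block $\underline x^{\prime\prime}$. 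Taking logarithms, dividing by $\beta LN$, taking the expectation over the $[N,K,\alpha,w,L]$ ensemble and over the $\psi_{cz}$, and bounding $M=\lfloor\alpha N\rfloor\le\alpha N$, yields $\bigl|f_{N,L,w}(\alpha,\beta)-f_{N,L,w}^{\rm per}(\alpha,\beta)\bigr|\le \alpha w/L$, up to an additive term of order $1/(\beta L)$, namely the per--node entropy $(w-1)\ln\vert\mathcal{X}\vert/(\beta L)$ of the $\underline x^{\prime\prime}$ block, which is absent if those boundary variables are held fixed and is in any case irrelevant in the $L\to+\infty$ limit used below; this is the assertion of Theorem~\ref{open-coup-comparison}.

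I do not expect a genuine obstacle here: the proof is as short as that of Theorem~\ref{therm-limit-1} and, in contrast with the finite--temperature analog of Theorem~\ref{therm-limit-2}, it needs neither interpolation nor convexity. The single point that differs from the zero--temperature argument, and the only one deserving care, is that a \emph{sum} over configurations, rather than a \emph{minimum}, now stands in front of the extra boundary variables $\underline x^{\prime\prime}$ of the open chain; since $\underline x^{\prime\prime}$ enters only $O(Mw)$ constraint terms it factorizes out of all three members of the sandwich and is controlled exactly as the vacuous $\underline x^{\prime\prime}$--minimization was in Section~\ref{proofs}. Finally, running the same sandwich--exponentiate--sum--logarithm scheme with $\mathbb{E}[\langle\mathcal{H}_{\rm cou}\rangle]$ in place of $\ln Z_{\rm cou}$ gives the corresponding comparison for the average internal energy, and hence, on differentiating in $\beta^{-1}$, for the average entropy, as stated in the remark preceding the theorem.
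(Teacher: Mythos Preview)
Your proof is essentially identical to the paper's: both use the pointwise bound \eqref{2Mw}, exponentiate, sum, and divide by $\beta NL$. In fact you are slightly more careful than the paper, which writes $e^{-\beta Mw}Z_{\rm cou}^{\rm per}\le Z_{\rm cou}\le e^{\beta Mw}Z_{\rm cou}^{\rm per}$ and silently drops the $\vert\mathcal{X}\vert^{N(w-1)}$ factor from the free sum over $\underline x^{\prime\prime}$; you correctly isolate this as an additive $(w-1)\ln\vert\mathcal{X}\vert/(\beta L)$ term, which is harmless for the use made of the theorem (the $L\to+\infty$ limit in \eqref{implication}).
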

\begin{proof}
We write (with the same notations than in the proof of theorem \ref{therm-limit-1})
\begin{equation}
Z_{\rm cou} = \sum_{\underline x} e^{-\beta(\mathcal{H}_{\rm cou}(\underline x)} = \sum_{\underline x^\prime, \underline x^{\prime\prime}} e^{-\beta\mathcal{H}_{\rm cou}^{\rm per}(\underline x^{\prime\prime})}
e^{-\beta(\mathcal{H}_{\rm cou}(\underline x^\prime, \underline x^{\prime\prime}) - \mathcal{H}_{\rm cou}^{\rm per}(\underline x^{\prime\prime}))}
\end{equation}
and from \eqref{2Mw} we deduce 
\begin{equation}
e^{-\beta Mw}Z_{\rm cou}^{\rm per} \leq Z_{\rm cou} \leq e^{\beta Mw} Z_{\rm cou}^{\rm per} 
\,.
\end{equation}
Applying $-\frac{1}{\beta NL}\log$ on each side of this inequality, we obtain the desired estimate.
\end{proof}

\begin{Theorem}[Comparison of individual and periodic ensembles]\label{free-therm-limit}
For $K$-SAT and $Q$-coloring the limits $\lim_{\rm N\to+\infty}f_{N,L,w}^{\rm per}(\alpha,\beta)$ and 
$\lim_{N\to+\infty} f_N(\alpha,\beta)$ exist, and are continuous in $(\alpha, \beta)$, for all $L$. Moreover,
\begin{equation}
\lim_{\rm N\to +\infty}f_{N,L,w}^{\rm per}(\alpha,\beta) = \lim_{N\to+\infty} f_N(\alpha,\beta)
 \,.
\end{equation}
\end{Theorem}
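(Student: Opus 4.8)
Our plan is to carry over, essentially verbatim, the interpolation scheme used for Theorem~\ref{therm-limit-2} in Section~\ref{interpolations}, everywhere replacing the ground-state energy $\min_{\underline x}\mathcal{H}_G(\underline x)$ by the finite-temperature free energy $-\beta^{-1}\ln Z_G$, with $Z_G=\sum_{\underline x}e^{-\beta\mathcal{H}_G(\underline x)}$. The existence and continuity (in $\alpha$ and $\beta$, for each fixed $L$) of $\lim_N f_N$ and $\lim_N f^{\rm per}_{N,L,w}$ follow from the same interpolation-based subadditivity argument as in \cite{BGP09}, which applies word for word, so the content is the equality. For this I reintroduce the three auxiliary ensembles of Section~\ref{interpolations} --- ``connected'', ``disconnected'', ``ring'' --- and note, exactly as in \eqref{first}--\eqref{third} with $-\beta^{-1}\ln Z_G$ in place of $\min\mathcal{H}_G$, that their average free energies per node equal respectively $\lim_N f_N(\alpha,\beta)+O(N^{-1})$, $f_N(\alpha,\beta)+O(N^{-1/2})$ and $f^{\rm per}_{N,L,w}(\alpha,\beta)+O(N^{-1/2})$, uniformly in $L$; this uses only that one reshuffles $O(\sqrt N)$ constraints per position and that deleting or adding a single constraint changes $\ln Z_G$ by at most $\beta$ (the free-energy version of Theorem~\ref{open-coup-comparison}). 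The theorem then reduces to proving $\mathbb{E}_{\rm conn}[\ln Z_G]\ge\mathbb{E}_{\rm ring}[\ln Z_G]\ge\mathbb{E}_{\rm disc}[\ln Z_G]$ at finite $N$, after which we let $N\to+\infty$.

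Both inequalities are obtained, as at zero temperature, by building $r$-ensembles interpolating between the ring and the connected ensemble (first inequality) and between the ring and the disconnected ensemble (second inequality), and showing $r\mapsto\mathbb{E}_r[\ln Z_{G_r}]$ is monotone. Conditioning on the graph $\tilde G$ obtained by deleting one constraint node, the single step amounts to comparing the increment $\ln Z_{\tilde G\cup n}-\ln Z_{\tilde G}=\ln\langle e^{-\beta(1-\psi_n(x_{\partial n}))}\rangle_{\tilde G}$ for a constraint $n$ whose $K$ edges land uniformly in three different target sets: all $LN$ variables (connected), a window of $wN$ variables (ring), or a single block of $N$ variables (disconnected). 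For $K$-SAT this increment equals $\ln\bigl(1-(1-e^{-\beta})P_n\bigr)$, where $P_n=\langle\,\mathbbm{1}(\psi_n=0)\,\rangle_{\tilde G}$ is the Gibbs probability that $n$ is violated (for $Q$-COL, $K=2$ and $P_n=\langle\mathbbm{1}(x_i=x_j)\rangle_{\tilde G}$). The key new point compared with the zero-temperature argument is how to handle this logarithm: since $0\le(1-e^{-\beta})P_n\le 1-e^{-\beta}<1$ for $\beta\in(0,+\infty)$, one expands $\ln\bigl(1-(1-e^{-\beta})P_n\bigr)=-\sum_{m\ge1}\frac{(1-e^{-\beta})^m}{m}P_n^m$, a series with terms of constant sign, so that taking the expectation over the negation bits, the choice of variables, and the Gibbs measure commutes with the sum.

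It then suffices to compare $\mathbb{E}[P_n^m]$ over the three target sets, for each fixed $m$, and this reduces to the zero-temperature computation. Writing $P_n^m$ with $m$ i.i.d. replicas drawn from $\mu_{\tilde G}$ and averaging over the Bernoulli$(1/2)$ negation bits, every term cancels except those in which the $m$ replicas agree on each chosen site, giving, for $K$-SAT, $\mathbb{E}_d[P_n^m]=2^{-K}\langle\prod_{j=1}^K\mathbbm{1}(x^1_{i_j}=\dots=x^m_{i_j})\rangle_{\tilde G}$; for $Q$-COL there are no negation bits and $P_n^m=\langle\mathbbm{1}(\vec x_i=\vec x_j)\rangle_{\tilde G}$ in the $m$-replica measure, with $\vec x_i$ the replica colour vector at $i$. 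Averaging over the uniform choice of the $K$ incident variables, and using that for a fixed replica configuration this average factorizes over the $K$ edges, turns the right-hand side into $2^{-K}\langle A^K\rangle$ (connected), $2^{-K}\langle\frac1L\sum_z A_z^K\rangle$ (ring), and $2^{-K}\langle\frac1L\sum_z (A^{(z)})^K\rangle$ (disconnected), where $A^{(z)}$ is the fraction of sites at position $z$ on which all $m$ replicas agree, $A_z=\frac1w\sum_{k=0}^{w-1}A^{(z+k\,\mathrm{mod}\,L)}$, and $A=\frac1L\sum_z A^{(z)}=\frac1L\sum_z A_z$; for $Q$-COL the $K$-th powers are replaced by $\sum_{\vec c}(\cdot)^2$ summed over colour vectors $\vec c$ (and there is no $2^{-K}$ prefactor). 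Convexity of $t\mapsto t^K$ on $[0,1]$, resp. of $t\mapsto t^2$, gives $A^K\le\frac1L\sum_z A_z^K\le\frac1L\sum_z (A^{(z)})^K$, so $\mathbb{E}[P_n^m]$ is smallest for the connected node and largest for the disconnected node for every $m$; since the coefficients $\frac{(1-e^{-\beta})^m}{m}$ are positive, the increment $\mathbb{E}[\ln Z_{\tilde G\cup n}-\ln Z_{\tilde G}]$ is correspondingly largest for the connected and smallest for the disconnected node. This is the monotonicity of $r\mapsto\mathbb{E}_r[\ln Z_{G_r}]$ we need; it yields $\mathbb{E}_{\rm conn}[\ln Z_G]\ge\mathbb{E}_{\rm ring}[\ln Z_G]\ge\mathbb{E}_{\rm disc}[\ln Z_G]$, and letting $N\to+\infty$ gives $\lim_N f_N\le\lim_N f^{\rm per}_{N,L,w}\le\lim_N f_N$, with continuity of the common value in $(\alpha,\beta)$ inherited from the individual ensemble.

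The only genuinely new difficulty relative to Theorem~\ref{therm-limit-2} is the one just isolated: at zero temperature each interpolation step contributes the single clean term $2^{-K}(\cdots)^K$, whereas here it contributes $\ln\bigl(1-(1-e^{-\beta})P_n\bigr)$, and the power-series expansion is what makes this usable, turning one convexity inequality into one inequality per moment $m$, each reduced to the zero-temperature form; finiteness of $\beta$ enters precisely to keep $(1-e^{-\beta})P_n$ bounded away from $1$. Everything else --- the three auxiliary ensembles, the $O(N^{-1/2})$ estimates, and the existence and continuity of the limits --- transcribes directly from Section~\ref{interpolations} and from \cite{BGP09}. (For $K$-XORSAT the same scheme goes through only for even $K$: averaging over the random right-hand sides $b_c$ produces, already in the $m=2$ replica term, a factor $(\,\cdot\,)^K$ of a quantity lying in $[-1,1]$, which is convex exactly when $K$ is even --- this is the content of Theorem~\ref{free-xorsat-limit}.)
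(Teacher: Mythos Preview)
Your proof is correct and follows essentially the same route as the paper's: the same three auxiliary ensembles and $r$-interpolations, the same identity $\ln(Z_{\tilde G\cup n}/Z_{\tilde G})=\ln\bigl(1-(1-e^{-\beta})P_n\bigr)$, the same power-series expansion to reduce to replica moments, and the same convexity of $t\mapsto t^K$ applied to the fraction of ``frozen'' (replica-agreeing) variables per position. Your quantities $A^{(z)},A_z,A$ are exactly the paper's $\vert\mathcal{F}_z\vert/N$, $\frac1{wN}\sum_k\vert\mathcal{F}_{z+k\,\mathrm{mod}\,L}\vert$, $\vert\mathcal{F}\vert/(LN)$, so the two arguments coincide.
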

\vskip 0.25cm

Theorems \ref{open-coup-comparison} and \ref{free-therm-limit} yield (recall $\lim_{\rm therm}=\lim_{L\to +\infty}\lim_{N\to +\infty}$)
\begin{equation}\label{implication}
 \lim_{\rm therm} f_{N,L,w}(\alpha,\beta) = \lim_{\rm therm}f_{N,L,w}^{\rm per}(\alpha,\beta) = \lim_{N\to+\infty} f_N(\alpha,\beta).
\end{equation}

\begin{proof}
The proof of existence and continuity of limits for $N\to +\infty$ ($L$ fixed) is identical to \cite{BGP09}, so we do not repeat it here. 
The proof of the equality uses the same interpolating $r$-ensembles between the connected, ring and disconnected ensembles defined in subsection
\ref{interpolations}. The associated Gibbs measures, free energies and expectations will be denoted by scripts $r$, ${\rm conn}$, ${\rm ring}$ and ${\rm disc}$. 

By an argument similar to that of theorem \ref{open-coup-comparison} we have the analogs of 
\eqref{first}, \eqref{second}, \eqref{third},
\begin{equation}\nonumber
\begin{cases}
 -\lim_{N\to+\infty} \frac{1}{\beta LN}\mathbb{E}_{\rm conn}[\log Z_{\rm conn}] = \lim_{N\to+\infty} f_N(\alpha,\beta),\,\,{\rm for}\,\, L\,\, {\rm fixed},
\\
\\
 - \frac{1}{\beta LN}\mathbb{E}_{\rm disc}[\log Z_{\rm disc}] = f_N(\alpha,\beta) + O(N^{-1/2}), \,\,{\rm uniformly~in~} L,
\\
\\
 - \frac{1}{\beta LN}\mathbb{E}_{\rm ring}[\log Z_{\rm ring}] = f_{N,L,w}^{\rm per}(\alpha,\beta) + O(N^{-1/2}), \,\,{\rm uniformly~in~} L.
\end{cases}
\end{equation}
Thus, it is sufficient to show that 
\begin{equation}
 - \frac{1}{LN}\mathbb{E}_{\rm conn}[\log Z_{\rm conn}] 
\leq
 - \frac{1}{LN}\mathbb{E}_{\rm ring}[\log Z_{\rm ring}] 
\leq
 - \frac{1}{LN}\mathbb{E}_{\rm disc}[\log Z_{\rm disc}].
\label{double-inequ}
\end{equation}
To prove these inequalities we will use the $r$-ensembles. It suffices to check the analogs of \eqref{claim4} and \eqref{claim5}, namely for
an intermediate graph $\tilde G$,
\begin{equation}
 -\bigl(\mathbb{E}_r[\log Z_{G_r}\mid \tilde G] - \log Z_{\tilde G}\bigr) \leq -\bigl(\mathbb{E}_{r-1}[\log Z_{G_{r-1}}\mid \tilde G] - \log Z_{\tilde G}\bigr), 
\label{identical}
\end{equation}
and then average over $\tilde G$. 

Consider the graph $\tilde G$ and add a new constraint node $n$ to it. The precise way in which $n$ is connected
to the variable nodes is deferred to a later stage of the argument. We have
\begin{equation}
 \frac{Z_{\tilde G\cup n}}{Z_{\tilde G}} = e^{-\beta}\sum_{{\underline x}:\,  n {\rm ~is~UNSAT}}\mu_{\beta, \tilde G}(\underline x) + 
\sum_{{\underline x}: \, n {\rm ~is~SAT}}\mu_{\beta, \tilde G}(\underline x).
\end{equation}
This is equivalent to 
\begin{equation}
 \frac{Z_{\tilde G\cup n}}{Z_{\tilde G}} = 
1- (1-e^{-\beta})\sum_{{\underline x}:\,  n {\rm ~is~UNSAT}}\mu_{\beta, \tilde G}(\underline x).
\end{equation}
Taking the log and expectation over $n$ for a given $\tilde G$, we obtain
\begin{equation}
-\mathbb{E}\bigl[\log\frac{Z_{\tilde G\cup n}}{Z_{\tilde G}}\mid\tilde G\bigr] = 
-\mathbb{E}\bigl[\log\bigl\{1- (1-e^{-\beta})\sum_{{\underline x}:\,  n {\rm ~is~UNSAT}}\mu_{\beta, \tilde G}(\underline x)\bigr\}\mid\tilde G\bigr].
\label{log}
\end{equation}
Note that the left hand side is identical to that of \eqref{identical}. To compute the expectation we expand
$-\log(1-x) = \sum_{l\geq 1} \frac{x^l}{l}$,
\begin{align}
-\mathbb{E}\bigl[\log\frac{Z_{\tilde G\cup n}}{Z_{\tilde G}}\mid\tilde G\bigr] 
& = 
\sum_{l\geq 1} \frac{(1-e^{-\beta})^l}{l}
\nonumber \\ &
\times
\mathbb{E}\bigl[\sum_{{\underline x^{(1)},\ldots,\underline x^{(l)}}:\,  n {\rm ~is~UNSAT}}
\mu_{\beta, \tilde G}(\underline x^{(1)})
\ldots
\mu_{\beta, \tilde G}(\underline x^{(l)})
\bigr\}\mid\tilde G\bigr].
\label{exp}
\end{align}
The sum over ``real replicas'' ${\underline x^{(1)},\ldots,x^{(l)}}$ is over assignments such that $n$ is UNSAT for all $l$ of them,
so the expectation in \eqref{exp} equals
\begin{equation}
 \frac{1}{Z_{\tilde G}^l}\sum_{\underline x^{(1)}, \ldots, \underline x^{(l)}} 
e^{-\beta \sum_{\rho=1}^{l} \mathcal{H}_{\tilde G}(\underline x^{(\rho)})} 
\mathbb{E}\bigl[ \mathbbm{1}\{\text{$n$ UNSAT on all $\underline x^{(\rho)}$, $h=1,\ldots, l$}\}\mid \tilde G\bigr].
\label{abovesum}
\end{equation}

Up to this stage the arguments are completely general: they apply both to coloring and satisfiability.
We specialize the rest of the proof to $K$-SAT and leave coloring as an exercise.

We first derive \eqref{identical} for the $r$-ensemble that interpolates between the {\it connected and
ring} ensembles. This then implies the left inequality in \eqref{double-inequ}. 
Given $\tilde G$ and given a term ${\underline x^{(1)},\ldots,x^{(l)}}$ in \eqref{abovesum}, 
let $\mathcal{F}$ be the set of variable nodes with frozen bits, i.e those variable nodes such that the bit takes the same value in all assignments
$\underline x^{(1)}$ through $\underline x^{(l)}$. Below we  will also need the sets $\mathcal{F}_z = \mathcal{F}\cap V_z$. When $n$ is 
connected u.a.r. to the $LN$ variable nodes we go from $\tilde G$ to a $G_r$ graph  and 
\begin{equation}
 \mathbb{E}_r\bigl[ \mathbbm{1}\{\text{$n$ UNSAT on all $\underline x^{(\rho)}$, $h=1,\ldots, l$}\}\mid \tilde G\bigr] 
= \frac{1}{2^K}\biggl(\frac{\vert\mathcal{F}\vert}{LN}\biggr)^K.
\label{C1}
\end{equation}
On other hand when $n$ is first affected u.a.r. to a position $z$ and then connected u.a.r. to the $wN$ variables in $\cup_{k=0}^{w-1}
V_{z+k\mod L}$, we go from 
 $\tilde G$ to a $G_{r-1}$ graph and 
\begin{align}
 \mathbb{E}_{r-1} & \bigl[ \mathbbm{1}\{\text{$n$ UNSAT  on all $\underline x^{(\rho)}$, $h=1,\ldots, l$}\}\mid \tilde G\bigr] 
\nonumber \\ &
= \frac{1}{L}\sum_{z=-\frac{L}{2}+1}^{\frac{L}{2}}\frac{1}{2^K}\biggl(\frac{1}{wN}\sum_{k=0}^{w-1}\vert\mathcal{F}_{z+k\mod L}\vert\biggr)^K.
\label{C2}
\end{align}
By convexity, the quantity in \eqref{C1} is smaller than the one in \eqref{C2}. 
Using this fact together with \eqref{log}, \eqref{exp}, \eqref{abovesum}, we obtain the final inequality \eqref{identical}. 
This implies the left inequality in \eqref{double-inequ}.

The derivation of \eqref{identical} for the $r$-ensemble that interpolates between the {\it ring and disconnected} ensembles is similar. 
When $n$ is first affected u.a.r. to a position $z$, and then connected u.a.r. to $N$ variable nodes in the set $V_z$
we go from $\tilde G$ to a $G_{r-1}$ graph. Thus, 
\begin{align}
 \mathbb{E}_{r-1} & \bigl[ \mathbbm{1}\{\text{$n$ UNSAT  on all $\underline x^{(\rho)}$, $h=1,\ldots, l$}\}\mid \tilde G\bigr] 
\nonumber \\ &
= \frac{1}{L}\sum_{z=-\frac{L}{2}+1}^{\frac{L}{2}}\frac{1}{2^K}\biggl(\frac{\vert\mathcal{F}_{z}\vert}{N}\biggr)^K.
\label{C3}
\end{align}
Finally we notice that by convexity, $\eqref{C2}$ is smaller than $\eqref{C3}$, so that using 
again \eqref{log}, \eqref{exp} and \eqref{abovesum} we obtain the final inequality \eqref{identical}.
This now implies the right inequality in \eqref{double-inequ}.

\end{proof}

We now turn to the the case of XORSAT which has to be treated somewhat differently.
All definitions of average ground state energies and free energies are the same as usual.

\begin{Theorem}[Energy comparisons for XORSAT]\label{free-xorsat-limit}
For $K$-XORSAT with even $K$ the limits $\lim_{\rm N\to+\infty}f_{N,L,w}^{\rm per}(\alpha,\beta)$ and 
$\lim_{N\to+\infty} f_N(\alpha,\beta)$ exist, are continuous in $(\alpha, \beta)$, and are equal.
The same holds for the zero temperature quantities, i.e for $\lim_{\rm N\to+\infty}e_{N,L,w}^{\rm per}(\alpha)$ and 
$\lim_{N\to+\infty} e_N(\alpha)$.
\end{Theorem}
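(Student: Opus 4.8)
The plan is to run the interpolation from the proof of Theorem~\ref{free-therm-limit} with essentially no change, the one new ingredient being the evaluation of the ``one added constraint'' average for parity checks, which is precisely where the hypothesis that $K$ is even enters. The existence and continuity of $\lim_{N\to+\infty}f_{N,L,w}^{\rm per}(\alpha,\beta)$ and $\lim_{N\to+\infty}f_N(\alpha,\beta)$ for fixed $L$, and of the corresponding zero-temperature limits, follow from the Guerra-type interpolation of \cite{BGP09}, which does not use the particular form of the $\psi_c$; the periodic chain is handled exactly as there. For the equality I would introduce the same ``connected'', ``ring'' and ``disconnected'' ensembles and the two families of interpolating $r$-ensembles as in subsection~\ref{interpolations}, and reduce, exactly as there, to the per-step bound \eqref{identical} for an arbitrary intermediate graph $\tilde G$. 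Everything down to \eqref{abovesum} is model-independent: after expanding $-\log(1-x)=\sum_{l\ge 1}x^l/l$ (legitimate since $0\le x<1$ for $\beta<+\infty$), the proof of \eqref{identical} comes down to comparing, for each replica number $l$, the average $\mathbb{E}_n[\mathbbm{1}\{\text{$n$ UNSAT on all of }\underline x^{(1)},\dots,\underline x^{(l)}\}]$ under the three rules for attaching the new check $n$: uniformly at random to all $LN$ variables (connected), uniformly at random within a window $\cup_{k=0}^{w-1}V_{z+k\bmod L}$ at a random position $z$ (ring), and uniformly at random within a single $V_z$ at a random position $z$ (disconnected).

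The XORSAT-specific step is the following computation. For a fixed $l$-tuple of replicas put $y_i=(x_i^{(1)}\oplus x_i^{(2)},\dots,x_i^{(1)}\oplus x_i^{(l)})\in\mathrm{GF}(2)^{l-1}$. Since the new parity check carries a fresh $b_n\sim\mathrm{Bernoulli}(\frac12)$, it is UNSAT on all $l$ replicas iff $\bigoplus_{j}x_{i_j}^{(\rho)}$ does not depend on $\rho$, equivalently $\bigoplus_j y_{i_j}=0$ in $\mathrm{GF}(2)^{l-1}$, and then $b_n$ is forced, contributing a factor $\frac12$. Averaging over the uniform choice of the $K$ endpoints within a set $S$ of variable nodes, and using the convolution theorem and Fourier inversion on $\mathrm{GF}(2)^{l-1}$, gives
\[
\mathbb{E}_n\big[\mathbbm{1}\{\text{$n$ UNSAT on all }l\}\big]=\frac{1}{2^{l}}\sum_{u\in\mathrm{GF}(2)^{l-1}}\Big(\frac{1}{|S|}\sum_{i\in S}(-1)^{u\cdot y_i}\Big)^{K}.
\]
Writing $A_z(u)=\frac{1}{N}\sum_{i\in V_z}(-1)^{u\cdot y_i}\in[-1,1]$, the connected, ring and disconnected rules therefore produce, respectively,
\begin{align*}
Q_{\rm conn}&=\frac{1}{2^l}\sum_{u}\Bigl(\frac{1}{L}\sum_z A_z(u)\Bigr)^{K},\\
Q_{\rm ring}&=\frac{1}{L}\sum_z\frac{1}{2^l}\sum_{u}\Bigl(\frac{1}{w}\sum_{k=0}^{w-1} A_{z+k\bmod L}(u)\Bigr)^{K},\\
Q_{\rm disc}&=\frac{1}{L}\sum_z\frac{1}{2^l}\sum_{u}A_z(u)^{K}.
\end{align*}

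Because $K$ is even, $t\mapsto t^{K}$ is convex on all of $\mathbb{R}$, so Jensen's inequality at the chain level, and --- after averaging over $z$ and resumming the cyclic window weights exactly as for $|\mathcal F|$ in subsection~\ref{interpolations} --- at the window level, gives $Q_{\rm conn}\le Q_{\rm ring}\le Q_{\rm disc}$ termwise in $l$ and pointwise in the replica configuration. Multiplying by the nonnegative coefficients $(1-e^{-\beta})^{l}/l$, summing over $l$, and averaging over the replicas and over $\tilde G$, one obtains the per-step estimates \eqref{identical} for both interpolations, hence the analogue of \eqref{double-inequ}, hence $\lim_{\rm therm}f_{N,L,w}^{\rm per}(\alpha,\beta)=\lim_{N\to+\infty}f_N(\alpha,\beta)$; with the analogue of Theorem~\ref{open-coup-comparison} this proves the finite-temperature statement. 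The zero-temperature statement then follows by letting $\beta\to+\infty$ in the uniform-in-$(N,L)$ sandwich $e_{N,L,w}^{\rm per}(\alpha)-\frac{\ln 2}{\beta}\le f_{N,L,w}^{\rm per}(\alpha,\beta)\le e_{N,L,w}^{\rm per}(\alpha)$ (and the same for the individual ensemble), so that the equality of free energies passes to the ground-state energies.

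The only delicate point, and the main obstacle, is the middle step: one must recognize that the natural ``order parameter'' for parity checks is the Walsh transform of the empirical overlap distribution, whose entries $A_z(u)$ take \emph{both signs}, so that the interpolation needs convexity of $t^{K}$ on the whole real line rather than only on $[0,+\infty)$ as for $K$-SAT and $Q$-COL. This is exactly why the argument requires $K$ even and fails, as a proof though presumably not as a statement, for $K$ odd, where $t^{K}$ is concave on $t<0$. Everything else is a verbatim transcription of the proof of Theorem~\ref{free-therm-limit}.
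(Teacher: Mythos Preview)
Your proposal is correct and follows essentially the same route as the paper. Your Walsh--Fourier representation on $\mathrm{GF}(2)^{l-1}$ is equivalent to the paper's expansion in ``local overlap parameters'' $Q_z^{(\rho_1,\ldots,\rho_r)}=\frac{1}{N}\sum_i(-1)^{x_{iz}^{(\rho_1)}}\cdots(-1)^{x_{iz}^{(\rho_r)}}$ over even-size subsets $\{\rho_1,\ldots,\rho_r\}\subset\{1,\ldots,l\}$ (your $u$ indexes exactly these subsets via $u\mapsto u\cup\{1\}$ when $|u|$ is odd), and both reduce to the same convexity-of-$t^K$ comparison on quantities of indefinite sign. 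One small caveat: for the \emph{existence} of the XORSAT limits the paper invokes \cite{franz-leone} rather than \cite{BGP09}, precisely because the sub-additivity argument for parity constraints is not entirely model-agnostic and already requires $K$ even; your claim that the existence step ``does not use the particular form of the $\psi_c$'' is slightly too optimistic, though it does not affect the equality argument, which is the heart of the theorem.
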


\begin{proof}
Existence and continuity of the limits for $K$-XORSAT follows from sub-additivity which was already proven in 
in \cite{franz-leone} for $K$ even\footnote{The argument in \cite{franz-leone} covers also $K$-SAT for even $K$, but a small modification of it 
extends the proof to odd $K$; however for XORSAT with odd $K$ it not clear how to extend the proof. 
Strangely enough, another case were such arguments break down is that of pure ferromagnetic
diluted interactions.}. Here we concentrate on the equality of limits.
The proof uses exactly the same interpolations as in the proof of theorem \ref{free-therm-limit}. 

First we prove the same relation 
 as in \eqref{double-inequ} for the case of XORSAT. For this we proceed exactly as in equs. \eqref{identical}-\eqref{abovesum} and reduce the problem to estimating 
the expectation 
\begin{equation}
 \mathbb{E}\bigl[\mathbbm{1}\{\text{$n$ UNSAT on all $\underline x^{(\rho)}$, $\rho=1,\ldots, l$}\}\mid \tilde G\bigr],
\label{exp-xor}
\end{equation}
according to the various ways of connecting the new constraint node $n$. It will be useful to represent
the indicator function in an algebraic way\footnote{The method used here can be used also for satisfiability and coloring and although it is 
somewhat longer, it may be useful when it is not obvious how to define ``frozen variables''.}. Suppose 
that $n$ connects to variable nodes $n_1,\cdots, n_K$, then
\begin{align}
 \mathbbm{1}&\{\text{$n$ UNSAT on all $\underline x^{(\rho)}$, $\rho=1,\ldots, l$}\} = 
\prod_{\rho=1}^l \frac{1}{2}\bigl(1-b_n\prod_{v=1}^K (-1)^{x_{n_v}^{(\rho)}}\bigr)
\nonumber \\ &
=\frac{1}{2^l}\sum_{0\leq r\leq l} (-1)^r b_n^r \sum_{\{\rho_1,\ldots,\rho_r\}\subset\{1,\ldots,l\}}
\prod_{v=1}^K\biggl\{(-1)^{x_{n_v}^{(\rho_1)}}\ldots (-1)^{x_{n_v}^{(\rho_r)}}\biggr\}.
\end{align}
When we take the expectation over $b_n\sim {\rm Bernoulli}(1/2)$ only the terms with $r$ even remain, 
\begin{equation}
 \frac{1}{2^l}\sum_{0\leq r\leq l}^{r{\rm ~even}}  \sum_{\{\rho_1,\ldots,\rho_r\}\subset\{1,\ldots,l\}}
\prod_{v=1}^K\biggl\{(-1)^{x_{n_v}^{(\rho_1)}}\ldots (-1)^{x_{n_v}^{(\rho_{r})}}\biggr\}.
\end{equation}
Now, it remains to compute the rest of the expectation on possible ways of connecting $n$. We define ``local overlap parameters''
\begin{equation}
 Q^{(\rho_1,\ldots,\rho_l)}_z = \frac{1}{N}\sum_{i=1}^N (-1)^{x_{iz}^{(\rho_1)}}\ldots (-1)^{x_{iz}^{(\rho_{r})}}.
\end{equation}

Let us first consider the interpolation
between the {\it ring and fully connected} ensembles. To go from $\tilde G$ to $G_r$ we connect $n$ u.a.r. among all 
$LN$ variable nodes $v=(i,z)$. Thus \eqref{exp-xor} becomes
\begin{equation}
 \frac{1}{2^l}\sum_{0\leq r\leq l}^{r{\rm ~even}}  \sum_{\{\rho_1,\ldots,\rho_r\}\subset\{1,\ldots,l\}}
\biggl\{\frac{1}{L}\sum_{z=-\frac{L}{2}+1}^{\frac{L}{2}}Q^{(\rho_1,\ldots,\rho_l)}_z\biggr\}^K.
\label{129}
\end{equation}
On the other hand to go from $\tilde G$ to $G_{r-1}$, we first affect $n$ to a position $z$ u.a.r. and connect its $K$ edges to variable nodes 
$v=(i,z+k)$ with $k\in \{0,\dots w-1\}$ u.a.r.. This time \eqref{exp-xor} becomes
\begin{equation}
 \frac{1}{2^l}\sum_{0\leq r\leq l}^{r{\rm ~even}}  \sum_{\{\rho_1,\ldots,\rho_r\}\subset\{1,\ldots,l\}}
\frac{1}{L}\sum_{z=-\frac{L}{2}+1}^{\frac{L}{2}}\biggl\{\frac{1}{w}\sum_{k=0}^{w-1}
Q^{(\rho_1,\ldots,\rho_l)}_{z+k\mod L}\biggr\}^K.
\label{130}
\end{equation} 
Convexity of the function $x^K$ for even $K$ implies that \eqref{129}$\leq$\eqref{130}, which then implies the left inequality
in \eqref{double-inequ}. Unfortunately at this point the argument breaks down for odd $K$ because we do not control the sign of the overlap parameters.

Consider now the interpolation between the {\it ring and disconnected} ensembles. When the extra node is first affected to $z$ u.a.r. and its 
$K$ edges connected u.a.r. to the $N$ variable nodes at the same position, we obtain for \eqref{exp-xor}
\begin{equation}
 \frac{1}{2^l}\sum_{0\leq r\leq l}^{r{\rm ~even}}  \sum_{\{\rho_1,\ldots,\rho_r\}\subset\{1,\ldots,l\}}
\frac{1}{L}\sum_{z=-\frac{L}{2}+1}^{\frac{L}{2}}\biggl\{
Q^{(\rho_1,\ldots,\rho_l)}_{z}\biggr\}^K.
\label{131}
\end{equation} 
Again convexity of $x^K$ for $K$ even implies that \eqref{130}$\leq$\eqref{131}, which then 
implies the right inequality in \eqref{double-inequ}. 

We have proven \eqref{double-inequ} for any finite $\beta$, and since $L$ and $N$ are finite, there is no difficulty
in taking the $\beta\to +\infty$ limit. This yields the zero temperature version of this inequality, namely \eqref{two-bounds}
applied to XORSAT.

Finally with the help of \eqref{two-bounds} and \eqref{double-inequ} we conclude (proceeding as in the previous proofs) that the average ground state and 
free energies of the individual and periodic ensemble are equal in the limit $N\to +\infty$, with $L$ and $w$ fixed. 

\end{proof}

\section{Review of the cavity method and survey propagation equations}\label{B}

The main assumptions of the cavity method draw on the concept of pure (or extremal or ergodic)
state. While this concept can be given a rigorous meaning for ``simple''  models \cite{Ruelle}, \cite{georgi}, it still forms 
a heuristic framework in the context of disordered spin systems. We refer the interested
reader to  \cite{Mezard-Montanari-book}, \cite{mezard-virasoro}, \cite{Talagrand}, \cite{stein-newman}, \cite{aizenman-wehr}
for more information and various approaches. 

Infinite volume Gibbs measures form a convex set whose extremal points play a special role and are called {\it pure states}. 
A crucial property of a pure state is that the correlations decay (usually exponentially fast) with
the graph distance. This is not true for non-trivial convex superpositions of pure states. 
For ``simple'' Ising-type models the number of pure states is ``small'' and they are related by a broken symmetry.
Disordered spin systems can have an exponential (in system size) 
number of pure states and the broken symmetry, if only there exist one, is hard to identify\footnote{Within the replica formalism 
it is a formal symmetry between "a number" of copies of the system.}. 
The growth rate of the number of 
pure states, is called the complexity. This is a notion analogous to the Boltzmann entropy, but at the level of pure states, instead of microscopic configurations,
for which one develops a new "level" of statistical mechanics.

We assume that this picture can be taken over to CSP and even coupled-CSP. Let $p$ index the set of pure states (we called them pure Bethe states in section
\ref{generalsetting}). 
The special feature about systems on random graphs is that they are  locally tree-like with high probability. 
Thus, since {\it for each pure state $p$} the correlations decay sufficiently fast, the marginals {\it of the pure state $p$} can be computed 
from the sum-product (or Belief Propagation) equations
\begin{align}
 \hat{\nu}_{cz \to iu}^{(p)}(x_{iu}) & \cong
\sum_{{x}_{\partial (cz) \backslash iu}} \psi_{cz}({x}_{\partial (cz)}) 
\prod _{jv \in \partial (cz) \backslash iu} \nu_{jv \to cz}^{(p)}(x_{jv}),
\label{proport1}
\\
\nu_{iu \to cz}^{(p)}(x_{iu}) & \cong \prod_{bv \in \partial (iu) \backslash cz} \hat{\nu}_{bv \to iu}^{(p)}(x_{iu}).
\label{proport2}
\end{align}
In \eqref{proport1}, \eqref{proport2} $\cong$ means that the right hand side has to be divided by a normalization factor to get a true
marginal on the left.
The free energy of the pure state $p$ is given by the Bethe expression,
\begin{align}
 \beta F^{(p)} & =  \sum_{cz}\ln\biggl\{\sum_{x_{\partial (cz)}}\psi_{cz}(x_{\partial (cz)}) 
\prod_{iu\in \partial (cz)}\nu_{iu \to cz}^{(p)} (x_{iu})\biggr\} 
\nonumber \\ &
      + \sum_{iu} \ln \biggl\{\sum_{x_{iu}} \prod_{cz\in\partial (iu)} 
           \hat{\nu}_{cz \to iu}^{(p)}(x_{iu})\biggr\}  
 \nonumber \\ &
       - \sum_{\langle cz,iu\rangle\in E} \ln\biggl\{\sum_{x_{iu}} \nu_{iu \to cz}^{(p)}(x_{iu})
\hat{\nu}_{cz \to iu}^{(p)}(x_{iu})\biggr\}.
\end{align}

To investigate the zero temperature limit $\beta\to+\infty$ we set
\begin{equation}\label{finitebeta}
 \nu_{iu\to cz}^{(p)}(x_{iu}) = 
 \frac{ e^{-\beta E_{iu\to cz}^{(p)}(x_{iu})}}{\sum_{x_{iu}\in\mathcal{X}}e^{-\beta E_{iu\to cz}^{(p)}(x_{iu})}},
\qquad \hat\nu_{cz\to iu}^{(p)}(x_{iu}) 
=\frac{ e^{-\beta \hat E_{cz\to iu}^{(p)}(x_{iu})}}{\sum_{x_{iu}\in\mathcal{X}}e^{-\beta \hat E_{cz\to iu}^{(p)}(x_{iu})}}.
\end{equation}
When $\beta\to \infty$, the sum-product equations \eqref{proport1} and \eqref{proport2} reduce to the min-sum equations
\begin{align}
E_{iu\to cz} (x_{iu})& = \min\bigl\{1, \sum_{bv \in \partial (iu) \backslash cz} 
\hat E_{bv \to iu}(x_{iu}) - C_{iu\to cz}\bigr\}
\nonumber
\\
&
\equiv  \mathcal{G}_{iu\to cz}\bigl[\{\hat E_{bv \to iu}\}_{bv\in \partial (iu) \backslash cz}\bigr],
\label{warn1}
\end{align}
\begin{align}
\hat E_{cz \to iu}(x_{iu}) & = 
\min_{{x}_{\partial cz \backslash iu}}\bigl\{ (1-\psi_{cz}({x}_{\partial (cz)}))+ \sum_{jv \in \partial (cz) \backslash iu} 
E_{jv \to cz}(x_j)\bigr\} - \hat C_{cz\to iu}
\nonumber \\ &
\equiv
\hat{\mathcal{G}}_{cz\to iu}\bigl[\{E_{jv \to cz}\}_{ jv \in \partial (cz) \backslash iu}\bigr].
\label{warn2}
\end{align}
Here, $C_{iu\to cz}$ and $\hat C_{cz\to iu}$ are normalization 
constants fixed so that \linebreak $\min_{x_{iu}} E_{iu\to cz}(x_{iu}) = \min_{x_{iu}} E_{cz\to iu}(x_{iu}) =0$.
The Bethe
formula for the free energy of a pure state reduces to an expression for its ground-state energy
\begin{equation}
 \lim_{\beta\to +\infty} \beta F^{(p)} = \mathcal{E} [\{E_{iu\to cz}^{(p)}(.), E_{cz\to iu}^{(p)}(.)\}], 
\end{equation}
where the functional $\mathcal{E}$ is given by 
\begin{align}
\mathcal{E} & [\{E_{iu\to cz}, E_{cz\to iu}\}] = \sum_{cz}\min_{x_{\partial (cz)}}
\bigl\{(1-\psi_{cz}(x_{\partial (cz)})) + \sum_{iu\in \partial (cz)}E_{iu \to cz} (x_{iu})\bigr\} 
\nonumber \\ &
+ \sum_{iu} \min_{x_{iu}}\bigl\{\sum_{cz\in\partial iu} 
 \hat{E}_{cz \to iu}(x_{iu})\bigr\}  
- \sum_{\langle cz,iu\rangle} \min_{x_{iu}} \bigl\{E_{iu \to cz}(x_{iu})   
+\hat{E}_{cz \to iu}(x_{iu})\bigr\}
\nonumber
\\ &
\equiv
\sum_{cz}
\mathcal{E}_{cz}\bigl[\{E_{iu \to cz}\}_{iu\in \partial cz}\bigr]  + \sum_{iu}\mathcal{E}_{iu}\bigl[\{\hat{E}_{cz \to iu}\}_{cz\in \partial iu}\bigr] 
\nonumber 
\\ &
- \sum_{\langle cz,iu\rangle}\mathcal{E}_{cz,iu}\bigl[\{E_{iu \to cz}, \hat{E}_{cz \to iu}\}\bigr].
\label{bethe}
\end{align}
We assume that the heuristic low temperature picture carries over to the zero temperature case. In this context pure states become 
clusters (in Hamming space) of minimizers of the Hamiltonian. Each cluster is characterized by a set of messages $\{E_{iu\to cz}^{(p)}(.), E_{cz\to iu}^{(p)}(.)\}$.
At zero temperature, two minimizers belonging to the same cluster can be connected by successive flips with infinitesimal energy cost, while for two minimizers 
belonging to different clusters this is not possible.

Now we wish to compute the complexity \eqref{complexity-definition} which counts the number of clusters. For this we introduce a generating 
function 
\begin{equation}
 \Xi(y) = \sum_p e^{-y \mathcal{E} [\{E_{iu\to cz}^{(p)}(.), E_{cz\to iu}^{(p)}(.)\}]} \,.
\label{parisiXi}
\end{equation}
When $y\to+\infty$ the sum is dominated by solutions of the min-sum equations with minimal Bethe energy. 
This object can be viewed as a partition function for the effective Hamiltonian \eqref{bethe} at 
inverse ``temperature'' $y$ (the so-called Parisi parameter). Now, if we take $\alpha$ in the SAT phase the minimum Bethe energy vanishes and the 
 complexity \eqref{complexity-definition} is given by
\begin{equation}
 \Sigma_{L,w}(\alpha) = \lim_{y\to +\infty} \lim_{N\to +\infty}\frac{1}{NL}\ln \Xi(y).
\label{comp-freeXi}
\end{equation}
A negative complexity signals that there are no zero energy states and that the system is in an UNSAT phase. When this happens one has to generalize
these formulas to allow for an energy dependent complexity (obtained by the Legendre transform of 
$\ln\Xi(y)$) but  this aspect will not concern us here.
For CSP's it can be shown that the min-sum messages take 
discrete values in a finite alphabet. Therefore we have
\begin{align}
 \Xi(y) = &\sum_{\{E_{iu\to cz}, \hat E_{cz\to iu}\}}
\biggl\{
\prod_{\langle iu,cz\rangle} e^{+y\mathcal{E}_{cz,iu}}
\biggr\}
\prod_{iu} \biggl\{ 
e^{-y\mathcal{E}_{iu}}\prod_{cz\in \partial (iu)}
\mathbbm{1}\bigl( E_{iu \to cz} = \mathcal{G}_{iu\to cz}\bigr)
\biggr\}
\nonumber \\ &
\times \prod_{cz} \biggl\{ 
e^{-y\mathcal{E}_{cz}}\prod_{iu\in \partial (cz)}
\mathbbm{1}\bigl( \hat E_{cz \to iu} = \hat{\mathcal{G}}_{cz\to iu}
\bigr)
\biggr\}.
\end{align}
The arguments of the functionals $\mathcal{E}_{iu}[-]$, $\mathcal{E}_{cz}[-]$, $\mathcal{E}_{iu,cz}[-]$ 
and $\mathcal{G}_{iu\to cz}[-]$, $\hat{\mathcal{G}}_{cz\to iu}[-]$ are the messages 
$\{E_{iu\to cz}(.), \hat E_{cz\to iu}(.)\}$; they are not explicitly written to 
ease the notation. It can easily be  
seen that this is the partition function of a new graphical model which is still sparse. Edges $\langle (c,z), (i,u)\rangle$ now correspond to degree two ``constraint'' nodes, and 
nodes $(c,z)$ and $(i,u)$ now correspond to ``variable'' nodes. 
Therefore \eqref{comp-freeXi} can be computed from the Bethe approximation for this new
model. The underlying assumption here is that the new effective model has a unique ``pure state`` with fast decaying correlations. 
This is called the level-1 cavity method. 
If this assumption breaks down, one should repeat the whole scheme, obtaining a level-2 cavity method (and so on). At level-1, the Bethe approximation
can be expressed in terms of new beliefs - called {\it surveys} - $Q_{iu\to cz}(E_{iu\to cz}(.))$ and 
$\hat Q_{cz\to iu}(\hat E_{cz\to iu}(.))$ that count 
the {\it fraction of clusters} $p$ for which $E_{iu\to cz}^{(p)}(.) = E_{iu\to cz}(.)$ and $E_{cz\to iu}^{(p)}(.) = E_{cz\to iu}(.)$. Note that these are the messages on 
the induced graph obtained by eliminating the degree two constraint nodes of the new model.
We have
\begin{align}
\ln \Xi(y) =  & \sum_{cz} \ln\biggl\{\sum_{\{E_{iu\to cz}\}_{iu\in \partial (cz)}} e^{-y \mathcal{E}_{cz}}\prod_{iu\in \partial cz} 
Q_{iu\to cz}\biggr\}
\nonumber \\
&
+\sum_{iu} \ln\biggl\{\sum_{\{\hat E_{cz\to iu}\}_{cz\in \partial (iu)}} e^{-y \mathcal{E}_{iu}}\prod_{cz\in \partial iu} Q_{cz\to iu}\biggr\} 
\nonumber \\
&
- \sum_{cz, iu}\ln\biggl\{\sum_{E_{iu\to cz},\hat E_{cz\to iu} } e^{-y \mathcal{E}_{iu,cz}} Q_{iu\to cz} \hat Q_{cz\to iu}\biggr\}.
\label{final}
\end{align}
The messages satisfy the {\it survey propagation equations}
\begin{align}
 Q_{iu\to cz}(E_{iu\to cz}) \cong  \sum_{\{\hat E_{bv\to iu}\}_{cz\in \partial (iu)}} & 
\mathbbm{1}\bigl(E_{iu\to cz} = \mathcal{G}_{iu\to cz}\bigr) e^{-y C_{iu\to cz}}
\nonumber \\ &
\times \prod_{bv\in \partial (iu)\backslash cz} 
Q_{bv\to iu}(\hat E_{bv\to iu}),
\label{sp1}
\end{align}
\begin{align}
 \hat Q_{cz\to iu}(\hat E_{cz\to iu}) \cong  \sum_{\{\hat E_{jv\to cz}\}_{jv\in \partial (cz)}} &
\mathbbm{1}\bigl(\hat E_{cz\to iu} = \hat{\mathcal{G}}_{cz\to iu}\bigr) e^{-y \hat C_{cz\to iu}}
\nonumber \\ &
\times \prod_{jv\in \partial (cz)\backslash iu} 
Q_{jv\to cz}(E_{jv\to cz}),
\label{sp2}
\end{align}
where again $\cong$ means that the right hand side has to be normalized. 

In the SAT phase one takes $y\to +\infty$ in order to compute the complexity. This has the effect of 
reducing the sums in \eqref{sp1}, \eqref{sp2} and \eqref{final}, to surveys such that $C_{iu\to cz}=\hat C_{cz\to iu}=0$ and 
$\mathcal{E}_{cz}=\mathcal{E}_{iu}=\mathcal{E}_{iu, cz}=0$.

\vskip 1cm

\noindent{\bf \Large Acknowledgments.} The work of Hamed Hassani has been supported by Swiss National 
Science Foundation Grant no 200021-121903. N.M thanks Marc M\'ezard and Toshiyuki Tanaka for instructive discussions on coupling for the SK model.
We thank Dimitri Achlioptas for interesting discussions on algorithmic aspects.
\vskip 1cm

\end{document}